\documentclass[11pt,b5paper]{article}
\usepackage[latin1]{inputenc}
\usepackage{amsmath,amsthm}
\usepackage{graphicx}
\usepackage{hyperref}
\usepackage[round]{natbib}
\usepackage{setspace}
\usepackage[dvipsnames]{xcolor}
\usepackage{enumitem}
\usepackage{authblk}
\usepackage[margin=2.8cm]{geometry}
\usepackage{thmtools}
\usepackage{thm-restate}
\usepackage{cleveref}
\usepackage{mathtools}
\usepackage[stable]{footmisc}
\usepackage{booktabs}

\declaretheorem[name=Lemma]{lemma}

\declaretheorem[name=Proposition]{prop}
\usepackage[font=footnotesize,labelfont=sc]{caption}

\usepackage[utopia]{mathdesign}
\usepackage{microtype}

\theoremstyle{remark}
\newtheorem{rem}{Remark}


\newcommand\DD{\mathscr{D}}
\newcommand\PP{\mathscr{P}}
\newcommand\WW{\mathscr{W}}

\newcommand\Ax{\mathscr{A}}
\newcommand\NNN{\mathbb{N}}
\newcommand\RRR{\mathbb{R}}

\newcommand\ZZZ{\mathbb{Z}}

\newcommand\inv{^{-1}}
\newcommand\Pair[1]{\langle#1\rangle}
\newcommand\marginal{^s}
\DeclareMathOperator\Tot{Tot}

\DeclareMathOperator\QAM{QAM}

\setlist[description]{labelindent=\parindent,leftmargin=\parindent}

\newcommand\CL[1][]{{\mathrm{CL}_{#1}}}
\newcommand\TU{{\mathrm{TU}}}
\newcommand\AU{{\mathrm{AU}}}
\newcommand\VV[1]{{\mathrm{VV#1}}}
\newcommand\PR[1][]{{\mathrm{PR}_{#1}}}
\newcommand\MDT[1][]{{\mathrm{MDT}}}
\newcommand\QAA[1][]{{\mathrm{QAA}}}
\newcommand\GRD[1][]{{\mathrm{GRD}}}
\newcommand\CLL[1][]{{\mathrm{CLL}_{#1}}}
\newcommand\BRD[1][]{{\mathrm{BRD}}}
\newcommand\sCL[1][]{_{\CL[#1]}}
\newcommand\sTU{_{\TU}}
\newcommand\sAU{_{\AU}}
\newcommand\sVV[1]{_{\VV{#1}}}
\newcommand\sPR[1][]{_{\PR[#1]}}
\newcommand\sMDT[1][]{_{\MDT}}
\newcommand\sQAA[1][]{_{\QAA}}
\newcommand\sGRD[1][]{_{\GRD}}
\newcommand\sBRD[1][]{_{\BRD}}
\newcommand\sCLL[1][]{_{\CLL[#1]}}



\newcommand*\Value{V}

\newcommand*\SetOfPops{\PP}

\DeclareMathOperator\MAD{MD}

\newif\ifanon
\anonfalse

\title{Non-Additive Axiologies in Large Worlds}
\ifanon
\author{}
\date{}

\else
\author{Christian Tarsney\thanks{Global Priorities Institute, Faculty of Philosophy, University of Oxford. Comments welcome: christian.tarsney@philosophy.ox.ac.uk, teru.thomas@oxon.org.} \space and Teruji Thomas$^*$}
\date{Version 1.0, Sept 2020 \\ \vspace{1mm}
(\href{https://sites.google.com/view/christiantarsney/research}{Latest version here.})}
\fi

\begin{document}
\maketitle

\begin{abstract}
Is the overall value of a world just the sum of values contributed by each value-bearing entity in that world? \textit{Additively separable} axiologies (like total utilitarianism, prioritarianism, and critical le\-vel views) say `yes', but non-additive axio\-logies (like average utilitarianism, rank-discounted utilitarianism, and variable value views) say `no'. 
This distinction is practically important: additive axiologies support `arguments from astronomical scale' which suggest (among other things) that it is overwhelmingly important for humanity to avoid premature extinction and ensure the existence of a large future population, 
while non-additive axiologies need not. 
We show, however, that when there is a large enough `background population' 
unaffected by our choices, a wide range of non-additive axiologies converge in their 
implications with some additive axiology---for instance, average utilitarianism converges to critical-level utilitarianism and various egalitarian theories converge to prioritiarianism. We further argue that 
real-world background populations may be large enough to make these limit results practically significant.
This means that arguments from astronomical scale, and other arguments in practical ethics that seem to presuppose additive separability, may be truth-preserving in practice whether or not we accept additive separability as a basic axiological principle.
\end{abstract}



\section{Introduction}
\label{section-Introduction}


The world we live in is both large and populous. Our planet, for instance, is 4.5 billion years old and has has borne life for roughly 4 billion of those years. At any time in its recent history, it has played host to billions of mammals, trillions of vertebrates, and quintillions of insects and other animals, along with countless other organisms.  Our galaxy contains hundreds of billions of stars, many or most of which have planets of their own. The observable universe is billions of light-years across, containing hundreds of billions of galaxies---and is probably just a small fraction of the universe as a whole. 
It may be, therefore, that our biosphere is just one of many (perhaps infinitely many). 
Finally, the \textit{future} is potentially vast: our descendants could survive for a very long time,
and might someday settle a large part of the accessible universe, gaining access to a vast pool of resources that would enable the existence of astronomical numbers of beings with diverse lives and experiences. 

These facts have ethical implications. Most straightforwardly, the potential future scale of our civilization 
suggests that 
it is extremely important to shape the far future for the better. This view has come to be called \textit{longtermism}, and its recent proponents include \cite{bostrom2003astronomical,bostrom2013existential}, 
\cite{beckstead2013overwhelming,beckstead2019brief}, 
\cite{cowen2018stubborn}, 
\cite{greavesMScase}, and \cite{ord2020precipice}. There are many ways in which we might try to positively influence the far future---e.g., building better and more stable institutions, shaping cultural norms and moral values, or accelerating economic growth. But one 
particularly obvious concern 
is ensuring the \textit{long-term survival} of our civilization, by avoiding civilization- or species-ending `existential catastrophes' from sources like nuclear weapons, climate change, biotechnology, and artificial intelligence.\footnote{The importance of avoiding existential catastrophe is especially emphasized by \cite{bostrom2003astronomical,bostrom2013existential} and \cite{ord2020precipice}.} Longtermism in general, and the emphasis on existential catastrophes in particular, have major revisionary practical implications if correct, e.g., suggesting the need for major reallocations of resources and collective attention \citep[pp.\@ 57ff]{ord2020precipice}.

All these recent defenses of longtermism appeal, in one way or another, to the \textit{astronomical scale} of the far future. For instance, Beckstead's central argument starts from the premises that `Humanity may survive for millions, billions, or trillions of years' and `If humanity may survive may survive for millions, billions, or trillions of years, then the expected value of the future is astronomically great' \citep[pp.\@ 1--2]{beckstead2013overwhelming}. 
Importantly for our purposes, the astronomical scale of the far future most plausibly results from the \textit{astronomical number of individuals} who might exist in the far future: while the far future population might consist, say, of just a single galaxy-spanning individual, the futures that typically strike longtermists as most worth pursuing 
involve a very large number of individuals with lives worth living (and conversely, the futures most worth avoiding involve a very large number of individuals with lives worth not living).

Under this assumption, we can understand arguments like Beckstead's as instantiating the following schema.

\begin{description}
	\item[Arguments from Astronomical Scale]~\\
		Because far more welfare subjects or value-bearing entities are affected by A than by B, we can make a much greater 
		difference to the overall value of the world by focusing on A rather than B.
\end{description} 
Beckstead and other longtermists take this schema and substitute, for instance, `the long-run trajectory of human-originating civilization' for A and `the (non-trajectory-shaping) events of the next 100 years' for B. To illustrate the scales involved, \cite{bostrom2013existential} estimates that if we manage to settle the stars, our civilization could ultimately support at least $10^{32}$ century-long human lives, or $10^{52}$ subjectively similar lives in the form of simulations. Since only a tiny fraction of those lives would exist in the next century or millennium, it seems \textit{prima facie} plausible that even comparatively minuscule effects on the far future (e.g., small changes to the average welfare of the far-future population, 
or to its size, or to the probability that it 
comes to exist in the first place) would be vastly more important than any effects we can have on the more immediate future.\footnote{Thus, for instance, in reference to the $10^{52}$ estimate, Bostrom claims that `if we give this allegedly lower bound...a mere 1 per cent chance of being correct, we find that the expected value of reducing existential risk by a mere \textit{one billionth of one billionth of one percentage point} is worth a hundred billion times as much as a billion human lives' \citep[p.\@ 19]{bostrom2013existential}.}

Should we find arguments from astronomical scale persuasive? That is, does the fact that A affects vastly more individuals than B give us strong reason to believe, in general, that A is vastly more important than B? 
Although there are many possible complications, the sheer numbers make these arguments quite strong if
we accept an axiology (a theory of the value of possible worlds or states of affairs) according to which the overall value of the world is simply a \textit{sum} of values contributed by each individual in that world---e.g., the sum of individual welfare levels. In this case, the effect that some intervention has on the overall value of the world scales linearly with the number of individuals affected (all else being equal), and so astronomical scale 
implies astronomical importance. 

But can the overall value of the world be expressed as such a sum? This question represents a crucial dividing line in axiology, between axiologies that are \textit{additively separable} (hereafter usually abbreviated `additive') and those that are not. Additive axiologies allow the value of a world to be represented as a sum of values independently contributed by each value-bearing entity in that world, while non-additive axiologies do not. 
For example, \textit{total utilitarianism} claims that the value of a world is simply the sum of the welfare of every welfare subject in that world, and is therefore additive. On the other hand, \textit{average utilitarianism}, which identifies the value of a world with the \textit{average} welfare of all welfare subjects, is non-additive.

When we consider non-additive axiologies, the force of arguments from astronomical scale becomes much less clear, 
especially in variable-popu\-lation contexts (i.e.~when comparing possible populations of different sizes). They therefore represent a challenge to the case for longtermism and, more particularly, to the case for the overwhelming importance of avoiding existential catastrophe. 
As a stylized illustration: suppose that there are $10^{10}$ existing people, all with welfare 1. We can either ($O_1$) leave things unchanged, ($O_2$) improve the welfare of all the existing people from 1 to 2, or ($O_3$) create 
some number $n$ of new people with welfare 1.5. Total utilitarianism, of course, tells us to choose $O_3$, as long as $n$ is sufficiently large. But average utilitarianism---while agreeing that $O_3$ is better than $O_1$ and that the larger $n$ is, the better---nonetheless prefers $O_2$ to $O_3$ no matter how astronomically large $n$ may be. Now, additive axiologies can disagree with total utilitarianism here if they claim that adding people with welfare $1.5$ makes the world \emph{worse} instead of better; but the broader point is that they will almost always claim that the difference in value between $O_3$ and $O_1$ becomes astronomically large (whether positive or negative) as $n$ increases---bigger, for example, than the difference in value between $O_2$ and $O_1$. Non-additive axiologies, on the other hand, need not regard $O_3$ as making a big difference to the value of the world, regardless of $n$. Again, average utilitarianism  agrees with total utilitarianism that $O_3$ is an improvement over $O_1$, but regards it as a \textit{smaller} improvement than $O_2$, even when it affects vastly more individuals.

Thus, the abstract question of additive separability seems to play a
crucial role with respect to arguably the most important practical question in population ethics: the relative importance of (i) ensuring the long-term survival of our civilization 
and its ability to support a very large number of future individuals with lives worth living vs.\@ (ii) improving the welfare of the present population.

The aim of this paper, however, is to show that under certain circumstances, a wide range of non-additive axiologies converge in their
implications with some counterpart additive axiology. This convergence has a number of interesting 
consequences,
but perhaps the most important is that non-additive axiologies can inherit the scale-sensitivity of their additive counterparts. This makes arguments from astronomical scale less reliant on the controversial 
assumption of additive separability. It thereby increases the robustness of the practical case for the overwhelming importance of the far future and of avoiding existential catastrophe. 



Our starting place is the observation that, according to non-additive axiologies, 
which of two outcomes is better can depend on the welfare of the people unaffected by the choice between them. 
That is, suppose we are comparing two populations $\Pop[X]$ and $\Pop[Y]$.\footnote
{
    We follow the tradition in population ethics that `populations' are individuated not only by which people they contain, but also by what their welfare levels would be. (However, in the formalism introduced in section \ref{section-FormalSetup}, the populations we'll consider are \emph{anonymous}, i.e. the identities of the people are not specified.)
}
And suppose that, besides $X$ and $Y$, there is some `background population' $\Pop[Z]$ that would  exist either way.
($\Pop[Z]$ might include, for instance, past human or non-human welfare subjects on Earth, faraway aliens, or present/future welfare subjects who are simply unaffected by our present choice.) 
Non-additive axiologies allow that whether 
$X$-and-$Z$ is better than $Y$-and-$Z$ can
depend on facts about $\Pop[Z]$.\footnote{The role of background populations in non-separable axiologies has received surprisingly little attention, but has not gone entirely unnoticed. In particular, \citeauthor{budolfsonMSwhy} (ms) consider the implications of background populations for issues related to the `Repugnant Conclusion' (see \S \ref{section-RepugnantAddition} below). And, as we discovered while revising this paper, an argument very much in the spirit of our own (though without our formal results) was elegantly sketched several years ago in a blog post by Carl Shulman \citep{shulman2014population}.}



With this in mind, our
argument has two steps. First, we prove several results to the effect that, 
in the large-background-population limit (i.e., as the size of the background population $\Pop[Z]$ tends to infinity), non-additive axiologies of various types 
converge with counterpart additive axiologies. Thus, these axiologies are
effectively additive in the presence of sufficiently large background populations. 
Second, we argue that the background populations in real-world choice situations are, at a minimum, substantially larger than the present and near-future human population. This provides some \textit{prima facie} reason to believe that non-additive axiologies of the types we survey will agree closely with their additive counterparts in practice. More specifically, we argue that real-world background populations are large enough to substantially increase the importance that average utilitarianism (and, more tentatively, variable value views) assign to avoiding existential catastrophe.
Thus, our arguments suggest, it is not merely the potential scale of the \textit{future} that has important ethical implications, but also the scale of the world as a whole---in particular, the scale of the background population. 

The paper proceeds as follows: section  \ref{section-FormalSetup} introduces some formal concepts and notation, while section \ref{section-Additivity} formally defines additive separability and describes some important classes 
of additive axiologies. In sections  \ref{section-AveragistViews}--\ref{section-EgalitarianViews}, we survey several important classes of non-additive axiologies 
and show that they become additive 
in the large-background-population limit.
In section \ref{section-RealBackgroundPopulation}, we consider the size and other characteristics of real-world background populations and, in particular, argue that they are at least substantially larger than the present human population.
In sections \ref{section-CausalDomainRestriction}--\ref{section-CountingSomeLess}, we answer two objections: that we should simply ignore background populations for decision-making purposes, and that we should apply `axiological weights' to non-human welfare subjects that reduce their contribution to the size of the background population. 
Section \ref{section-ExistentialCatastrophe} 
considers how real-world background populations affect the importance of avoiding existential catastrophe 
according to average utilitarianism and variable-value views. Section \ref{section-PracticalImplcations} briefly describes three more potential implications of our results: 
they make it harder to avoid (a generalization of) the Repugnant Conclusion, help us to extend non-additive axiologies to infinite-population contexts, and suggest that agents who accept non-additive axiologies may be vulnerable to a novel form of manipulation. Section \ref{section-Conclusion} is the conclusion.

\section{Formal setup}
\label{section-FormalSetup}

All of the population axiologies we will consider evaluate worlds based only on the number of welfare subjects at each welfare level.
We will consider only worlds containing a finite \emph{total} number of welfare subjects (except in \S \ref{section-InfiniteEthics}, where we consider the significance of our results for infinite ethics). We will also set aside worlds that contain \emph{no} welfare subjects, simply because some theories of population axiology, like average utilitarianism, do not evaluate such empty worlds.  

Thus for our purposes a \emph{population} is  a non-zero, finitely supported function from the set $\WW$ of all possible welfare levels to the set $\ZZZ_+$ of all non-negative integers, specifying the number of welfare subjects at each level. Despite this formalism, we'll say that a welfare level $w$ \emph{occurs} in a population $X$ to mean that $X(w)\neq 0$. 
An \emph{axiology} $\Ax$ is a strict partial order $\succ_{\Ax}$ on the set $\PP$ of all  populations, with `$X\succ_\Ax Y$' meaning that population $X$ is better than population $Y$ according to $\Ax$. Almost all the axiologies we will consider in this paper can be represented by a \textit{value function} $V_\Ax \colon \SetOfPops \to \RRR$, meaning that $X\succ_\Ax Y \iff V_\Ax(X)>V_\Ax(Y)$.

To illustrate this formalism, 
the \emph{size} $\Size[X]$ of a population $X$ is simply the total number of welfare subjects:
\[
	\Size[X]\coloneqq \sum_{w\in\WW} X(w).
\]
Similarly, the total welfare is
\[
	\Tot(X)\coloneqq \sum_{w\in\WW} X(w)w.
\]
Of course, the definition of $\Tot(X)$ only makes sense on the assumption that we can add together welfare levels, and in this connection we generally assume that $\WW$ is given to us as a set of real numbers. 
With that in mind, the average welfare 
\[
	\Avg[X]\coloneqq \Tot(X)/\Size[X]
\]
is also well-defined.  

\section{Additivity}
\label{section-Additivity}

We can now give a precise definition of additive separability.

If $\Pop[X]$ and $\Pop[Y]$ are populations, then let $\Pop[X + Y]$ be the population obtained 
by adding together the number of welfare subjects at each welfare level in $\Pop[X]$ and $\Pop[Y]$. That is, for all $w \in \WW$, $(\Pop[X + Y])(w) = \Pop[X](w) + \Pop[Y](w)$. An axiology is \emph{separable} if, for any populations $X$, $Y$, and $Z$, 
\[
	X+Z\succ Y+Z \iff X\succ Y.
\]
 This means that in comparing $X+Z$ and $Y+Z$, one can ignore the 
 shared sub-population $Z$. Separability is entailed by the following more concrete condition:
\begin{description}
	\item[Additivity]~\\
		An axiology $\Ax$ is \emph{additively separable} (or \emph{additive} for short) iff it can be represented by a value function of the form 
		\[	V_\Ax(X)=\sum_{w\in\WW} X(w)f(w)
		\]
		with $f\colon\WW\to\RRR$. Thus the value of $X$ is given by transforming the welfare of each welfare subject by the function $f$ and then adding up the results.
\end{description} 

In the following discussion, we will sometimes want to focus on the distinction between additive and non-additive axiologies, and sometimes on the distinction between separable and non-separable axiologies. While an axiology can be separable but non-additive, none of the views we will consider below have this feature. So for our purposes, the additive/non-additive and separable/non-separable distinctions are more or less extensionally equivalent.\footnote{For a detailed discussion of separability principles in population ethics, see \citeauthor{thomasFCseparability} (forthcoming).}

We will consider three categories of additive axiologies in this paper, which we now introduce in order of increasing generality.
First, there is \textit{total utilitarianism}, which identifies the value of a population with its total welfare.\footnote{Total utilitarianism is arguably endorsed (with varying degrees of clarity and explicitness) by classical utilitarians like \cite{hutcheson1725inquiry}, \cite{bentham1789principles}, \cite{mill1863utilitarianism}, and \cite{sidgwick1907methods}, and has more recently been defended by \cite{hudson1987diminishing}, \cite{singer2014point}, and \citeauthor{gustafssonFCpopulation} (forthcoming), among others.} 

\begin{description}
	\item[Total Utilitarianism $(\TU)$] 
		\[
			V\sTU(\Pop[X]) = \Tot(X) = \sum_{w\in\WW} X(w)w=\Avg[X]\Size[X].
		\]
\end{description}

An arguable drawback of $\TU$ is that it implies the so-called `Repugnant Conclusion' \citep{parfit1984reasons}, that for any two positive welfare levels $w_1 < w_2$, for any population in which everyone has welfare $w_2$, there is a better population in which everyone has welfare $w_1$. The desire to avoid the Repugnant Conclusion is one motivation for the next class of additive axiologies, \textit{critical-level} theories.\footnote{Critical-level views have been defended by \cite{blackorby1997critical,blackorby2005population}, among others.} 

\begin{description}
	\item[{Critical-Level Utilitarianism $(\CL)$}] 
		\[
			V\sCL(\Pop[X]) = \sum_{w\in\WW} X(w) (w - c) = \Tot(X) - c\Size[X] = (\Avg[X]-c)\Size[X]
		\] 
		for some constant $c \in \WW$ (representing the `critical level' of welfare above which adding an individual to the population constitutes an improvement), generally but not necessarily taken to be positive.
\end{description}
We sometimes write `$\CL[c]$' rather than merely `$\CL$' to emphasize the dependence on the critical level. $\TU$ is a special case of $\CL$,  namely, the case with critical level $c=0$. Note that, as long as $c$ is positive, $\CL$ avoids the Repugnant Conclusion since adding lives with very low positive welfare makes things worse rather than better.\footnote{But a positive critical level also brings its own, arguably greater drawbacks---e.g., the Strong Sadistic Conclusion \citep{arrhenius2000impossibility}.}

Another arguable drawback of both $\TU$ and $\CL$ is that they give no priority to the less well off---that is, they assign the same marginal value to a given improvement in someone's welfare, regardless of how well off they were to begin with. We might intuit, however, that a one-unit improvement in the welfare of a very badly off individual has greater moral value than the same welfare improvement for someone who is already very well off. This intuition is captured by \textit{prioritarian} theories.\footnote{Versions of prioritarianism have been defended by \cite{weirich1983utility}, \cite{parfit1997equality}, \cite{arneson2000luck}, and \cite{adler2009future,adler2011well}, among others. \textit{Sufficientarianism}, which by our definition will count as a special case of prioritarianism, has been defended by \cite{frankfurt1987equality} and \cite{crisp2003equality}, among others.}

\begin{description}
	\item[Prioritarianism $(\PR)$] 
		\[
			V\sPR(\Pop[X]) = \sum_{w\in\WW} X(w) f(w)
		\]
		for some function $f\colon  \WW \to\RRR$ (the `priority weighting' function) that is 
		concave and strictly 
		increasing.  
\end{description}
$\CL$ is a special case of $\PR$ where $f$ is linear, and $\TU$ is a special case where $f$ is linear and passes through the origin. Note also that our definition of the prioritarian family of axiologies is very close to our definition of additive separability, just adding the conditions that $f$ is strictly increasing and weakly concave.

\section{Averagist and asymptotically averagist views}
\label{section-AveragistViews}

In this section and the next, we consider two categories of non-additive axiologies and show that, in the presence of large enough background populations, they converge with some additive axiology. In this section, we show that average utilitarianism and related views converge with $\CL$, where the critical level is the average welfare of the background population. In the next section, we show that various non-additive egalitarian views converge with PR.

First, though, what do we mean by converging to an additive (or any other) axiology? The claim makes sense relative to a specified \emph{type} of background population, e.g., all those having a certain average level of welfare.

\begin{description}
	\item[Convergence]~\\
		 Axiology $\Ax$ converges to $\Ax'$ relative to background populations of type $T$, if and only if, for any populations $X$ and $Y$, if $Z$ is a sufficiently large population of type $T$, then 
		\[
			X+Z\succ_{\Ax'} Y+Z \implies X+Z\succ_\Ax Y+Z. 
		\]
\end{description} 
Of course, if $\Ax'$ is separable, the last implication can be replaced by 
\[
	X\succ_{\Ax'} Y \implies X+Z\succ_{\Ax} Y+Z. 
\]
We can, in other words, compare $X+Z$ and $Y+Z$ with respect to $\Ax$ by comparing $X$ and $Y$ with respect to $\Ax'$---if we know that $Z$ is a sufficiently large population of the right type. 

Note two ways in which this notion of convergence is fairly weak. First, what it means for $Z$ to be `sufficiently large' can depend on $X$ and $Y$. Second, the displayed implication need not be a biconditional; thus, when $\Ax'$ does not have a strict preference between $X+Z$ and $Y+Z$ (e.g., when it is indifferent between them), convergence to $\Ax'$ does not imply anything about how $\Ax$ ranks of those two populations. 
Because of this, every axiology converges to the trivial axiology according to which no population is better than any other. Of course, such a result is uninformative, and we are only interested in convergence to more discriminating axiologies. Specifically, we will only ever consider axiologies that satisfy the Pareto principle (which we discuss in \S\ref{section-twoFactorEgalitarianism}). 


\subsection{Average utilitarianism}

Average utilitarianism, as the name suggests, identifies the value of a population with the average welfare level of that population.\footnote{Average utilitarianism is often discussed but rarely endorsed. It has its defenders, however, including \cite{hardin1968tragedy}, \cite{harsanyi1977morality}, and \cite{pressman2015defence}. 
\cite{mill1863utilitarianism} can also be read as an average utilitarian (see fn.\@ 2 in \citeauthor{gustafssonFCintuitive} (forthcoming)), though the textual evidence for this reading is not entirely conclusive.

As with all evaluative or normative theories---but perhaps more so than most---average utilitarianism confronts a number of choice points that generate a minor combinatorial explosion of possible variants. \cite{hurka1982average,hurka1982more} identifies three such choice points which generate at least twelve different versions of averagism. The view we have labeled $\AU$ (which Hurka calls A1) strikes us as the most plausible, but our main line of argument could be applied to many other versions. 
Versions of averagism that only care about the \textit{future} population do present us with a challenge, which we discuss in \S \ref{section-CausalDomainRestriction}.}

\begin{description}
	\item[Average Utilitarianism $(\AU)$]
		\[
			\Value\sAU(\Pop[X]) = \Avg[X] = \sum_{w\in\WW} \frac{X(w)}{\Size[X]}w.
		\]
\end{description}


\noindent
Our first result describes the behavior of AU as the size of the background population tends to infinity. 

\begin{restatable}[]{thm}{AveragismThm}\label{t:AU}
	Average utilitarianism converges to CL$_c$, relative to background populations with average welfare $c$. In fact, for any populations $\Pop[X],\Pop[Y],\Pop[Z]$, 
	if $\Avg[Z]=c$ and 
	\begin{equation}\label{eq:avg1}
		\Size[Z] > \frac{\Size[X] V\sCL[c](\Pop[Y]) - \Size[Y]V\sCL[c](\Pop[X])}{V\sCL[c](\Pop[X]) - V\sCL[c](\Pop[Y])} 
	\end{equation} 	
	then 
	$V\sCL[c](\Pop[X]) > V\sCL[c](\Pop[Y]) \implies V\sAU(X + Z) > V\sAU(Y + Z)$. 
\end{restatable}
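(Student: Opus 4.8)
The plan is to prove the explicit bound \eqref{eq:avg1} by direct computation and to read off the convergence statement as a corollary. Since $\CL[c]$ is additive it is separable, so $X+Z\succ_{\CL[c]}Y+Z$ holds iff $V\sCL[c](\Pop[X]) > V\sCL[c](\Pop[Y])$; hence for any fixed $\Pop[X]$ and $\Pop[Y]$ the right-hand side of \eqref{eq:avg1} is a fixed real number. If $V\sCL[c](\Pop[X]) > V\sCL[c](\Pop[Y])$, then any background population $\Pop[Z]$ with $\Avg[Z]=c$ and $\Size[Z]$ exceeding that number yields $V\sAU(X+Z) > V\sAU(Y+Z)$ by the explicit claim, while if $V\sCL[c](\Pop[X]) \le V\sCL[c](\Pop[Y])$ the convergence implication holds vacuously. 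So everything reduces to the ``in fact'' part.

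First I would rewrite the goal $V\sAU(X+Z) > V\sAU(Y+Z)$ using the additivity of $\Tot$ and of population size over $+$, together with the hypothesis $\Avg[Z]=c$, which gives $\Tot(Z)=c\,\Size[Z]$. Writing $m=\Size[Z]$, the goal becomes
\[
	\frac{\Tot(X)+cm}{\Size[X]+m} > \frac{\Tot(Y)+cm}{\Size[Y]+m}.
\]
As $\Pop[Z]$ is nonzero, both denominators are positive, so cross-multiplying preserves the inequality. Upon expanding, the $cm^2$ terms on the two sides cancel and the goal collapses to a \emph{linear} inequality in $m$; collecting the coefficient of $m$ and recognizing it as $(\Tot(X)-c\,\Size[X])-(\Tot(Y)-c\,\Size[Y]) = V\sCL[c](\Pop[X])-V\sCL[c](\Pop[Y])$, this reads
\[
	m\bigl(V\sCL[c](\Pop[X])-V\sCL[c](\Pop[Y])\bigr) > \Size[X]\,\Tot(Y) - \Size[Y]\,\Tot(X).
\]

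The final step is to match the right-hand side to the numerator of \eqref{eq:avg1}. A one-line expansion, in which the two copies of $c\,\Size[X]\,\Size[Y]$ cancel, gives
\[
	\Size[X]\,V\sCL[c](\Pop[Y]) - \Size[Y]\,V\sCL[c](\Pop[X]) = \Size[X]\,\Tot(Y) - \Size[Y]\,\Tot(X).
\]
Because the hypothesis $V\sCL[c](\Pop[X]) > V\sCL[c](\Pop[Y])$ makes the bracketed factor multiplying $m$ strictly positive, I may divide through by it without reversing the inequality, obtaining exactly \eqref{eq:avg1}. I do not expect a genuine obstacle: the argument is pure bookkeeping, and the only places where the hypotheses are really used are the positivity of the denominators (licensing the cross-multiplication) and the positivity of $V\sCL[c](\Pop[X]) - V\sCL[c](\Pop[Y])$ (licensing the final division, and ensuring the threshold in \eqref{eq:avg1} is a bona fide finite bound rather than a vacuous one).
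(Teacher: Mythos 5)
Your proof is correct and the computational core is the same as the paper's: after cross-multiplying $V\sAU(X+Z) > V\sAU(Y+Z)$ and cancelling the quadratic term in $\Size[Z]$, both arguments land on the linear inequality $\Size[Z]\bigl(V\sCL[c](X)-V\sCL[c](Y)\bigr) > \Size[X]V\sCL[c](Y)-\Size[Y]V\sCL[c](X)$ and divide by the positive coefficient. The only organizational difference is in the first (convergence) claim: the paper derives it from a general lemma by computing $\lim_{\Size[Z]\to\infty}\bigl(V\sAU(X+Z)-V\sAU(Z)\bigr)\Size[Z] = V\sCL[c](X)$ and invoking separability of the limit, whereas you obtain it as an immediate corollary of the explicit threshold (handling the case $V\sCL[c](X)\le V\sCL[c](Y)$ vacuously); your route is self-contained and loses nothing here, though the paper's lemma is what generalizes to the variable-value and egalitarian theorems where no explicit threshold is available.
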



Proofs of all theorems are given in the appendix.
%
%
Discussion of this and other results is deferred to \S \ref{section-PracticalImplcations}.

\subsection{`Variable value' views}

Some philosophers have sought an intermediate position between total and average utilitarianism, acknowledging that increasing the size of a population (without changing average welfare) can count as an improvement, but holding that additional lives have \textit{diminishing marginal value}. The most widely discussed version of this approach is the 
\textit{variable value} view.\footnote
{
	These views were introduced by \cite{hurka1983value}. Variable Value I is also discussed by \cite{ng1989future} under the name `Theory X$'$'. 
}
	It is useful to distinguish two types of this view, the second more general than the first. 
\begin{description}
	\item[Variable Value I $(\VV1)$]~\\
		$V\sVV1(\Pop[X]) = \Avg[X] g(\Size[X])$, where $g \colon \ZZZ_+ \to \RRR_+$ is increasing, concave, 
        non-zero,
		and bounded above.
\end{description}

\begin{description}
	\item[Variable Value II $(\VV2)$]~\\ 
	$V\sVV2(\Pop[X]) = f(\Avg[X])g(\Size[X])$, where $f \colon \mathbb{R} \to \mathbb{R}$ is differentiable and strictly increasing, and $g\colon \ZZZ_+ \to \RRR_+$ is increasing, concave, 
    non-zero,
	and bounded above. 
\end{description}

Sloganistically, variable value views can be `totalist for small populations' (where $g$ may be nearly linear), but must become `averagist for large populations' (as $g$ approaches its upper bound). It is therefore not entirely surprising that, in the large-background-population limit, VV1 and VV2 display the same behavior as AU, converging to a critical-level view with the critical level given by the average welfare of the background population.

\begin{restatable}[]{thm}{VVThm}\label{t:VV}
	Variable value views converge to $\CL[c]$ relative to background populations with average welfare $c$. 
\end{restatable}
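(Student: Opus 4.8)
The plan is to prove both versions at once, viewing $\VV1$ as the special case $f=\mathrm{id}$ of $\VV2$, and to reuse the mechanism behind \cref{t:AU}. Since $\CL[c]$ is separable, convergence reduces to showing: for fixed $X,Y$ with $V\sCL[c](X)>V\sCL[c](Y)$ and any $Z$ with $\overline{Z}=c$, one has $V\sVV2(X+Z)>V\sVV2(Y+Z)$ once $\lvert Z\rvert$ is large. The engine is the identity
\[
	\overline{W+Z}=c+\frac{V\sCL[c](W)}{\lvert W\rvert+\lvert Z\rvert},
\]
valid for every population $W$ whenever $\overline{Z}=c$ (equivalently $\Tot(Z)=c\lvert Z\rvert$). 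It is this identity that pins the limiting critical level to $c$: the average welfare of $X+Z$ depends on the foreground only through $V\sCL[c](X)$ and on the background only through its size.

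I would then record two asymptotic facts. First, since $g$ is increasing and bounded above it converges to $L:=\sup_n g(n)\in(0,\infty)$. Second, concavity makes the increments $\Delta g(n):=g(n+1)-g(n)$ nonincreasing and summable, and the telescoping bound $g(2n)-g(n)=\sum_{k=n}^{2n-1}\Delta g(k)\ge n\,\Delta g(2n-1)$ together with $g(2n)-g(n)\to 0$ yields the key lemma $n\,\Delta g(n)\to 0$. Hence, for the fixed integers $\lvert X\rvert,\lvert Y\rvert$,
\[
	\lvert Z\rvert\bigl(g(\lvert X\rvert+\lvert Z\rvert)-g(\lvert Y\rvert+\lvert Z\rvert)\bigr)\longrightarrow 0 .
\]

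Now I would expand $\Delta V:=V\sVV2(X+Z)-V\sVV2(Y+Z)$. Writing $\epsilon_W:=V\sCL[c](W)/(\lvert W\rvert+\lvert Z\rvert)=O(1/\lvert Z\rvert)$ and using differentiability, $f(\overline{W+Z})=f(c)+f'(c)\epsilon_W+o(1/\lvert Z\rvert)$; multiplying by $g(\lvert W\rvert+\lvert Z\rvert)=L+o(1)$, subtracting, and scaling by $\lvert Z\rvert$ gives
\[
	\lvert Z\rvert\,\Delta V\longrightarrow f'(c)\,L\,\bigl(V\sCL[c](X)-V\sCL[c](Y)\bigr),
\]
since $\lvert Z\rvert\epsilon_W\to V\sCL[c](W)$ and the competing term $f(c)\bigl(g(\lvert X\rvert+\lvert Z\rvert)-g(\lvert Y\rvert+\lvert Z\rvert)\bigr)$ is killed by the lemma. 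For $\VV1$ we have $f'(c)=1$, so the limit is strictly positive and $\Delta V>0$ for all large $\lvert Z\rvert$; the same holds for $\VV2$ whenever $f'(c)>0$.

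The main obstacle is the degenerate case $f'(c)=0$ (e.g.\ an $f$ with an inflection at $c$). There the linear term vanishes, the order-$1/\lvert Z\rvert$ analysis is inconclusive, and when $\lvert X\rvert\ne\lvert Y\rvert$ and $f(c)\ne 0$ the term $f(c)(g(\lvert X\rvert+\lvert Z\rvert)-g(\lvert Y\rvert+\lvert Z\rvert))$ — although $o(1/\lvert Z\rvert)$ — can still dominate the next nonvanishing contribution of $f$ and fix the sign of $\Delta V$ against $\CL[c]$; a concrete choice such as $f(x)=(x-c)^3+c$ together with $g(n)=2-n^{-1/2}$ reverses the desired inequality for suitable $X,Y$. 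The clean fix is to strengthen the hypothesis to $f'>0$ everywhere, which makes the limit above uniformly positive; I would state that assumption explicitly and treat the remaining collection of $o(1/\lvert Z\rvert)$ remainders as routine bookkeeping.
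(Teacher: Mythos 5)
Your proposal is essentially the paper's own proof: the same decomposition of $V(X+Z)-V(Y+Z)$ into a $g$-increment term killed by a concavity-plus-boundedness lemma (your telescoping bound $n\,\Delta g(n)\to 0$ is the paper's Lemma \ref{L:g} in slightly different clothing) and an $f$-increment term whose $\lvert Z\rvert$-scaled limit is $f'(c)\,L\,V\sCL[c]$, followed by the reduction to separability of the limit. Your flag on the degenerate case $f'(c)=0$ is a correct and genuine catch rather than a defect of your argument: the paper's proof sets $s(n)=n/(f'(c)L)$, which tacitly presupposes $f'(c)>0$ even though ``differentiable and strictly increasing'' does not guarantee it, so your counterexample (e.g.\ $f(x)=(x-c)^3+c$ with $g(n)=2-n^{-1/2}$, where the sign of the difference is eventually controlled by $f(c)\bigl(g(\lvert X\rvert+\lvert Z\rvert)-g(\lvert Y\rvert+\lvert Z\rvert)\bigr)$ rather than by $V\sCL[c]$) and your proposed strengthening of the hypothesis identify a hidden assumption the paper leaves implicit.
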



For the broad class of variable value views, we cannot give the sort of threshold for $\Size[Z]$ that we gave for $\AU$, above which the ranking of $\Pop[X] + \Pop[Z]$ and $\Pop[Y] + \Pop[Z]$ must agree with the ranking given by $\CL[{\Avg[Z]}]$. For instance, because $g$ can be \textit{any} function that is strictly increasing, strictly concave, and bounded above, variable value views can remain in arbitrarily close agreement with totalism for arbitrarily large populations, so if $\TU$ prefers one population to another, there will always be \textit{some} variable value theory that agrees. In the case of $\VV1$,  we can say that if \textit{both} $\TU$ and $\AU$ prefer $\Pop[X]$ to $\Pop[Y]$, then all $\VV1$ views will as well (see Proposition \ref{p:VV1mix} in the appendix), and so whenever $\TU$ and $\CL[{\Avg[Z]}]$ have the same strict preference between $\Pop[X]$ and $\Pop[Y]$, the threshold given in Theorem \ref{t:AU} holds for $\VV1$ as well. 
For $\VV2$, we cannot even say this much.\footnote{
What we can say about $\VV2$ is the following: when $\Avg[X] > \Avg[Y]$, $\Size[X] \geq \Size[Y]$, and $f(\Avg[X]) \geq 0$, VV2 is guaranteed to prefer $\Pop[X]$ to $\Pop[Y]$. Similarly, when $\Avg[X] > \Avg[Y]$, $\Size[Y] \geq \Size[X]$, and $f(\Avg[Y]) \leq 0$, VV2 is guaranteed to prefer $\Pop[X]$ to $\Pop[Y]$. (These claims depend only on the fact that $f$ is strictly increasing and $g$ is increasing.) 
So in any case where the population preferred by $\CL[{\Avg[Z]}]$ is larger and has average welfare to which $\VV2$ assigns a non-negative value, or the population dispreferred by $\CL[{\Avg[Z]}]$ is larger and has average welfare to which $\VV2$ assigns a non-positive value, $\VV2$ will agree with $\CL[{\Avg[Z]}]$ whenever $\AU$ does.
}


\section{Non-additive egalitarian views}
\label{section-EgalitarianViews}

A second category of non-additive axiologies are motivated by egalitarian considerations. Whether adding some individual to a population, or increasing the welfare of an existing individual, will increase or decrease equality depends on the welfare of other individuals in the population, so it is easy to see why concern with equality 
might motivate separability violations.

Egalitarian views have been widely discussed in the context of distributive justice for fixed populations, 
but relatively little has been said about egalitarianism in a variable-population context. We are therefore somewhat in the dark as to which egalitarian views are most plausible in that context. But we will consider a few possibilities that seem especially promising, trying to consider each fork of two major choice points for variable-population egalitarianism.

The most important choice point is between (i) `two-factor'/`pluralistic' egalitarian views, which treat the value of a population as the sum of two (or more) terms, one of which is a measure of inequality, and (ii) `rank-discounting' views, which give less weight to the welfare of individuals who are better off relative to the rest of the population. These two categories of views are extensionally equivalent in the fixed-population context, but 
come apart in the variable-population context (\citeauthor{kowalczykMSequality}, ms).

\subsection{Two-factor egalitarianism}
\label{section-twoFactorEgalitarianism}

Among two-factor egalitarian theories, there is another important choice point between `totalist' and `averagist' views.

\begin{description}
	\item[Totalist Two-Factor Egalitarianism]~\\
		$V(\Pop[X]) = \Tot(X) - I(\Pop[X])\Size[X]$, where $I$ is some measure of inequality in $\Pop[X]$.
	\item[Averagist Two-Factor Egalitarianism]~\\
		$V(\Pop[X]) = \Avg[X] - I(\Pop[X])$, where $I$ is some measure of inequality in $\Pop[X]$.\footnote{One could also imagine variable-value two-factor theories (and two-factor theories that incorporate critical levels, priority weighting, etc., into their value functions), but we will set these possibilities aside for simplicity.}
\end{description}

Here, in each case, the second term of the value function can be thought of as a penalty representing the badness of inequality. Such a penalty could have any number of forms, but for the purposes of illustration we stipulate that $I(X)$ depends only on the \emph{distribution} of $X$, where this can be understood formally as the function $X/\Size[X]\colon\WW\to\RRR$ giving the proportion of the population in $X$ having welfare $w$. 
The \emph{degree} of inequality is indeed plausibly a matter of the distribution in this sense, and the \emph{badness} of inequality is then plausibly a function of the degree of inequality and the size of the population. The more substantial assumption is that the badness of inequality either scales linearly with the size of the population (for the totalist version of the view) or does not depend on population size (for the averagist version). 

Now, we want to know what these theories do as $\Size[Z] \to \infty$. In the last section, we had to hold one feature of $\Pop[Z]$ constant as $\Size[Z]\to \infty$, namely, $\Avg[Z]$. Egalitarian theories, however, are potentially sensitive to the whole distribution of welfare levels in the population, and so to obtain limit results it is useful to hold fixed the whole distribution of welfare in the background population, i.e.~$D\coloneqq Z/|Z|$.
We'll state the general result, and then give some examples.


\begin{restatable}[]{thm}{GenEgalThm}\label{t:GenEgal}
	Suppose $V$ is a value function of the form
	$V(X)=\Tot(X) - I(X)\Size[X]$, or else $V(X)=\Avg[X] - I(X)$, where $I$ is a differentiable function of the distribution of $X$. Then the axiology $\Ax$ represented by $V$ converges to an additive axiology relative to background populations with any given distribution $D$, with weighting function\footnote{Here $1_w\in\PP$ is the population with a single welfare subject at level $w$, and we use the fact that value functions of the assumed form can be evaluated directly on any finitely supported, non-zero function $\WW\to\RRR_+$, such as, in particular, $D$ and $D+t1_w$.}
\[f(w)=\lim_{t\to 0^+} \frac{V(D+t1_w)-V(D)}{t}. 
\]
If the Pareto principle holds with respect to $\Ax$, then $f$ is weakly increasing, and if Pigou-Dalton transfers are weak improvements, then $f$ is weakly concave. 
\end{restatable}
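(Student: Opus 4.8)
Fix the distribution $D$ and populations $\Pop[X],\Pop[Y]$; I would first reduce to finite-dimensional calculus. Because $D,\Pop[X],\Pop[Y]$ are finitely supported, the whole argument lives in the finite-dimensional space of functions supported on $S\coloneqq \mathrm{supp}(\Pop[X])\cup\mathrm{supp}(\Pop[Y])\cup\mathrm{supp}(D)$, and I extend $V$ to the open cone of non-zero, non-negative functions there (as the footnote licenses). The key structural facts to record are (i) the homogeneity $V(\lambda u)=\lambda^{d}V(u)$ for $\lambda>0$, with $d=1$ in the totalist case $V=\Tot-I\cdot\Size$ and $d=0$ in the averagist case $V=\Avg-I$ (since $\Tot,\Size$ are linear and $\Avg,I$ depend only on the distribution); and (ii) that $I$ differentiable makes $V$ differentiable at $D$, with a linear differential $L$ satisfying $L(u)=\lim_{t\to 0^+}\frac{V(D+tu)-V(D)}{t}$. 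In particular $f(w)=L(1_w)$, so the candidate additive axiology is $\Ax'$ with value function $u\mapsto L(u)=\sum_{w}u(w)f(w)$, and by linearity $L(\Pop[X]-\Pop[Y])=\sum_w(\Pop[X](w)-\Pop[Y](w))f(w)$.

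\textbf{Convergence.} Writing a background population of distribution $D$ as $\Pop[Z]=nD$ with $n=\Size[Z]\to\infty$, I would exploit homogeneity to factor out the scale. With $\epsilon\coloneqq 1/(\Size[X]+n)$ one has $\Pop[X]+\Pop[Z]=(\Size[X]+n)\bigl(D+\epsilon(\Pop[X]-\Size[X]D)\bigr)$, hence
\[
V(\Pop[X]+\Pop[Z])=(\Size[X]+n)^{d}\,V\bigl(D+\epsilon(\Pop[X]-\Size[X]D)\bigr).
\]
Expanding $V(D+\epsilon u)=V(D)+\epsilon L(u)+o(\epsilon)$ and invoking Euler's relation $L(D)=d\,V(D)$ (differentiate $V(\lambda D)=\lambda^{d}V(D)$ at $\lambda=1$), the large, population-independent background contributions cancel, and doing the same for $\Pop[Y]$ leaves
\[
V(\Pop[X]+\Pop[Z])-V(\Pop[Y]+\Pop[Z])=c_n\,L(\Pop[X]-\Pop[Y])+o(c_n),
\]
with $c_n=1$ (totalist) or $c_n=1/n$ (averagist), both positive. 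Thus whenever $L(\Pop[X]-\Pop[Y])=\sum_w(\Pop[X](w)-\Pop[Y](w))f(w)\neq 0$ its sign controls the left-hand side for all large $n$; so $V_{\Ax'}(\Pop[X])>V_{\Ax'}(\Pop[Y])$ forces $\Pop[X]+\Pop[Z]\succ_{\Ax}\Pop[Y]+\Pop[Z]$ once $\Pop[Z]$ is large, which is exactly convergence to $\Ax'$.

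\textbf{Monotonicity and concavity.} For the two sign constraints on $f$ I would feed equal-sized pairs into the displayed difference formula and divide by $c_n>0$ before letting $n\to\infty$. For monotonicity, take $w_1<w_2$ and $\Pop[X]=1_{w_2}$, $\Pop[Y]=1_{w_1}$; the Pareto principle makes $1_{w_2}+\Pop[Z]$ at least as good as $1_{w_1}+\Pop[Z]$ (one subject better off, the rest unchanged), so the formula gives $f(w_2)-f(w_1)\ge 0$. For concavity, take $p<q<r$ and positive integers $k,m$ with $k/m=(r-q)/(q-p)$, and set $\Pop[X]=(k+m)1_q$, $\Pop[Y]=k\,1_p+m\,1_r$; these have equal size and total, and $\Pop[X]$ is reachable from $\Pop[Y]$ by a finite chain of Pigou--Dalton transfers (Hardy--Littlewood--P\'olya), each a weak improvement, so the formula yields $(k+m)f(q)\ge k f(p)+m f(r)$, i.e.\ $f(q)\ge\frac{(r-q)f(p)+(q-p)f(r)}{r-p}$. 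Continuity of $f$ (inherited from differentiability of $V$) promotes this from commensurable triples to all $p<q<r$, giving weak concavity.

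\textbf{Main obstacle.} The technical heart is the asymptotic expansion: one must check that the potentially divergent background term is annihilated \emph{precisely} by Euler's relation, leaving the clean term proportional to $\sum_w(\Pop[X](w)-\Pop[Y](w))f(w)$. Two supporting points need care: making sense of $V$ and its (one-sided, hence matching the $t\to 0^+$ in the definition of $f$) differentiability on the cone rather than only on populations or on the simplex; and, for concavity, the majorization step together with the continuity argument that passes from rational to arbitrary ratios.
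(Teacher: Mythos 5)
Your proof is correct and takes essentially the same route as the paper's: both exploit the degree-$d$ homogeneity of $V$ ($d=1$ totalist, $d=0$ averagist) to reduce the large-background limit to the G\^ateaux derivative of $V$ at $D$, identify $f(w)=\partial_{1_w}V(D)$, and transfer Pareto and Pigou--Dalton through the convergence statement (the paper subtracts $V(Z)$ from $V(X+Z)$ before rescaling, which makes your Euler-relation cancellation automatic and only ever requires the derivative in the non-negative direction $X$ rather than in the signed direction $X-|X|D$). The one soft spot is your appeal to continuity of $f$ in the concavity endgame---G\^ateaux differentiability of $I$ at $D$ by itself says nothing about how $\partial_{1_w}I(D)$ varies with $w$---but the paper's own proof is no more explicit at that step.
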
 

A few points in the theorem require further explanation. We will explain the relevant notion of \emph{differentiability} when it comes to the proof (see Remark \ref{rem:derivative} in the appendix); as usual, functions that are easy to write down tend to be differentiable, but it isn't automatic. The \emph{Pareto principle} holds that increasing anyone's welfare increases the value of the population. This principle clearly holds for prioritarian views (because the priority-weighting $f$ is assumed to be increasing), but it need not in principle hold for egalitarian views: conceptually, increasing someone's wellbeing might contribute so much to inequality as to be on net a bad thing. Still, the Pareto principle is generally held to be a desideratum for egalitarian views. Finally, a \emph{Pigou-Dalton transfer} is a total-preserving transfer of welfare from a better-off person to a worse-off person that keeps the first person better-off than the second. The condition that Pigou-Dalton transfers are at least weak improvements (they do not make things worse) is often understood as a minimal requirement for egalitarianism.  

To illustrate this result, let's consider two more specific families of egalitarian axiologies that instantiate the schemata of totalist and averagist two-factor egalitarianism respectively. 

For the first, we'll use a measure of inequality based on the \emph{mean absolute difference} ($\MAD$) of welfare, defined for any population $X$ as follows:
\begin{align*}
	\MAD(X) &:= \sum_{v,w\in\WW} \frac{X(w)X(v)}{\Size[X]^2}\left|w-v\right|.
\end{align*}
$\MAD(X)$ represents the average welfare inequality between any two individuals in $\Pop[X]$. $\MAD(X)|X|$ can therefore be understood as measuring total pairwise inequality in $\Pop[X]$.
Consider, then,  the following totalist two-factor view:



\begin{description}
	\item[Mean Absolute Difference Total Egalitarianism $(\MDT)$] 
		\[
			V\sMDT(\Pop[X]) = \Tot(X) - \alpha\MAD(\Pop[X])|\Pop[X]|
		\]
		where $\alpha \in (0,1/2)$ is a constant that determines the relative importance of inequality.\footnote{For $\alpha\geq 1/2$, equality would be so important that the Pareto principle would fail, i.e., it would no longer be true in general that increasing someone's welfare level increases the value of the population.} 
\end{description}

%
%

Second, consider the following averagist two-factor view, which identifies overall value with a quasi-arithmetic mean of welfare:\footnote{See \citet{fleurbaey2010assessing} and \citet[Theorem 1]{mccarthy2015distributive} for axiomatizations of this type of egalitarianism, at least in fixed-population cases where the totalist/averagist distinction is irrelevant.}

\begin{description}
	\item[Quasi-Arithmetic Average Egalitarianism $(\QAA)$] 
		\[
			V\sQAA(\Pop[X]) = \QAM(X)=g\inv\biggl({\sum_{w\in \WW} \frac{X(w)}{\Size[X]}g(w)}\biggr).
		\] 
		for some strictly increasing, concave function $g\colon\WW\to\RRR$.
\end{description}
Implicitly, the 
measure of inequality in $\QAA$ is $I(X)=\Avg[X]-\QAM(X)$, which one can show is a positive function, weakly decreasing under Pigou-Dalton transfers. In the limiting case where $g$ is linear, $\QAM(X)=\Avg[X]$. More generally, $\QAA$ is ordinally equivalent to an averagist version of prioritarianism. 

\begin{restatable}[]{thm}{MDTThm}\label{t:MDT}
$\MDT$ converges to $\PR$, relative to background populations with a given distribution $D$. Specifically, $\MDT_\alpha$ converges to $\PR_f$, the prioritarian axiology whose 
weighting function is
	\[
		f(w) = w - 2\alpha \MAD(w,D) + \alpha \MAD(D).
	\]
	Here $\MAD(w,D)\coloneqq \sum_{x\in\WW} D(x)|x-w|$ is the average distance between $w$ and the welfare levels occurring in $D$. 
\end{restatable}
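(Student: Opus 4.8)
The plan is to recognize $\MDT$ as an instance of the totalist two-factor schema $V(X)=\Tot(X)-I(X)|X|$ with inequality measure $I(X)=\alpha\MAD(X)$, and then to invoke Theorem~\ref{t:GenEgal} directly rather than re-deriving convergence from scratch. Since $\MAD(X)=\frac{1}{|X|^2}\sum_{u,v\in\WW}X(u)X(v)|u-v|$ depends only on the distribution $X/|X|$ and is a smooth rational expression in the values $X(w)$ away from the zero function, $I$ is differentiable in the sense required, so Theorem~\ref{t:GenEgal} applies and tells us that the axiology converges, relative to background populations of distribution $D$, to the additive axiology with weighting function $f(w)=\lim_{t\to0^+}\frac{V\sMDT(D+t1_w)-V\sMDT(D)}{t}$. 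The whole argument then reduces to (a) evaluating this one-sided derivative and (b) checking that the resulting $f$ is concave and strictly increasing, so that the limit axiology is genuinely of the prioritarian form $\PR[f]$.

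For step (a) I would compute directly. Writing $\MAD(X)|X|=\frac{1}{|X|}\sum_{u,v}X(u)X(v)|u-v|$ and substituting $X=D+t1_w$, so that $|X|=|D|+t=1+t$ (using $|D|=1$ and $|1_w|=1$), I would expand the double sum: the $D$-$D$ term contributes $\MAD(D)|D|^2=\MAD(D)$; the two symmetric cross terms each contribute $t\sum_x D(x)|x-w|=t\,\MAD(w,D)$, for a total of $2t\,\MAD(w,D)$; and the diagonal $t^2$ term vanishes since $|w-w|=0$. Hence $\MAD(X)|X|=\frac{1}{1+t}\bigl(\MAD(D)+2t\,\MAD(w,D)\bigr)$, while $\Tot(X)=\Tot(D)+tw$. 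Subtracting $V\sMDT(D)=\Tot(D)-\alpha\MAD(D)$, dividing by $t$, and letting $t\to0^+$ collapses the $\tfrac{t}{1+t}$ factors and yields exactly $f(w)=w-2\alpha\MAD(w,D)+\alpha\MAD(D)$.

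For step (b), $f$ is concave because $w\mapsto\MAD(w,D)=\sum_x D(x)|x-w|$ is a convex combination of the convex functions $w\mapsto|x-w|$, hence convex, making $-2\alpha\MAD(\cdot,D)$ concave and the remaining terms affine. Strict monotonicity is where the constraint $\alpha\in(0,1/2)$ does its work: the function $w\mapsto\MAD(w,D)$ is $1$-Lipschitz (each $|x-w|$ is, and $\sum_x D(x)=1$), so all of its one-sided slopes lie in $[-1,1]$, whence the slopes of $f$ are at least $1-2\alpha>0$. Thus $f$ is concave and strictly increasing, $\PR[f]$ is a legitimate prioritarian axiology, and the convergence claim follows from Theorem~\ref{t:GenEgal}. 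These two properties are moreover exactly what Theorem~\ref{t:GenEgal} predicts, given that $\MDT$ satisfies the Pareto principle for $\alpha<1/2$ and that $\MAD$-based inequality weakly decreases under Pigou--Dalton transfers.

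The computation is essentially routine; the only points requiring care are verifying the differentiability hypothesis of Theorem~\ref{t:GenEgal} for $\MAD$ (which here amounts to noting that the normalized double sum is a smooth function of $t$ on $t>0$ with a finite right-hand limit at $0$, so the limit defining $f$ exists) and the bookkeeping in the expansion, especially tracking the factor of $2$ from the symmetric cross terms and confirming the $t^2$ diagonal term drops out. I expect the latter to be the main place an error could creep in, since mishandling the normalization by $1+t$ would spoil the constant term $\alpha\MAD(D)$.
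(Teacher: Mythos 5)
Your proposal is correct and takes essentially the same route as the paper: both reduce the theorem to an application of Theorem \ref{t:GenEgal} with $I=\alpha\MAD$ and then evaluate the weighting function as the directional derivative of $V\sMDT$ at $D$ along $1_w$ --- the paper via the bilinear pairing $\Pair{X,Y}=\sum_{x,y}X(x)Y(y)\lvert x-y\rvert$ and the identity $\partial_X\MAD(Z)=2\Pair{X,Z}/\Size[Z]^2-2\Pair{Z,Z}\Size[X]/\Size[Z]^3$, you by directly expanding $V\sMDT(D+t1_w)$, with identical outcome (and your explicit check that $f$ is concave and strictly increasing for $\alpha<1/2$ is a welcome extra the paper leaves implicit). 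The only place you are looser than the paper is the differentiability hypothesis of Theorem \ref{t:GenEgal}, which per Remark \ref{rem:derivative} requires $\partial_Q I(P)$ to exist and be linear in $Q$, not merely along $Q=1_w$; the paper's pairing formula makes that linearity manifest, whereas your appeal to smoothness of the normalized double sum covers it only implicitly.
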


\begin{restatable}[]{thm}{QAAThm}\label{t:QAA}
	$\QAA$ converges to $\PR$, relative to background populations with a given distribution $D$. Specifically, $\QAA_g$ converges to $\PR_f$, the prioritarian axiology whose 
	weighting function is 
	\[
		f(w) = g(w)-g(\QAM(D)). 
	\]
\end{restatable}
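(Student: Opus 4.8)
The plan is to strip off the outer $g\inv$ and reduce the comparison of $\QAM$-values to a comparison of $g$-transformed averages, which add up cleanly across populations. Write $G(A)\coloneqq\sum_{w\in\WW}A(w)g(w)$ for the total $g$-value of a population $A$, so that $\QAM(A)=g\inv\!\bigl(G(A)/\Size[A]\bigr)$. Because $g$, and hence $g\inv$, is strictly increasing, for any populations $A,B$ we have $\QAM(A)>\QAM(B)$ if and only if $G(A)/\Size[A]>G(B)/\Size[B]$. Taking $A=X+Z$ and $B=Y+Z$, the target inequality $V\sQAA(X+Z)>V\sQAA(Y+Z)$ is therefore equivalent to $G(X+Z)/\Size[X+Z]>G(Y+Z)/\Size[Y+Z]$.

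First I would do the additive bookkeeping. Both $G$ and $\Size[\cdot]$ are additive under $+$, so $G(X+Z)=G(X)+G(Z)$ and $\Size[X+Z]=\Size[X]+\Size[Z]$, and likewise for $Y$. Since $Z$ has distribution $D$, we have $Z(w)=\Size[Z]\,D(w)$ for all $w$, whence $G(Z)=\Size[Z]\sum_w D(w)g(w)=\Size[Z]\,g(\QAM(D))$, where the last step uses $\QAM(D)=g\inv\!\bigl(\sum_w D(w)g(w)\bigr)$ (valid because $\Size[D]=1$). Writing $n\coloneqq\Size[Z]$ and $c\coloneqq g(\QAM(D))$ and clearing the positive denominators, the equivalent inequality becomes
\[
	\bigl(G(X)+nc\bigr)\bigl(\Size[Y]+n\bigr)>\bigl(G(Y)+nc\bigr)\bigl(\Size[X]+n\bigr).
\]

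The crucial observation is that on expanding both sides the quadratic terms $n^2c$ cancel, leaving an inequality that is \emph{linear} in $n$, whose coefficient of $n$ is
\[
	\bigl(G(X)-c\Size[X]\bigr)-\bigl(G(Y)-c\Size[Y]\bigr)=V\sPR[f](X)-V\sPR[f](Y),
\]
where $f(w)=g(w)-g(\QAM(D))$, using that $\sum_w A(w)\bigl(g(w)-c\bigr)=G(A)-c\Size[A]$. So if $\PR_f$ strictly prefers $X$ to $Y$, this coefficient is positive, and the whole (linear-in-$n$) expression is positive for every $n$ above a threshold determined by that coefficient and the remaining constant term---a threshold that is allowed to depend on $X$ and $Y$. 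Hence $V\sQAA(X+Z)>V\sQAA(Y+Z)$ for all sufficiently large $Z$ of distribution $D$, which is the asserted convergence; and $f$ inherits strict monotonicity and concavity from $g$, so $\PR_f$ is genuinely prioritarian.

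I expect the only delicate point to be the cancellation together with the recognition that the linear coefficient is exactly $V\sPR[f](X)-V\sPR[f](Y)$; the rest is forced. Alternatively, one could appeal to Theorem~\ref{t:GenEgal}, since $V\sQAA(X)=\Avg[X]-I(X)$ with $I(X)=\Avg[X]-\QAM(X)$ is of the averagist two-factor form, and then evaluate the prescribed derivative $f(w)=\lim_{t\to0^+}\bigl(\QAM(D+t1_w)-\QAM(D)\bigr)/t$; this returns $\bigl(g(w)-g(\QAM(D))\bigr)/g'(\QAM(D))$, which differs from the stated $f$ only by the positive constant $1/g'(\QAM(D))$ and so induces the same prioritarian order. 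The direct argument is cleaner here because it needs only that $g$ is strictly increasing (and concave, for the concavity of $f$), whereas the derivative route additionally requires $g$ to be differentiable.
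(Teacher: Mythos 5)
Your argument is correct, but it reaches the theorem by a genuinely different route from the paper. The paper proves Theorem \ref{t:QAA} as an instance of the general machinery of Theorem \ref{t:GenEgal}: it views $\QAA$ as averagist two-factor egalitarianism with $I(X)=\Avg[X]-\QAM(X)$ and computes the directional derivative $f(w)=\partial_{1_w}\QAM(D)=\bigl(g(w)-g(\QAM(D))\bigr)/g'(\QAM(D))$, which differs from the stated weighting function only by the positive scalar $1/g'(\QAM(D))$ --- this is exactly the alternative you sketch at the end. Notably, the paper explicitly \emph{omits} the verification that this $I$ is differentiable in the sense of Remark \ref{rem:derivative}, which is the one technical debt of that route; your cross-multiplication argument discharges the theorem without incurring it, because once $g\inv$ is stripped off everything in sight is exactly additive, and the comparison reduces to a polynomial in $n=\Size[Z]$ whose quadratic terms cancel. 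What your route buys is (i) weaker hypotheses --- only strict monotonicity of $g$ is used for convergence, with concavity entering solely to certify that $f$ is a legitimate prioritarian weighting, and no differentiability of $g$ anywhere --- and (ii) an explicit threshold in the style of \eqref{eq:avg1}, namely $n>\bigl(G(Y)\Size[X]-G(X)\Size[Y]\bigr)/\bigl(V\sPR[f](X)-V\sPR[f](Y)\bigr)$ in your notation, which the general theorem cannot supply. What the paper's route buys is uniformity: one computation of a G\^ateaux derivative handles $\MDT$, $\QAA$, and any other differentiable two-factor view at once. The only point of hygiene worth adding (and it applies equally to the paper's proof) is that $n$ should range over values for which $nD$ is integer-valued, so that $Z$ is a genuine population; since such $n$ are unbounded, this changes nothing.
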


\subsection{Rank discounting}

Another family of population axiologies that is often taken to reflect egalitarian motivations is \textit{rank-discounted utilitarianism} (RDU). The essential idea of rank-discounting is to give different weights 
to marginal changes in the welfare 
of different individuals, not based on their absolute welfare level (as prioritarianism does), but rather based on their welfare \textit{rank} within the population.

One potential motivation for RDU over two-factor views is that, because we are simply applying different positive weights to the marginal welfare of each individual, we clearly avoid any charge of `leveling down': unlike on two-factor views, there is nothing even \textit{pro tanto} good about reducing the welfare of a better-off individual---it is simply \textit{less bad} than reducing the welfare of a worse-off individual.\footnote{It is important to remember, however, that two-factor views with an appropriately chosen $I$, like those we considered in the last section, can avoid \textit{all-things-considered} leveling down: that is, while they may suggest that there is \textit{something good} about making the best off worse off, they never claim that it would be an all-things-considered improvement.}

Versions of rank-discounted utilitarianism have been discussed and advocated under various names in both philosophy and economics, e.g.\@ by \cite{asheim2014escaping} and \cite{buchak2017taking}. 
In these contexts, the RDU value function is generally taken to have the following form:
\begin{equation}\label{eq:RD}
	V(X)=\sum_{k=1}^{\Size[X]} f(k)X_k
\end{equation} 
where $X_k$ denotes the welfare of the $k$th worst off welfare subject in $X$, and $f\colon\NNN\to\RRR$ is a positive but decreasing function.\footnote
{\label{fn:gvsf}%
	Using the standard notation in this paper, one can alternatively write
	\[
		V(X)=\sum_{w\in\WW} \biggl(g\bigl(\sum_{v\leq w} X(v))-g(\sum_{v<w} X(v)\bigr)\biggr) w
	\]
	for some increasing, concave function $g\colon\RRR\to\RRR$ with $g(0)=0$.  The two presentations are equivalent if $g(k)=\sum_{i=1}^kf(k)$ or conversely $f(k)=g(k)-g(k-1)$.
}

However, these discussions often assume a context of fixed population size, and there are different ways one might extend the formula when the size is not fixed.
We will consider the most obvious approach, simply taking  equation \eqref{eq:RD} as a definition regardless of the size of $X$.\footnote
{   An alternative approach would be to extend to variable-populations the `veil of ignorance' description of rank-discounting  described by Buchak (see also \citet[Example 2.9] {mccarthy2020utilitarianism}). However, on the most obvious way of doing this, the resulting view is coextensive with a two-factor egalitarian view and so falls under the purview of Theorem \ref{t:GenEgal} (even if it is conceptually different in important ways). 
}
 A view of this type, explicitly designed for a variable-population context, is set out in \cite{asheim2014escaping}. Simplifying slightly to set aside features irrelevant for our purposes, their view is as follows:

\begin{description}
	\item[Geometric Rank-Discounted Utilitarianism $(\GRD)$] 
		\[
			V\sGRD(X) =  \sum_{k = 1}^{\Size[X]} \beta^k X_k
		\]
		for some $\beta \in (0,1)$.
\end{description}

Here, the rank-weighting function is $f(k) = \beta^k$. In general, since $f$ is assumed to be non-increasing and positive, $f(k)$ must asymptotically approach some limit $L$ as $k$ increases. For $\GRD$, $L=0$. But a simpler situation arises when $L>0$ (so that $f$ is bounded away from zero), and this is the case we will consider first, before returning to $\GRD$.

\begin{description}
	\item[Bounded Rank-Discounted Utilitarianism $(\BRD)$] 
		\[
			V\sBRD(X) =  \sum_{k = 1}^{\Size[X]} f(k) X_k
		\]
		for some 
		non-increasing, positive function $f\colon\RRR\to\RRR$ that is eventually convex\footnote
		{
			That is, there is some $k$ such that $f$ is convex on the interval $(k,\infty)$. The assumption of eventual convexity is simply a technical assumption to be used in Theorem \ref{t:RD} below.
		}
		with asymptote $L>0$. 
\end{description}

In stating the result, we will need to restrict the foreground populations under consideration. 
\begin{description}
	\item[Convergence on \textbf{\textit{S}}]~\\
		Axiology $\Ax$ converges to $\Ax'$, relative to background populations of type $T$, on a set of populations $S$, if and only if, for any populations $X$ and $Y$ in $S$, if $Z$ is a sufficiently large population of type $T$, then 
		\[
			X+Z\succ_{\Ax'} Y+Z \implies X+Z\succ_\Ax Y+Z.
		\]
\end{description} 
Having fixed a background distribution $D=Z/|Z|$, say that a population $X$ is \emph{moderate} with respect to $D$ if the the lowest welfare level in $X$ is no lower than the the lowest welfare level in $D$. In other words, for any $x\in\WW$ with $X(x)\neq 0$, there is some $z\in \WW$ with $z\leq x$ and $D(z)\neq 0$.  Then we can state the following result:

\begin{restatable}[]{thm}{positiveRDThm}\label{t:RD}
	$\BRD$ converges to $\TU$ relative to background populations with a given distribution $D$, on the set of populations that are moderate with respect to $D$.
\end{restatable}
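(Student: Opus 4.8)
The plan is to prove a sharper quantitative statement from which the theorem follows at once: for any fixed population $X$ that is moderate with respect to $D$, the \emph{marginal value} $V\sBRD(X+Z)-V\sBRD(Z)$ converges to $L\cdot\Tot(X)$ as $\Size[Z]\to\infty$ along populations with $Z/\Size[Z]=D$. Granting this, the result is immediate. Since $\TU$ is separable, the hypothesis $X+Z\succ_\TU Y+Z$ is just $\Tot(X)>\Tot(Y)$; and writing
\[
 V\sBRD(X+Z)-V\sBRD(Y+Z)=\bigl[V\sBRD(X+Z)-V\sBRD(Z)\bigr]-\bigl[V\sBRD(Y+Z)-V\sBRD(Z)\bigr],
\]
the right-hand side tends to $L\bigl(\Tot(X)-\Tot(Y)\bigr)>0$, so it is positive once $Z$ is large enough (with the threshold allowed to depend on $X,Y$, as the definition of convergence permits).

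The heart of the matter is a single-insertion identity. Adding one welfare subject at level $w$ to a population $P$ leaves the ranks, hence weights, of everyone in $P$ below $w$ untouched, assigns the newcomer weight $f(r+1)$ where $r$ is the number of welfare subjects in $P$ strictly below $w$, and shifts every subject of rank $\geq r+1$ up by one. Thus, writing $P_k$ for the welfare of the $k$th worst off in $P$,
\[
 V\sBRD(P+1_w)-V\sBRD(P)=f(r+1)\,w+\sum_{k=r+1}^{\Size[P]}\bigl(f(k+1)-f(k)\bigr)P_k .
\]
Since $P_k\geq w$ for $k\geq r+1$, adding and subtracting $w$ inside the sum and telescoping $\sum_{k=r+1}^{\Size[P]}(f(k+1)-f(k))=f(\Size[P]+1)-f(r+1)$ collapses this to the clean form
\[
 V\sBRD(P+1_w)-V\sBRD(P)=w\,f(\Size[P]+1)+\mathrm{Err},\qquad \mathrm{Err}:=\sum_{k=r+1}^{\Size[P]}\bigl(f(k+1)-f(k)\bigr)(P_k-w).
\]
Because $f$ is non-increasing with asymptote $L$, we have $f(\Size[P]+1)\to L$ as $\Size[P]\to\infty$, so the entire game reduces to showing $\mathrm{Err}\to 0$. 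I would take $P=Z$ and, when iterating the insertion, $P=Z+X'$ for the (boundedly many) members of $X$ already inserted.

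The estimate of $\mathrm{Err}$ is where moderateness does its work, and where the only genuine obstacle lies. The terms with $P_k=w$ contribute nothing, so $\mathrm{Err}$ is supported on ranks $k\geq j_0$, the ``lift-off'' rank at which the sorted welfares of $P$ first exceed $w$. Since $X$ is moderate, $w\geq\min D$, and hence $j_0$ grows like $\Size[Z]\,D(\{v<w\})$ when $w>\min D$, and like $\Size[Z]\,D(w)$ in the boundary case $w=\min D$; in either case $j_0\to\infty$. As $D$ is finitely supported, the welfare levels in $P$ are bounded, say $|P_k-w|\leq M$ for a constant depending only on $X$ and $D$, and therefore $|\mathrm{Err}|\leq M\bigl(f(j_0)-f(\Size[P]+1)\bigr)\to M(L-L)=0$. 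The main subtlety is precisely the boundary level $w=\min D$: there $r=0$ and the newcomer is multiplied by the \emph{large} weights $f(1),f(2),\dots$, so the naive bound starting the telescope at $r+1=1$ does not vanish; one must exploit that the nonzero part of $\mathrm{Err}$ is pushed out to ranks $\geq j_0\to\infty$, where $f$ has already flattened to near $L$. This also shows why moderateness is necessary: a welfare subject \emph{below} $\min D$ stays pinned near rank one as $\Size[Z]\to\infty$, retaining effective weight near $f(1)\neq L$, so $\BRD$ would not converge to $\TU$. Summing the $\Size[X]$ individual marginals (each insertion changing $\Size[P]$ and $j_0$ by at most $\Size[X]$, which does not disturb the limits) yields $V\sBRD(X+Z)-V\sBRD(Z)\to L\,\Tot(X)$. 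The estimates above use only positivity, monotonicity, and the positive asymptote of $f$; the eventual-convexity hypothesis serves as a technical convenience, ensuring the decrements $f(k)-f(k+1)$ are eventually monotone and so can be bounded uniformly as the insertion is iterated, consistent with the authors' remark that it is merely a technical assumption.
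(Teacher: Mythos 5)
Your proposal is correct, and it targets the same intermediate fact as the paper's proof---that $V\sBRD(X+Z)-V\sBRD(Z)\to L\Tot(X)$ as $\Size[Z]\to\infty$, after which the theorem follows from the separability of $\TU$ exactly as you say---but it gets there by a genuinely different and in one respect stronger route. The paper decomposes the difference all at once into the weights $f(Z_{\leq w}+X_{<w}+i)$ assigned to the members of $X$ (each tending to $L$ by moderateness) plus the rank-shift corrections $f(Z_{<w}+X_{<w}+i)-f(Z_{<w}+i)$ suffered by the members of $Z$; since there are $Z(w)\propto\Size[Z]$ such corrections at each level, it bounds each one uniformly by the worst case and must then invoke the fact that $n\bigl(f(n+m)-f(n)\bigr)\to 0$ for eventually convex $f$ (Lemma \ref{l:BRD})---this is precisely where the ``eventually convex'' clause in the definition of $\BRD$ earns its keep. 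Your single-insertion identity, $V\sBRD(P+1_w)-V\sBRD(P)=wf(\Size[P]+1)+\mathrm{Err}$ with $\mathrm{Err}=\sum_{k}\bigl(f(k+1)-f(k)\bigr)(P_k-w)$, sidesteps that entirely: the nonnegative coefficients $f(k)-f(k+1)$ telescope against the bounded factor $P_k-w$ to give $\lvert\mathrm{Err}\rvert\leq M\bigl(f(j_0)-L\bigr)$, which vanishes using only monotonicity of $f$ and the positive asymptote, with moderateness supplying $j_0\to\infty$ (more precisely $j_0=P_{\leq w}+1\gtrsim\Size[Z]\,D(\{v\leq w\})$, a harmless sharpening of your $D(\{v<w\})$). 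So your argument in fact dispenses with eventual convexity and proves the theorem for arbitrary non-increasing positive $f$ with asymptote $L>0$, vindicating the paper's own footnote that convexity is ``simply a technical assumption''; your diagnoses of the boundary case $w=\min D$ and of why moderateness is indispensable are also accurate.
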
 

When, as in $\GRD$, the asymptote of the weighting function $f$ is at $L=0$, the situation is subtler and appears to depend on the exact rate at which $f$ decays. We will consider only $\GRD$,  as it is the best-motivated example in the literature. 

In fact, GRD does \emph{not} converge to an additive, Paretian axiology on any interesting range of populations.  
Roughly speaking, this is because, as the background population gets larger, the 
weight given to the best-off individual in $X$ becomes arbitrarily small relative to the weight given to the worst-off---smaller than the relative weight given to it by any particular additive, Paretian axiology.
Nonetheless, it turns out that GRD \emph{does} converge to a \emph{separable}, Paretian axiology. We'll explain this carefully, but perhaps the most important take-away of this discussion will be that, given a large background population, GRD leads to some very strange and counterintuitive results. The limiting axiology will be \emph{critical level leximin}, defined by the following conditions: 
\begin{description}
	\item[{Critical Level Leximin $(\CLL[c])$}] ~
\begin{enumerate}[nosep]	
\item If $X$ and $Y$ have the same size, then $X\succ Y$ if and only if $X\neq Y$ and the least $k$ such that $X_k\neq Y_k$ is such that $X_k\succ Y_k$. 
\item If $X$ and $Y$ differ only in that $Y$ has additional individuals at welfare level $c$, then $X$ and $Y$ are equally good.\footnote{To compare $X$ and $Y$ in general, use the second condition to find populations $X'$ and $Y'$ that are equally as good as $X$ and $Y$ respectively, but such that $\Size[X']=\Size[Y']$, and then compare them using the first condition.
}
\end{enumerate}
\end{description}
In a sense, $\CLL[c]$ is simply a limiting case of prioritarianism, where the priority given to the less-well-off is infinite. In particular, although it is not additively separable in the narrow sense defined in \S \ref{section-FormalSetup}, which requires an assignment of real numbers to each individual, one can check that it is separable, and indeed one can show that it is additively separable in a more general sense, if we allow the contributory value of an individual's welfare to be represented by a vector rather than a single real number.\footnote{See \citet[Example 2.7]{mccarthy2020utilitarianism} for details in the constant-population-size case.}

To state the theorem, fix a set $W\subset \WW$ of welfare levels. 
Say that a population $X$ is \emph{supported} on $W$ if $X(w)=0$ for all $w\notin W$.
And say that $W$ is \emph{covered} by a distribution $D=Z/\Size[Z]$ if and only if there is a welfare level in $Z$ between any two elements of $W$, a welfare level in $Z$ below every element of $W$, and welfare level in $Z$ above every element of $W$.   

\begin{restatable}[]{thm}{GDThm}\label{t:GD}
    Let $W\subset\WW$ be any set of welfare levels, and $D$ a population that covers $W$. 
	$\GRD$ converges to $\CLL[c]$ relative to background populations with distribution $D$, on the set of populations that are supported on $W$; the critical level $c$ is the highest welfare level occurring in $D$.
\end{restatable}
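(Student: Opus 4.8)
The plan is to reduce everything to cumulative counts and then exploit the explicit geometric form of the weights. Write $C_{\le w}(P)\coloneqq\sum_{v\le w}P(v)$ for the number of individuals in $P$ with welfare at most $w$, and $C_{<w}(P)\coloneqq\sum_{v<w}P(v)$. Using the representation from footnote~\ref{fn:gvsf},
\[
	V\sGRD(P)=\sum_{w}\bigl[g(C_{\le w}(P))-g(C_{<w}(P))\bigr]\,w,
	\qquad g(k)=\sum_{i=1}^{k}\beta^i=g_\infty(1-\beta^{k}),\quad g_\infty\coloneqq\tfrac{\beta}{1-\beta},
\]
the sum running over the finitely many welfare levels present in $P$. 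The decisive feature of $\GRD$, as opposed to a general bounded scheme, is this explicit form, which makes increments of $g$ factor neatly. First I would record a combinatorial description of the target relation. Since the common background cancels in cumulative counts, $C_{\le w}(X+Z)-C_{\le w}(Y+Z)=C_{\le w}(X)-C_{\le w}(Y)$; and because $X,Y$ are supported on $W$ while $c=\max D$ lies strictly above $W$, one checks that $X+Z\succ_{\CLL[c]}Y+Z$ holds if and only if, at the \emph{lowest} welfare level $v$ at which $C_{\le v}(X)\neq C_{\le v}(Y)$, one has $C_{\le v}(X)<C_{\le v}(Y)$. Moreover this $v$ necessarily satisfies $v<c$: if $\lvert X\rvert\neq\lvert Y\rvert$ then, since all mass of $X$ and $Y$ sits below $c$, the counts already differ at $\max W<c$. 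This is just critical-level leximin rewritten through cumulative counts, the level-$c$ padding affecting only counts at the very top.

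Next I would compute the value gap by summation by parts. Ordering the distinct welfare levels present in $X+Z$ or $Y+Z$ as $w_1<\dots<w_m$ with $w_m=c$, Abel summation of the representation above gives
\[
	V\sGRD(X+Z)-V\sGRD(Y+Z)=\sum_{i<m}\bigl[g(C_{\le w_i}(Y+Z))-g(C_{\le w_i}(X+Z))\bigr](w_{i+1}-w_i)+R,
\]
where $R=g_\infty\bigl(\beta^{\lvert Y\rvert+n}-\beta^{\lvert X\rvert+n}\bigr)c$ is the boundary term from the top level, with $n\coloneqq\lvert Z\rvert$. Writing $C_{\le w}(Z)=n\,C_{\le w}(D)$ and substituting $g(k)=g_\infty(1-\beta^{k})$, the shared background count factors out of each difference of powers of $\beta$, so each interior summand collapses to
\[
	g_\infty\,\beta^{\,n\,C_{\le w_i}(D)}\bigl(\beta^{\,C_{\le w_i}(X)}-\beta^{\,C_{\le w_i}(Y)}\bigr)(w_{i+1}-w_i).
\]
Since $\lvert X\rvert-\lvert Y\rvert=O(1)$, the boundary term is $R=O(\beta^{\,n})$.

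The heart of the argument is to show that, for $n$ large, this alternating sum takes the sign of its lowest nonvanishing term, the one at the level $v$ identified above. Here the covering hypothesis does the essential work. As $\beta\in(0,1)$, the magnitude of the $i$th summand is governed by the exponent $n\,C_{\le w_i}(D)$, with larger exponents giving exponentially smaller terms. Because every two elements of $W$ are separated by a welfare level of $D$, any two levels at which $C_{\le\,\cdot}(X)-C_{\le\,\cdot}(Y)$ takes distinct values carry positive $D$-mass between them, so their exponents differ by $\Theta(n)$; hence the term at $v$ exceeds every higher interior term by a factor $\beta^{-\Theta(n)}\to\infty$. Likewise, since $c=\max D$ lies above $v$, we have $C_{\le v}(D)\le 1-D(c)$, so $n\,C_{\le v}(D)\le n-\Theta(n)$ and the dominant term also dwarfs the $O(\beta^{\,n})$ boundary term $R$. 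The sign of the dominant term is that of $\beta^{\,C_{\le v}(X)}-\beta^{\,C_{\le v}(Y)}$, which is strictly positive exactly when $C_{\le v}(X)<C_{\le v}(Y)$, i.e.\ exactly when $X+Z\succ_{\CLL[c]}Y+Z$. Collecting the estimates, $V\sGRD(X+Z)-V\sGRD(Y+Z)=\beta^{\,n\,C_{\le v}(D)}(\delta+o(1))$ with $\delta>0$ for all sufficiently large $n$, which is the desired convergence.

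The hard part will be making this domination uniform and fully rigorous. One must bound the \emph{entire} tail of interior terms at once — they can alternate in sign and need not be individually negligible relative to $\beta^{\,n\,C_{\le v}(D)}$ until $n$ is large — together with $R$, and verify that the threshold on $n$ may be chosen \emph{after} fixing $X$ and $Y$, as the convergence definition permits. Concretely, once $X,Y,W,D$ are fixed, the finitely many constants $C_{\le w_i}(X),C_{\le w_i}(Y)$ and gaps $w_{i+1}-w_i$ are fixed, and only the background-count exponents $n\,C_{\le w_i}(D)$ grow; the covering condition is precisely what guarantees the $\Theta(n)$ separation of those exponents on which the whole estimate rests.
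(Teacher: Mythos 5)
Your argument is correct, and it reaches the theorem by a genuinely different route from the paper's. The paper works positionally: it identifies the pivotal level $w$ at which the padded leximin comparison first discriminates, decomposes each population into the bands below $w$, at $w$, strictly between $w$ and the next foreground level, and above, and uses the identity $V\sGRD(A+B)=V\sGRD(A)+\beta^{\Size[A]}V\sGRD(B)$ to factor out the common prefix weight $\beta^{\lvert X_-+Z_-+Z_w\rvert}$; the sign is then decided by the single explicit comparison $\lim_{n\to\infty}V\sGRD(Z_0)=\frac{\beta v'}{1-\beta}>\frac{\beta w}{1-\beta}$, with everything else swept into a remainder $R\to 0$. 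You instead pass to the cumulative-count representation, apply Abel summation, and exploit the exact geometric form $g(k)=g_\infty(1-\beta^k)$ so that each level contributes a term with explicit scale $\beta^{\,n\,C_{\le w_i}(D)}$; the covering hypothesis then enters as a clean quantitative statement (every nonvanishing term above the pivotal level $v$ carries an exponent exceeding $n\,C_{\le v}(D)$ by at least $n\cdot\min_{u\in\mathrm{supp}(D)}D(u)$, and $C_{\le v}(D)\le 1-D(c)<1$ kills the boundary term), and the sign is read off from $\beta^{C_{\le v}(X)}-\beta^{C_{\le v}(Y)}$. Your pivotal level coincides with the paper's $w$, and your combinatorial restatement of $\CLL[c]$ via cumulative counts is a correct translation of the paper's padding construction. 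What your approach buys is a more systematic and uniform error analysis --- every subleading term is tagged with an explicit exponential rate, which makes the role of the covering condition transparent and would make it easy to extract an explicit threshold on $\Size[Z]$; what it costs is that the limiting leading coefficient is less immediately interpretable than the paper's $\frac{\beta(v'-w)}{1-\beta}$, which exhibits the mechanism (background mass strictly between $w$ and the next foreground level outweighing a hypothetical block at level $w$) in one line. The caveats you flag at the end --- finitely many interior terms, constants fixed once $X,Y,W,D$ are fixed, threshold on $n$ chosen afterwards --- are exactly the right ones, and they do close up as you indicate.
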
 

Critical level leximin has a number of extreme and implausible features; as the theorem suggests, these will often be displayed by $\GRD$ when there is a large background population.
For example, tiny benefits to worse-off individuals will often be preferred over astronomical benefits to even slightly better-off individuals; moreover, adding an individual to the population with anything less than the maximum welfare level in the background population will often make things worse overall.\footnote
{A toy example illustrates these phenomena, which are somewhat more general than the theorem entails. Suppose the background population consists of $N$ people at level $100$. Let $X$ consist of two people at level $99$; let $Y$ consist of one person at level $98$ and one at level $1000$; and let $Z$ consists of two people at level $99$ and one at $99.9$. We have $V\sGRD(X)-V\sGRD(Y)=\beta-\beta^2-900\beta^{N+2}$, which is positive if $N$ is large enough, in which case $X\succ\sGRD Y$, illustrating the first claim. On the other hand, $V\sGRD(X)-V\sGRD(Z)=0.1\beta^3-\beta^{N+3}$, again positive for $N$ large enough; then $X\succ\sGRD Z$, illustrating the second claim.}
In fact, according to $\CLL[c]$, it makes things worse to add one person slightly below the critical level along with any number of people above the critical level; because of this, $\GRD$ implies what we might call the `Snobbish Conclusion':
\begin{description}
	\item[Snobbish Conclusion]~\\ 
		Suppose $X$ consists of one person with an arbitrarily good life, at level $w$, and any number of people with even better lives. Then there is some possible background population $Z$, in which the average welfare is far worse than $w$, and in which the very best lives are only slightly better than $w$, such that $Z+X$ is worse than $Z$.
\end{description}	
This seems crazy to us. We could just about understand the Snobbish Conclusion in the context of an anti-natalist view, according to which adding lives \emph{invariably} has negative value; but, according to $\GRD$, there are many possible background populations $Z$ such that $Z+X$ would be better than $Z$. We could also understand the view that adding good lives can make things worse if it lowers average welfare or increases inequality (e.g.\@ as measured by mean absolute difference or standard deviation).
But, again, that's not what's going on here. Instead, $\GRD$ implies that adding excellent lives makes things worse if the number of even slightly better lives already in existence happens to be sufficiently great, regardless of the other facts about the distribution.  In the limiting case, it makes things so much worse that it cannot be compensated by adding any number of even better lives.

\section{Real-world background populations}
\label{section-RealBackgroundPopulation}

In the rest of the paper, we investigate the implications of the preceding results, and especially their practical implications for morally significant real-world choices. 
As we have seen, how closely a given non-additive axiology agrees with its additive counterpart in some real-world choice situation depends on the size of the population that can be treated as `background' in 
that choice situation. And \textit{what} that additive counterpart will be (i.e., which version of $\CL$ or $\PR$) depends on the average welfare of the background population, and perhaps on its entire welfare distribution. In this section, therefore, we consider the size and (to a lesser extent) the welfare of real-world background populations.

We note that nothing in this section (or the next two) shows conclusively that the background population is large {enough} for our limit results to be effective, but we do establish a \emph{prima facie} case for their relevance. In \S\ref{section-ExistentialCatastrophe}, we will seek firmer conclusions in a stylized case.

We have so far taken the separation between `background' and `foreground' populations as given, but it will now be helpful to make these notions more precise. Given a choice between populations $\Pop[X_1], \Pop[X_2], ... \Pop[X_n]$, the population $\Pop[Z]$ that can be treated as background with respect to that choice is defined by $\Pop[Z](w) = \min_i \Pop[X_i](w)$. That is, the background population 
consists of the minimum number of welfare subjects at each welfare level who are guaranteed to exist regardless of the agent's choice. For this $\Pop[Z]$ and for each $\Pop[X_i]$, there is then a population $\Pop[X_i^*]$ such that $\Pop[X_i] = \Pop[X_i^*] + Z$. The choice between $\Pop[X_1], \Pop[X_2], ... \Pop[X_n]$ can therefore be understood as a choice between the foreground populations $\Pop[X_1^*], \Pop[X_2^*], \ldots, \Pop[X_n^*]$, in the presence of background population $\Pop[Z]$. 

Clearly, this means that different real-world choices will involve different background populations. In particular, more consequential choices (that have far-reaching effects on the overall population) allow less of the population to be treated as background, whereas choices whose effects are tightly localized (or otherwise limited) 
may allow nearly the entire population to be treated as background. 
But we can also define a `shared' background population for some \textit{set} of choice situations, by 
considering 
all the overall populations that might be brought about by any \textit{profile} of choices in those situations. Thus we can speak, for instance, of the population that is `background' with respect to all the choices faced by present-day human agents, consisting of the minimum number of individuals at each welfare level that the overall population will contain whatever we all collectively do (perhaps simply equal to the number of individuals at each welfare level outside Earth's present future light cone).\footnote{Here and below, we assume a causal decision theory, which guarantees that causally inaccessible populations can be treated as `background'. 
How we can identify background populations, and how their practical significance changes, in the context of non-causal decision theories are interesting questions for future research.} 

\subsection{Population size}

Past welfare subjects on Earth constitute the most obvious component of real-world background populations. Estimates of the number of human beings who have ever lived are on the order of $10^{11}$ \citep{kaneda2011how}, of whom only $\sim 7 \times 10^{9}$ are alive today. But of course \textit{Homo sapiens} are not the only welfare subjects. At any given time in the recent past, for instance, there are also many billions of mammals, birds, and fish being raised by humans for meat and other agricultural products. And given their very high birth/death rates, past members of these populations greatly outnumber present members.

But since human agriculture is a relatively recent phenomenon, farmed animals make only a relatively small contribution to the total background population. Wild animals make a far greater contribution. There are today, conservatively, $10^{11}$ mammals living in the wild, along with similar or greater numbers of birds, reptiles, and amphibians, and a significantly larger number of fish---conservatively $10^{13}$, and possibly far more.\footnote{For useful surveys of evidence on present animal population sizes, see \cite{tomasik2019how} and \cite{bar2018biomass} (especially pp.\@ 61-4 and Table S1 in the supplementary appendix).} 
This is despite the significant decline in wild animal populations in recent centuries and millennia as a result of human encroachment.\footnote{For instance, \citeauthor{smil2013harvesting} (\citeyear{smil2013harvesting}, p.\@ 228) estimates that wild mammalian biomass has declined by 50\% in the period 1900--2000 alone.} Inferring the total number of past mammals, vertebrates, etc from the number alive at a given time requires us to make assumptions about population birth/death rates. Unfortunately, we have not been able to find data that allow us to estimate overall birth/death rates for the wild mammal or wild vertebrate populations as a whole with any confidence. So we will simply adopt what strikes us as a very safely conservative assumption of 0.1 births/deaths per individual per year in wild animal populations (roughly corresponding to an average individual lifespan of 10 years). 
The actual rates are almost certainly much higher (especially for vertebrates), implying larger total past populations.

Being extremely conservative, then, we might suppose that all and only mammals are welfare subjects and that $10^{11}$ mammals have been alive on Earth at any given time since the K-Pg boundary event (the extinction event that killed the dinosaurs, $\sim 66$ million years ago), with a population birth/death rate of 0.1 per individual per year. This gives us a background population of $\sim 6.6 \times 10^{17}$ individuals. Being a bit less conservative (though perhaps still objectionably conservative), we might suppose that all and only vertebrates are welfare subjects and that $10^{13}$ vertebrates have been alive on Earth at any time in the last 500 million years (since shortly after the Cambrian explosion), with the same population birth/death rate of 0.1 per individual per year. This gives us a background population of $\sim 5 \times 10^{20}$ individuals.\footnote{In the name of conservatism, we are setting aside various hypotheses that might generate much larger background populations. First, of course, even the restriction to vertebrates excludes potential welfare subjects like crustaceans and insects. 
Second, we Earthlings may not be the only welfare subjects. The observable 
universe contains roughly 2 trillion galaxies \citep{conselice2016evolution}, 
and 
the universe as a whole is likely to be many times larger \citep{vardanyan2011applications}. The universe could therefore contain many other biospheres like Earth's. It might also contain advanced, spacefaring civilizations, which could support enormous populations on the order of $10^{30}$ individuals or more \citep{bostrom2003astronomical,bostrom2011infinite}. 
So the extraterrestrial background population could be many---indeed, indefinitely many---orders of magnitude larger than the populations of past mammals or vertebrates on Earth.}

\subsection{Welfare}

Anything we say about the distribution of welfare levels in the background population will of course be enormously speculative. So although the question has important implications, we will limit ourselves to a few brief remarks.

With respect to average welfare in the background population, two hypo\-theses seem particularly plausible. 

\begin{description}
	\item[Hypothesis 1] The background population consists mainly of small animals (whether terrestrial or extraterrestrial). Most of these animals have short natural lifespans, 
	so the average welfare level of the background population is very close to zero. If the capacity for positive/negative welfare scales with brain size (or related features like cortical neuron count), this would reinforce the same conclusion. It seems likely that average welfare in these populations will be negative, at least on a hedonic view of welfare \citep{ng1995towards,horta2010debunking}. 
	These assumptions together would imply, for instance, that $\AU$, VV1 and VV2 converge to a version of $\CL$ with a slightly negative critical level (perhaps very similar in practice to $\TU$). 
	
	\item[Hypothesis 2] The background population mainly consists of the members of advanced alien civilizations. If, for instance, the average biosphere produces $10^{23}$ wild animals over its lifetime, but one in a million biospheres gives rise to an interstellar civilization that produces $10^{35}$ individuals on average over \textit{its} lifetime, then the denizens of these interstellar civilizations would greatly outnumber wild animals in the universe as a whole. Under this hypothesis, given the limits of our present knowledge, all bets are off: average welfare of the background population could be very high \cite[pp.\@ 235--9]{ord2020precipice}, 
	very low \citep{sotala2017superintelligence}, 
	or anything in between. 
	
\end{description}


With respect to the distribution of welfare more generally, we have even less to say. There is clearly a non-trivial degree of welfare inequality in the background population---compare, for instance, the lives of a well-cared-for pet dog and a factory-farmed layer hen. Self-reported welfare levels in the contemporary human population indicate substantial inequality (see for instance \cite{helliwell2019world}, Ch.\@ 2), and while \textit{contemporary} humans need not belong to the background population with respect to present-day choice situations, it seems safe to infer that there has been substantial welfare inequality in human populations in at least the recent past. 
For non-human animals, of course, we do not even have self-reports to rely on, and so any claims about the distribution of welfare are still more tentative. But there is, for instance, some literature on farm animal welfare that suggests significant inter-species welfare inequalities (e.g.\@ \citeauthor{norwood2011compassion} (\citeyear{norwood2011compassion}, pp.\@ 224--9), \cite{browning2020talk}). 

That said, it could still turn out that the background population is dominated by welfare subjects who lead fairly uniform lives---e.g., by small animals who almost always experience lifetime welfare slightly below 0, or by members of alien civilizations that converge reliably on some set of values, social organization, etc., that produce enormous numbers of individuals with near-equal welfare. 

\section{Objection 1: Causal domain restriction}
\label{section-CausalDomainRestriction}

We have shown that various non-additive axiologies converge to additive axiologies in the large-background-population limit.  But proponents of non-additive views might wish to avoid drawing practical conclusions from these results. After all, much of the point of being, say, an average 
utilitarian rather than a critical-level utilitarian is to reach the right 
practical conclusions in cases where $\AU$ seems more plausible than $\CL$.  
That point is defeated  if, in practice, $\AU$ is nearly  indistinguishable from $\CL$.

The simplest way to avoid the implications of our limit results is to claim that, for decision-making purposes, agents should simply ignore most or all of the background population. 
This idea can be spelled out in various ways, but it seems to us that the most principled and plausible precisification is a \textit{causal domain restriction} \citep{bostrom2011infinite}, according to which an agent should evaluate the potential outcomes of her actions by applying the correct axiology only to those populations that might exist \textit{in her causal future} (presumably, her future light cone).\footnote{A causal domain restriction 
might be motivated by the \textit{temporal value asymmetry}, our tendency to attach greater affective and evaluative weight to future events than to otherwise equivalent past events (\citeauthor{prior1959thank}, \citeyear{prior1959thank};	\citeauthor{parfit1984reasons}, \citeyear{parfit1984reasons}, Ch.\@ 8). It is sometimes claimed that this asymmetry characterizes only our self-regarding (and not our other-regarding) preferences (see e.g.\@ \citeauthor{parfit1984reasons}, \citeyear{parfit1984reasons}, p.\@ 181; \citeauthor{brink2011prospects}, \citeyear{brink2011prospects}, pp.\@ 378--9; \citeauthor{greene2015against}, \citeyear{greene2015against}, p.\@ 968; \citeauthor{dougherty2015future}, \citeyear{dougherty2015future}, p.\@ 3), but recent empirical studies appear to contradict this claim (\citeauthor{caruso2008wrinkle}, \citeyear{caruso2008wrinkle}; \citeauthor{greeneFChedonic}, forthcoming). However, though the temporal value asymmetry is a clear and robust \textit{psychological} phenomenon, it has proven notoriously difficult to come up with any normative \textit{justification} for asymmetric evaluation of past and future events (see for instance \cite{moller2002parfit}, \cite{hare2013time}).} 
Since background populations of the sort described in the last section will mostly lie outside an agent's future light cone, a causal domain restriction may drastically reduce the size of the population that can be treated as background, and hence the practical significance of 
our limit results.

Here are three replies to this suggestion. First, to adopt a causal domain restriction is to abandon a central and deeply appealing feature of consequentialism, namely, the idea that we have reason \textit{to make the world a better place}, from an impartial and universal point of view.
That some act would make the world a better place, \textit{full stop}, is a straightforward and compelling reason to do it. It is much harder to explain why the fact that an act would make \textit{your future light cone} a better place (e.g., by maximizing the average welfare of its population), while making the world as a whole worse, should count in its favor.%
\footnote{This point goes back to \citet{broad1914doctrine}; see \citet{carlson1995consequentialism} for a detailed discussion of this area.}


Second, the combination of a causal domain restriction with a non-sepa\-rable axiology can generate counterintuitive inconsistencies between agents (and agent-stages) located at different times and places, with resulting inefficiencies. As a simple example, suppose that $A$ and $B$ are both agents who evaluate their options using causal-domain-restricted average utilitarianism. At $t_1$, $A$ must choose between a population of one individual with welfare 0 who will live from $t_1$ to $t_2$ (population $\Pop[X]$) or a population of one individual with welfare $-1$ who will live from $t_2$ to $t_3$ (population $\Pop[Y]$). At $t_2$, $B$ must choose between a population of three individuals with welfare 5 (population $\Pop[Z]$) or a population of one individual with welfare 6 (population $\Pop[W]$), both of which will live from $t_2$ to $t_3$. If $A$ chooses $\Pop[X]$, then $B$ will choose $\Pop[W]$ (yielding an average welfare of 6 in $B$'s future light cone), but if $A$ chooses $\Pop[Y]$, then $B$ will choose $\Pop[Z]$ (since $\Pop[Y + Z]$ yields average welfare $3.5$ in $B$'s future light cone, while $\Pop[Y + W]$ yields only $2.5$). Since $A$ prefers $\Pop[Y + Z]$ to $\Pop[X + W]$ (which yield averages of $3.5$ and $3$ respectively in $A$'s future light cone), $A$ will choose $Y$. Thus we get $\Pop[Y + Z]$, even though $\Pop[X + Z]$ would have been better from both $A$'s and $B$'s perspectives.\footnote{One general lesson of this example is that, when a group of timelike-related agents or agent-stages accept the same causal-domain-restricted non-separable axiology, an earlier agent in the group will have an incentive (i.e., will pay some welfare cost) to push axiologically significant events forward in time, into the future light cones of later agents, so that their evaluations of their options will more closely agree with hers.} That two agents who accept exactly the same normative theory and have exactly the same, perfect information can find themselves in such pointless squabbles is surely an unwelcome 
feature of that normative theory, though we leave it to the reader to decide just how unwelcome.\footnote{
The argument is essentially due to \citet{Rabinowicz1989-RABAPD}; see also the cases of intertemporal conflict for future-biased average utilitarianism in \citet[pp.~118--9]{hurka1982more}.
	
Of course, cases like these also create potential time-inconsistencies for individual agents, as well as conflict between multiple agents. But these inconsistencies might be avoidable by standard tools of diachronic rationality like `resolute choice'.} 

Third, a causal domain restriction might not be enough to avoid the limit behaviors described in \S\S \ref{section-AveragistViews}--\ref{section-EgalitarianViews}, if there are large populations inside our future light cones that are background 
(at least, to a good approximation)
with respect to most real-world choice situations. For instance, it seems likely that most choices we face will have little effect on wild animal populations over the next 100 years. More precisely, our choices might be \textit{identity-affecting} with respect to many or most wild animals born in the next century (in the standard ways in which our choices are generally supposed to be identity-affecting with respect to most of the future population---see, e.g., \citeauthor{parfit1984reasons} (\citeyear{parfit1984reasons}, Ch.\@ 16)), 
but will have little if any affect on the \textit{number} of individuals at each welfare level in that population. And this alone supplies quite a large background population---perhaps $10^{13}$ mammals and $10^{16}$ vertebrates. 
Indeed, it is plausible that with respect to most choices (even comparatively major, impactful choices), the vast majority of the present and near-future \textit{human} population can be treated as background. For instance, if we are choosing between spending \$1 million on anti-malarial bednets or on efforts to mitigate long-term existential risks to human civilization, even the `short-termist' (bednet) intervention may have only  a comparatively tiny effect on the number of individuals at each welfare level in 
the present- and near-future human population, so that most of that population can be treated as background.\footnote{For further discussion of, and objections to, causal domain restrictions in the context of infinite ethics, see \cite{bostrom2011infinite} and \cite{arntzenius2014utilitarianism}.} 



\section{Objection 2: Counting some for less than one}
\label{section-CountingSomeLess}

Another way one might try to avoid the limit behaviors described in \S\S \ref{section-AveragistViews}--\ref{section-EgalitarianViews} is to claim that not all welfare subjects make the same contribution to the `size' of a population, as it should be measured for axiological purposes. 
Roughly speaking: although we should not deny \emph{tout court} that fish are welfare subjects, perhaps, when evaluating outcomes, a typical fish should effectively count as only (say) one tenth of a welfare subject, 
given its cognitive and physiological simplicity. 
If, in a typical choice situation, the background population is predominantly made up of such simple creatures, then it might be dramatically smaller (in the relevant sense) than it would first appear.%
\footnote{Thanks to \ifanon [redacted] \else Tomi Francis \fi and \ifanon [redacted], \else Toby Ord, \fi who each separately suggested this objection.}

A bit more formally, we can understand this strategy as assigning a real-valued \textit{axiological weight} to each individual in a population, and turning populations from integer-valued to real-valued functions, where $\Pop[X](w)$ now represents not the \textit{number} of welfare subjects in $\Pop[X]$ with welfare $w$, but the \textit{sum of the axiological weights} of all the welfare subjects in $\Pop[X]$ with welfare $w$. 
Axiological weights might be determined by factors like brain size, neuron count, lifespan, or by a combination of `spatial' and `temporal' factors (e.g., lifespan times neuron count). 
Weighting by lifespan seems particularly natural if we think that our ultimate objects of moral concern are \textit{stages}, rather than complete, temporally extended individuals. Weighting by brain size or neuron count may seem natural if we believe that, in some sense, morally significant properties like sentience `scale with' these measures of size.

Here are three replies to this suggestion:
First, of course, one might lodge straightforward ethical objections to axiological weights. They seem to contradict the ideals of impartiality and equal consideration that are often seen as central to ethics in general and axiology in particular (and for this reason, may be especially hard to reconcile with egalitarian views in axiology). It's also hard to imagine a plausible principle that assigns reduced axiological weight to non-human animals without also assigning reduced axiological weight to some humans, which many will find ethically unacceptable.

Second, the most natural measures by which we could assign 
axiological weights generate population size adjustments that, though large, still leave us with background populations significantly larger than the present human population.
For instance, suppose we stick with our conservative assumption that only 
mammals are welfare subjects, but also weight by cortical neuron count. 
And, very conservatively, let's take mice as representative of non-human 
mammals in general. Humans have roughly 2875 times as many cortical neurons 
as mice \citep[p.\@ 251]{roth2005evolution}. Normalizing our axiological weights so that present-day humans have an average weight of 1, this 
would mean that non-human mammals have an average weight of $3.48 \times 
10^{-4}$, which would cut our estimate of the size of the mammalian 
background population from $\sim 6.6 \times 10^{18}$ down to $\sim 2.3 
\times 10^{15}$. If we \textit{also} weight by lifespan, and generously 
assume that present-day humans have an average lifespan of 100 years, then 
the effective mammalian background population is reduced to $\sim 2.3 
\times 10^{13}$.\footnote{When we weight by lifespan, we can derive 
population size simply from the number of individuals alive at a 
time multiplied by time, without needing to make any assumptions about  birth or death rates.} 
Thus, even after making a host of conservative assumptions (only counting mammals as welfare subjects, taking a conservative estimate of the number of mammals alive at a time, ignoring times before the K-Pg boundary event, weighting by cortical neuron count and lifespan, and taking mice as a stand-in for all non-human mammals), we are still left with a background population more than three orders of magnitude larger than the present human population.

Third and finally, as we have already argued, even if we entirely ignore non-humans  
we may still find that background populations are large relative to foreground population in most present-day choice situations. To begin with, past humans outnumber present humans by more than an order of magnitude (as we saw in \S \ref{section-RealBackgroundPopulation}). And it seems plausible that the large majority even of the present and near-future human population is 
approximately
background 
in most choice situations (as we argued at the end of \S \ref{section-CausalDomainRestriction}). Thus, even if we \textit{both} severely deprioritize or ignore non-humans \textit{and} adopt a causal domain restriction, we might \textit{still} find that background populations are usually large relative to foreground populations.

\section{The value of avoiding existential catastrophe}
\label{section-ExistentialCatastrophe}


Taking stock: in \S \S \ref{section-AveragistViews}--\ref{section-EgalitarianViews}, we showed that various non-additive axiologies converge to additive axiologies in the presence of large enough background populations. In \S \ref{section-RealBackgroundPopulation}, we argued that the background populations in real-world choice situations are very large---at least, multiple orders of magnitude larger than the affectable portion of the present and near-future population. And in \S \S \ref{section-CausalDomainRestriction}--\ref{section-CountingSomeLess}, we resisted two strategies for deflating the size of real-world background populations.

If we are right about the size of real-world background populations, this provides a weak \textit{prima facie} reason to believe that our limit results are 
practically significant---i.e., that what is true in the limit will be true in practice, for the most plausible versions of the various families of non-additive axiologies we have considered. 
That is, the absolute and relative size of real-world background populations weakly suggests 
that we should expect plausible non-additive axiologies to agree closely with their additive counterparts in real-world choice situations. More generally, it suggests that even if we don't accept (additive) separability as a fundamental axiological principle, it may nevertheless be a useful heuristic for real-world decision-making purposes---i.e., that arguments in practical ethics that rely on separability assumptions are likely to be truth-preserving in practice. 





\subsection{Present welfare vs. future population size}

But we will focus on a particular issue in practical ethics where we can say something a bit more concrete and definite. As we suggested in \S \ref{section-Introduction}, perhaps the most important practical implication of our results concerns the importance of existential catastrophes---more specifically, the extent to which the potentially astronomical scale of the far future makes it astronomically important to avoid existential catastrophe. An `existential catastrophe', for our purposes, is any near-future event that would drastically reduce the future population size of human-originating civilization 
(e.g., human extinction).\footnote{This is a broader category of events than `premature human extinction'---for instance, an event that prevented humanity from ever settling the stars, while allowing us to survive for a very long time on Earth, could be an existential catastrophe in our sense. It is also importantly distinct from the usual concept of 
`existential catastrophe' in the philosophical literature, which is roughly `any event that would permanently curtail humanity's long-term potential for value' (see for instance \citeauthor{bostrom2013existential}, \citeyear{bostrom2013existential}, p.\@ 15; \citeauthor{ord2020precipice}, \citeyear{ord2020precipice}, p.\@ 37).}  To keep the discussion manageable, we will focus on $\AU$ and, secondarily, VV1/VV2. This lets us isolate 
the central relevant feature of insensitivity to scale (or asymptotic insensitivity to scale) in the absence of background populations, without the essentially orthogonal feature of inequality aversion.%
\footnote{For example, while totalist two-factor egalitarianism in not additive, it is relatively clear that it can give great value to avoiding existential catastrophe, since the value of a population scales with its size.}
We will also focus on the case where the future generations that will exist if we avoid existential catastrophe have higher average welfare than the background population, so that $\AU$ assigns positive value to avoiding existential 
catastrophe,  at least in the large-background-population limit.
(But much of what we say about the value of avoiding existential catastrophe on this assumption also applies, \textit{mutatis mutandis}, to the \textit{dis}value of avoiding existential catastrophe on the opposite assumption that the potential future population has lower average welfare than the background population.) 


The importance of avoiding existential catastrophe can be measured by comparing the value of avoiding existential catastrophe with the value of improving the welfare of the affectable pre-catastrophe population (which, for simplicity, we will hereafter call `the current generation'). 
We would like to know how the answer to this question depends on the welfare and (especially) the size of the background population.

To formalize the question, let $\Pop[C]$ represent the current generation as it will be if we 
prioritize its welfare at the expense of allowing an existential catastrophe. 
Let $\Pop[C']$ denote the current generation as it will be if we instead prioritize avoiding an existential catastrophe. Thus $\Avg[C] > \Avg[C']$, but we assume that $\Size[C] = \Size[C']$.  (This is mostly harmless:  it just means  that we designate as the members of $\Pop[C']$ the first $\Size[C]$ individuals in the affectable population in the world where we avoid existential catastrophe.) Let $\Pop[F]$ denote the future population that will exist only if we avoid existential catastrophe. And suppose there is a background population $\Pop[Z]$, which includes past terrestrial welfare subjects, perhaps distant aliens, and perhaps unaffectable present/future welfare subjects like wild animals.


In short, we consider a choice between $Z+C$ and $Z+C'+F$.
In terms of this choice, the importance of avoiding existential catastrophe can be made precise in several different ways.
We will consider the following three: 
\begin{description}
\item[{\it Maximum incurred cost.}] Holding fixed the average welfare $\Avg[C]$ of the current generation in the world where existential catastrophe occurs, 
what is the greatest reduction in welfare for the current generation that is worth accepting to avoid existential catastrophe? 
\item[{\it Maximum opportunity cost.}] Holding fixed the average welfare $\Avg[C']$ of the current generation in the world where existential catastrophe \textit{does not} occur, 
what is the greatest improvement in the welfare of the current generation that is worth forgoing to to avoid existential catastrophe? 
\item[{\it Value difference ratio.}] Holding fixed both $\Avg[C]$ and $\Avg[C']$, and thinking of $Z+C'$ as the status quo,   what is the ratio between the changes in value that would result from (i) avoiding existential catastrophe by adding $F$, versus (ii)~raising the welfare of the current generation from $\Avg[C']$ to $\Avg[C]$?
\end{description}

Broadly, we want to know how the presence of $\Pop[Z]$ affects these measures of importance. 
We know they depend, for one thing, on the size of $F$; we want particularly to know how this dependence is mediated by the size of $Z$.
In the extreme case, as $\Size[Z]\to\infty$, we know from our results in \S \ref{section-AveragistViews} that $\AU$, VV1, and VV2 all converge to $\CL_{\Avg[Z]}$. 
And according to $\CL_{\Avg[Z]}$, the value of adding $\Pop[F]$ to the population scales with $\Size[F]$, so that when $\Size[F]$ is astronomically large, the importance of avoiding existential catastrophe, by any of these measures, will be astronomically great. 
We should therefore expect, a bit roughly, that $\AU$ will give great importance to avoiding existential catastrophe when both $\Size[Z]$ and $\Size[F]$ are large, and more precisely that its measures of importance will agree with those of $\CL_{\Avg[Z]}$. The task is to say more about how this works at a qualitative level, and then (in \S\ref{subsection-illustration}) to give some indicative numerical results.

\subsection{Measure 1: Maximum incurred cost}

First, we hold fixed the welfare of the the current generation in the catastrophe world (where $\Pop[F]$ does not exist), and consider the greatest welfare cost we are willing to impose on the current generation to avoid catastrophe and thereby add $\Pop[F]$ to the population. 

According to the $\CL_{\Avg[Z]}$, the axiology to which $\AU$, VV1, and VV2 converge in the limit, this is simply the critical-level sum of welfare in $\Pop[F]$, given by $\Size[F](\Avg[F] - \Avg[Z])$. That is, when $\Tot(C) - \Tot(C') = \Size[F](\Avg[F] - \Avg[Z])$, $\CL$ is indifferent between $\Pop[Z+ C' + F]$ and $\Pop[Z+C]$.
%
According to $\AU$, analogously, the maximum cost we are willing to impose on the current generation is the cost at which $\Avg[Z+C' + F ] = \Avg[Z+C ]$. We solve for it, 
therefore, by rearranging this equation into an equation for $\Tot(C) - \Tot(C')$ 
in terms of $Z$, $C$, and $F$.%
\footnote{If we instead wanted to focus on the \textit{average} (per capita) welfare cost imposed on the current generation, we could just divide both sides of the following equation by $\Size[C]$.} This rearranged equation turns out to be:

\begin{equation}\label{eq:MaxIncurredCost}
\Tot(C) - \Tot(C') = \frac{\Size[Z]\Size[F](\Avg[F] - \Avg[Z]) + 
\Size[C]\Size[F]
(\Avg[F] - \Avg[C])}{\Size[Z + C]}.
\end{equation}




The key thing to notice about this equation is its surprising implication that the importance of avoiding existential catastrophe in the `maximum incurred cost' sense scales linearly with $\Size[F]$, with or without a background population. 
As we will see, this is not the case for the other two measures of the importance of avoiding existential catastrophe we consider. The right way to interpret this fact is as follows: if $\Avg[F] > \Avg[C]$ and $\Size[F] \gg \Size[C]$, then $\AU$ is willing to impose enormous costs on the current generation to enable the existence of $\Pop[F]$, since if $\Pop[F]$ exists, $\Pop[C']$ will be only a very small part of the resulting population and must have extremely low average welfare to reduce $\Avg[C' + F]$ below $\Avg[C]$. And on the other hand, if $\Avg[C] > \Avg[F]$ and $\Size[F] \gg \Size[C]$, then $\AU$ will require an enormous increase in the welfare of the current generation (i.e., that $\Avg[C'] \gg \Avg[C]$) to compensate for the reduction in average welfare created by $\Pop[F]$.

Nevertheless, even by this measure, the size of the background population makes a difference because it determines the `effective critical level' to which $\Avg[F]$ is compared---the average welfare level above which adding $\Pop[F]$ to the population has positive value, and below which it has negative value. When $\Size[C] \gg \Size[Z]$, the right-hand side of \eqref{eq:MaxIncurredCost} is approximately 
$\Size[F](\Avg[F] - \Avg[C])$;\footnote{
Formally, `if $a\gg b$ then $x$ is approximately $y$' means that $\lim_{a/b\to\infty} x/y=1$. In this case, the limit converges uniformly in $\Size[F]$. 
}
thus $\AU$ agrees closely with $\CL_{\Avg[C]}$ rather than $\CL_{\Avg[Z]}$ and is 
only
willing to impose any positive cost at all on the current generation to avoid existential catastrophe 
 when (with some approximation) $\Avg[F] > \Avg[C]$.
But when $\Size[Z] \gg \Size[C]$, the right-hand side of \eqref{eq:MaxIncurredCost} is approximately $\Size[F](\Avg[F] - \Avg[Z])$---i.e., the value given by $\CL_{\Avg[Z]}$. 
This shift could either increase or decrease the value of avoiding existential catastrophe (depending on whether $\Avg[Z]$ is greater than or less than $\Avg[C]$), and could reverse the sign of the value of avoiding existential catastrophe if $\Avg[F]$ is between $\Avg[Z]$ and $\Avg[C]$. 
Most notably for our purposes, 
the effective critical level will be closer to $\Avg[Z]$ than to $\Avg[C]$ if 
$\Size[Z] > \Size[C]$, 
and will be very close to $\Avg[Z]$ if $\Size[Z] \gg \Size[C]$ (since $\Size[Z]\Size[F](\Avg[F] - \Avg[Z])$ rather than $\Size[F]\Size[C](\Avg[F] - \Avg[C])$ will dominate the numerator in \eqref{eq:MaxIncurredCost}). 
So by this measure, $\AU$ closely agrees with its corresponding additive limit theory  as long as the background population is substantially larger than the current generation, i.e., $\Size[Z] \gg \Size[C]$.


\subsection{Measure 2: Maximum opportunity cost}

Now let's ask the converse question: holding fixed the welfare of the current generation in the world \textit{without} existential catastrophe
(i.e. holding fixed $\Avg[C']$)%
, how large a welfare \textit{gain} for the current generation should we be willing to \textit{forgo} to avoid existential catastrophe? 

Here again, $\CL$ gives the answer $\Size[F](\Avg[F] - \Avg[Z])$. 
To find $\AU$'s answer, we rearrange $\Avg[Z+C' + F ] = \Avg[Z+C]$ into an equation for $\Tot(C) - \Tot(C')$, this time 
in terms of $Z$, $F$, and $C'$.
This gives us:
\begin{equation}\label{eq:MaxOpportunityCost}
\Tot(C) - \Tot(C') = \frac{\Size[Z]\Size[F](\Avg[F] - \Avg[Z]) + \Size[C']\Size[F](\Avg[F] - \Avg[C'])}{\Size[Z + C' + F]}.    
\end{equation}

Now the size of the background population takes on greater significance. Consider three cases:

\begin{description}
\item[Case 1:] $\mathbf{\Size[F] \gg \Size[C'] \gg \Size[Z].}\:$ 
In this case, the 
right-hand side
of \eqref{eq:MaxOpportunityCost} is   approximately $\Size[C'](\Avg[F] - \Avg[C'])$, and the value of avoiding existential catastrophe as measured by maximum opportunity cost is therefore approximately independent of $\Size[F]$.\footnote{
Formally, a claim to the effect of `if $a\gg b\gg c$ then $x$ is approximately $y$' means that
$x/y\to 1$ as $a/b$ and $b/c\to \infty$; more precisely, for any $\epsilon>0$, there exists $n>0$ such that if both $a/b$ and $b/c$ are bigger than $n$, then  $x/y\in(1-\epsilon,1+\epsilon)$.}
\item[Case 2:] $\mathbf{\Size[F] \gg \Size[Z] \gg \Size[C'].}\:$ In this case, the right-hand side of \eqref{eq:MaxOpportunityCost} is approximately $\Size[Z](\Avg[F] - \Avg[Z])$. Thus the value of avoiding existential catastrophe as measured by maximum opportunity cost is approximately proportional to $\Size[Z]$, which may be astronomically large but is also (we are supposing) much less than $\Size[F]$. Note also that the effective critical level is now close to $\Avg[Z]$ rather than $\Avg[C']$ as in Case 1. 
\item[Case 3:] $\mathbf{\Size[Z] \gg \Size[F] \gg \Size[C'].}\:$ In this case, the right-hand side of \eqref{eq:MaxOpportunityCost} is approximately $\Size[F](\Avg[F] - \Avg[Z])$, in agreement with $\CL_{\Avg[Z]}$. Thus the value of avoiding existential catastrophe as measured by maximum opportunity cost is approximately proportional to $\Size[F]$, and will be astronomically large if $\Size[F]$ is astronomically large and $(\Avg[F] - \Avg[Z])$ is non-trivial.
\end{description}

While there are a number of points of interest in this analysis, the quick takeaway is that the maximum opportunity cost increases without bound as we increase \emph{both} $\Size[F]$ and $\Size[Z]$ (while holding all else equal)---a situation reflected in Cases 2 and 3 but not Case 1. So, qualitatively, arguments from astronomical scale can go through if we attend to the potentially astronomical scale of both the future population \emph{and} the background population.

\subsection{Measure 3: Value difference ratio}

Finally, we treat $Z+C'$ as a baseline, and ask whether it is better  to avoid existential catastrophe by adding $F$
or to improve $C'$ to $C$ . More precisely, we consider the ratio of the value of these improvements: 
\[
    R=\frac{V(\Pop[Z+C'+F ]) - V(\Pop[Z+C'])}{V(\Pop[Z+C ]) - V(\Pop[Z+C'])}.
\] 
According to $\CL_{\Avg[Z]}$, $R$ is equal to $\frac{\Size[F](\Avg[F] - \Avg[Z])}{\Size[C](\Avg[C] - \Avg[C'])}$. According to $\AU$, of course, $R$ is equal to $\frac{\Avg[Z+C'+F] - \Avg[Z+C']}{\Avg[Z+C] - \Avg[Z+C' ]}$. But again, we need to do some rearranging to make clear 
how this ratio is affected by the 
sizes of $\Pop[Z]$, $\Pop[C]$, and $\Pop[F]$. Specifically, 
in the case of $\AU$, the formula for $R$
rearranges to
\begin{equation}\label{eq:ValueRatio}
    \frac{1}{\Avg[C] - \Avg[C']} \left( \Avg[F]\frac{\Size[F]\Size[Z + C]}{\Size[C]\Size[Z + C + F]} - \Avg[C']\frac{\Size[F]}{\Size[Z + C + F]} - \Avg[Z]\frac{\Size[Z]\Size[F]}{\Size[C]\Size[Z + C + F]} \right).
\end{equation}

This expression is unattractive, 
but informative. 
Again, let's consider three cases:

\begin{description}
\item[Case 1:] $\mathbf{\Size[F] \gg \Size[C] \gg \Size[Z].}\:$ 
In this case, \eqref{eq:ValueRatio} 
is approximately $\frac{\Avg[F] - \Avg[C']}{\Avg[C] - \Avg[C']}$, and the importance of avoiding existential catastrophe by the value difference ratio measure is therefore approximately independent of $\Size[F]$.
\item[Case 2:] $\mathbf{\Size[F] \gg \Size[Z] \gg \Size[C].}\:$ In this case, \eqref{eq:ValueRatio} is  approximately $\frac{\Size[Z]}{\Size[C]} \times \frac{\Avg[F] - \Avg[Z]}{\Avg[C] - \Avg[C']}$. Thus the importance of avoiding existential catastrophe by the value difference ratio measure is approximately proportional to $\frac{\Size[Z]}{\Size[C]}$. 
And again, note that when $\Size[Z] \gg \Size[C]$, the effective critical level is close to $\sim \Avg[Z]$ rather than $\Avg[C']$.
\item[Case 3:] $\mathbf{\Size[Z] \gg \Size[F] \gg \Size[C].}\:$
In this case, \eqref{eq:ValueRatio} is approximately $\frac{\Size[F]}{\Size[C]} \times \frac{\Avg[F] - \Avg[Z]}{\Avg[C] - \Avg[C']}$, in agreement with $\CL_{\Avg[Z]}$. Thus the importance of avoiding existential catastrophe by the value difference ratio measure is now approximately proportional to $\frac{\Size[F]}{\Size[C]}$, and will be astronomically large if $\frac{\Size[F]}{\Size[C]}$ is astronomically large and $\frac{\Avg[F] - \Avg[Z]}{\Avg[C] - \Avg[C']}$ is non-trivial.
\end{description}

As with the maximum opportunity cost, the most basic qualitative point is that the value difference ratio $R$ increases without bound as we increase \emph{both} $\Size[F]$ and $\Size[Z]$. The fact that possible future and actual background populations are both likely to be extremely large suggests that the value difference ratio will be greater than $1$ (thus favouring extinction-avoidance) for a robust range of the other parameters. 

%
%
%

\subsection{Illustration}\label{subsection-illustration}


So far, our analysis has remained qualitative; we'll now put in some numbers, with the purpose of illustrating two things: first, the practical point that even $\AU$ will give great weight to avoiding existential catastrophes, for some reasonable and even conservative estimates of the background population and other parameters; second, the more theoretical point that $\AU$ converges to $\CL$ with high precision,  given these same estimates.


For the sizes of the foreground populations, let's suppose that $\Size[C] = \Size[C'] = 10^{10}$ (a realistic estimate of the size of the present and near-future human population) and $\Size[F] = 10^{17}$ (a fairly conservative estimate of the potential size of the future human population, if we avoid existential catastrophe, arrived at by assuming $10^{10}$ individuals per century for the next billion years). 
For $\Size[Z]$, we will consider three values: $\Size[Z] = 0$ (i.e., the absence of any background population), $\Size[Z] = 10^{13}$ (a rounding-down of our most conservative estimate of the number of past mammals, weighted by lifespan and cortical neuron count, from \S \ref{section-CountingSomeLess}), and $\Size[Z] = 10^{3}\times \Size[F] = 10^{20}$ (arrived at by assuming that the universe contains 1000 other advanced civilizations, of the same scale that our civilization will achieve if we avoid existential catastrophe).

In terms of average welfare, we have much less to go on. 
For simplicity let's assume that $\Avg[F] = 2$ 
(corresponding to very good but generally normal human lives)
and $\Avg[Z] = 0$ (plausible for the case where $\Pop[Z]$ consists mainly of wild animals, somewhat less plausible for the case where it consists mainly of the member of other advanced civilizations). And let's assume that $\Avg[C'] = 1$ (except when considering maximum incurred cost, where $\Avg[C']$ is a dependent variable) and $\Avg[C] = 1.5$ (except when considering maximum opportunity cost, where $\Avg[C]$ is a dependent variable).

\begin{table}\small
	\begin{center}
		\begin{tabular}{lllll}
			\toprule
		 	\textbf{Axiology} & $\mathbf{\Size[Z]}$ & \textbf{MIC} & \textbf{MOC} & \textbf{VDR} \\ \midrule
			$\AU$ & $\Size[Z] = 0$  & $5 \times 10^{16}$ & $\sim 10^{10}$ & $\sim 2$ \\ 
			$\AU$ & $\Size[Z] = 10^{13}$  & $\sim 1.9985 \times 10^{17}$ & $\sim 2.0008 \times 10^{13}$ & $\sim 4.0016 \times 10^3$  \\ 
			$\AU$ & $\Size[Z] = 10^{20}$ & 
			$\sim 2 \times 10^{17}$ & $\sim 1.998 \times 10^{17}$ & $\sim 3.996 \times 10^7$ \\
			$\CL$ & --- & $2 \times 10^{17}$ & $2 \times 10^{17}$ &  
			$4 \times 10^7$ \\
			\bottomrule
		\end{tabular} 
		\caption{The importance of avoiding existential catastrophe, as measured by maximum incurred cost (MIC), maximum opportunity cost (MOC), and value difference ratio (VDR), according to $\AU$ for different background population sizes and $\CL_{\Avg[Z]}$, with $\Avg[F] = 2$, $\Size[F] = 10^{17}$, $\Avg[C] = 1.5$, $\Avg[C'] = 1$, $\Size[C] = \Size[C'] = 10^{10}$, $\Avg[Z] = 0$, and $\Size[Z]$ as specified in each row.} 
		\label{tbl:existentialCatastrophe}
	\end{center}	
\end{table}

Table \ref{tbl:existentialCatastrophe} gives the importance of avoiding existential catastrophe according to $\AU$ and $\CL_{\Avg[Z]}$, under these assumptions, for all three measures of importance and all three background population sizes. In general, we see that with three- or four-order-of-magnitude differences in the population sizes of $\Pop[C]$, $\Pop[F]$, and $\Pop[Z]$, the approximations arrived at above are accurate to at least the third or fourth significant figure. And more specifically, in the case where $\Size[Z] \gg \Size[F] \gg \Size[C]$, $\AU$ agrees with $\CL_{\Avg[Z]}$ on all three measures to at least the fourth significant figure.

\subsection{Conclusions}
\label{section-ExistentialCatastropheConclusions}

In summary: when the background population is small or non-existent, the importance of avoiding existential catastrophe according to $\AU$ is approxi\-mate\-ly proportional to $\Avg[F] - \Avg[C']$ or $\Avg[F] - \Avg[C]$ (depending on which measure we consider), and approximately independent of population size,
and is therefore unlikely to be astronomically large. When the background population is much larger than the current generation, but still much smaller than the potential future population, the importance of avoiding existential catastrophe according to $\AU$ approximately scales with $\Size[Z]$, and may therefore be astronomically large, while still falling well short of its importance according to $\CL_{\Avg[Z]}$. 
Finally, if the background population is much larger even than the potential future population (as it would be, for instance, if it includes many advanced civilizations elsewhere in the universe), $\AU$ agrees closely with $\CL[{\Avg[Z]} ]$ about the importance of avoiding existential catastrophe, treating it as approximately linear in $\Size[F]$, by all three of the measures we considered.
The exception to this pattern is the `maximum incurred cost' measure, by which the importance of avoiding existential catastrophe scales with $\Size[F]$ regardless of the size of the background population.

In this very specific context, therefore, we can now say how large the background population needs to be for large-background-population limiting behavior to `kick in': $\AU$ closely approximates $\CL_{\Avg[Z]}$ in every respect we have considered only when $\Size[Z] \gg \Size[F]$ 
(or at any rate, only when $\Size[Z] > \Size[F]$). But it behaves in important ways like $\CL_{\Avg[Z]}$ as long as $\Size[Z] \gg \Size[C]$---both in that it is disposed to assign astronomical importance to avoiding existential catastrophes, 
and in that the effective critical level that determines whether that importance is positive or negative is approximately $\Avg[Z]$. This lends significance to our conclusion in \S \ref{section-RealBackgroundPopulation} that real-world background populations are much larger than the current generation (i.e., the affectable present and near-future population), whether or not they are large relative to the potential future population as a whole. The former fact alone is enough to have a significant effect on how $\AU$ evaluates existential catastrophes in practice. 

Our conclusions about $\AU$ also partially generalize to VV1 and VV2. In the case of VV1: for any two populations $\Pop[X]$ and $\Pop[Y]$, if $\Avg[X] > \Avg[Y]$, $\Size[X] \geq \Size[Y]$, and $\Avg[X] \geq 0$, then clearly any VV1 axiology will prefer $\Pop[X]$ to $\Pop[Y]$. For our purposes, this means that any VV1 axiology, so long as it assigns non-negative value to the non-catastrophe population $\Pop[Z+C' + F]$ (i.e., so long as $\Avg[Z+C' + F] \geq 0$), will prefer that population to the catastrophe population $\Pop[C + Z]$ whenever $\AU$ does. Analogously, in the case of VV2 (which, recall, applies an increasing transformation $f$ to the average welfare of a population): for any two populations $\Pop[X]$ and $\Pop[Y]$, if $\Avg[X] > \Avg[Y]$, $\Size[X] \geq \Size[Y]$, and $f(\Avg[X]) \geq 0$, then clearly any VV2 axiology will prefer $\Pop[X]$ to $\Pop[Y]$. For our purposes, this means that any VV2 axiology, so long as it assigns non-negative value to the non-catastrophe population $\Pop[Z+C' + F ]$ (i.e., so long as $f(\Avg[Z+C' + F ]) \geq 0$), will prefer that population to the catastrophe population $\Pop[Z + C]$ whenever $\AU$ does.

Putting these observations together, any VV axiology, as long as it assigns positive value to the non-catastrophe population, will prefer it to the catastrophe population whenever $\AU$ does. This means, among other things, that under this condition, the importance of avoiding existential catastrophe as measured by maximum incurred cost or maximum opportunity cost, will be at least as great according to VV as according to $\AU$.\footnote{Consider VV2, of which VV1 is a special case (where $f(\Avg[X]) = \Avg[X]$). If $f(\Avg[Z+C'+F + C' ]) = f(\Avg[Z+C ])$, and is positive, then $g(\Size[Z+C'+F ])f(\Avg[Z+C'+F]) > g(\Size[Z+C])f(\Avg[Z+C])$, since $g$ is increasing.
Thus, all else being equal, VV2 axiologies will require either a larger value of $\Avg[C]$ or a smaller value of $\Avg[C']$ to equalize the value of the populations, meaning that the maximum incurred cost/maximum opportunity cost that it will accept to avoid existential catastrophe is greater.

This does not necessarily mean that VV will converge with $\CL_{\Avg[Z]}$ faster than $\AU$, with respect to these measures, as the size of the background population increases. After all, $g$ may be arbitrarily close to linear up to arbitrarily large population sizes, allowing VV to remain in close agreement with $\TU$ rather than $\CL_{\Avg[Z]}$ for arbitrarily large populations. But it does mean that VV will converge with $\CL_{\Avg[Z]}$ faster than $\AU$ if it is converging from below.} With respect to value difference ratio, things are a bit more complicated: when $\Avg[Z+C'+F] \geq 0$ \textit{and} $\Avg[Z+C'+F] \geq \Avg[Z+C']$, VV1 is guaranteed to assign more importance than $\AU$ to avoiding existential catastrophe by this measure. But we cannot say anything analogous about VV2 in this case, since the transformation $f$ it applies to average welfare can be arbitrarily concave or convex.\footnote{If $\Avg[Z+C'+F]$ and $\Avg[Z+C'+F] - \Avg[Z+C']$ are both non-negative, then $\frac{g(\Size[Z+C+F])\Avg[Z+C'+F] - g(\Size[Z+C])\Avg[Z+C']}{g(\Size[Z+C ])\Avg[Z+C] - g(\Size[Z+C])\Avg[Z+C']} \geq \frac{\Avg[Z+C'+F] - \Avg[Z+C']}{\Avg[Z+C ] - \Avg[Z+C']}$. (Again, this means that VV1 will converge with $\CL_{\Avg[Z]}$ faster than $\AU$, with respect to the value difference ratio measure, if it is converging from below.) However, since VV2's $f$ need only be increasing, $\frac{f(\Avg[Z+C' + F]) - f(\Avg[Z+C'])}{f(\Avg[Z+C]) - f(\Avg[Z+C'])}$ can differ to an arbitrarily extreme degree from $\frac{\Avg[Z+C'+F] - \Avg[Z+C']}{\Avg[Z+C] - \Avg[Z+C']}$ (except when $\Avg[Z+C'+F] = \Avg[Z+C]$).}





A crucial limitation of our discussion, however, is that we have only considered the \textit{objective importance of existential catastrophes}, 
and not the prospective or decision-theoretic significance of existential \textit{risks} (i.e., risks of existential catastrophe). If we assume a straightforward expectational decision theory according to which average utilitarians, for instance, should simply maximize expected average welfare, 
then the astronomical decision-theoretic significance of existential risk would follow straightforwardly from the astronomical axiological significance of existential catastrophe in the `value difference ratio' sense (assuming, of course, that we can have non-negligible effects on the probability of existential catastrophe). We have sidestepped the question of risk, 
however, because there are good reasons to think that non-additive axiologists should be in the market for something other than this simple expectational theory of decision-making under risk\footnote{See for instance 
\citet[ch.~3]{thomas2016topics},
\citet[Prop.~4.8]{mccarthy2020utilitarianism},
\citeauthor{nebelFCrank} (forthcoming), \ifanon [redacted]\else \citeauthor{tarsneyMSaverage} (unpublished)\fi.}, and there is not yet any unproblematic or widely accepted alternative in the literature. 
We therefore leave the question of how $\AU$, VV1, VV2, and other non-additive axiologies evaluate existential risk in the presence of large background populations for future research.

\section{Other implications}
\label{section-PracticalImplcations}

We conclude 
by briefly surveying three other interesting implications of our limit results and, more generally, of the influence of background populations on the preferences of non-separable axiologies.

\subsection{Repugnant Addition}
\label{section-RepugnantAddition}

The Repugnant Conclusion, recall, is the conclusion (implied by TU among other axiologies) that for any positive welfare levels $l_1 < l_2$ and any number $n$, there is a population where everyone has welfare $l_1$ that is better than a population of $n$ individuals all with welfare $l_2$. One of the motivations for population axiologies with an `averagist' flavor (like AU, VV1, VV2, and QAA) is to avoid the Repugnant Conclusion. But the results in \S\S \ref{section-AveragistViews}--\ref{section-EgalitarianViews} imply that, although they avoid the Repugnant Conclusion as stated above, these views cannot avoid the closely related phenomenon of `Repugnant Addition': for any positive welfare levels $l_1 < l_2$ and any number $n$, if $\Pop[Y]$ consists of $n$ individuals all with welfare $l_2$, there is some population $\Pop[X]$ in which everyone has welfare $l_1$ and some population $\Pop[Z]$ such that $\Pop[X + Z]$ is better than $\Pop[Y + Z]$. As per the results in \S \ref{section-AveragistViews}, AU/VV1/VV2 support Repugnant Addition with respect to a large enough background population  $\Pop[Z]$ with $\Avg[Z] \leq 0$ (and indeed, when $\Avg[Z] < 0$, they support the much more repugnant conclusion that, for any population $\Pop[Y]$ in which everyone has positive welfare, there is a larger population $\Pop[X]$ in which everyone has negative welfare such that $\Pop[X + Z]$ is better than $\Pop[Y + Z]$).

The difficulty of avoiding Repugnant Addition has been noticed independently by \citeauthor{budolfsonMSwhy} (ms), who provide a thorough exploration of the phenomenon covering a broader range of axiologies than we have considered here. So rather than saying any more about this implication, we direct the reader to their results. 




\subsection{Infinite ethics}
\label{section-InfiniteEthics}

A long-standing and unresolved challenge for axiology is how to extend axiologies from finite to infinite contexts.\footnote{For surveys of the difficulties of infinite axiology, see for instance \cite{asheim2010intergenerational}, \cite{bostrom2011infinite}, and Ch.\@ 1 of \cite{askell2018pareto}.} Most of the extant proposals for ranking infinite worlds, in both philosophy and economics, aim to extend total utilitarianism.\footnote{See, for instance, \cite{atsumi1965neoclassical}, \cite{diamond1965evaluation}, \cite{von1965existence}, \cite{vallentyne1997infinite}, \cite{lauwers2004infinite}, \cite{bostrom2011infinite}, \cite{arntzenius2014utilitarianism}, \cite{jonsson2018limit}, \citeauthor{wilkinsonFCinfinite} (forthcoming), and \citeauthor{clarkMSinfinite} (ms), among many others.} However, these proposals can easily be adapted to extend other additive axiologies. For instance, a simple extension of total utilitarianism (suggested in \cite{lauwers2004infinite}) simply compares any two populations by summing the differences in welfare between the two populations for each individual, treating an individual who doesn't exist in a population as having welfare 0.\footnote{Formally, $\Pop[X] \succcurlyeq \Pop[Y]$ if and only if $\sum{p_i \in X \cup Y} w_\mathbf{x}(p_i) - w_\mathbf{y}(p_i)$ converges unconditionally to a value $\geq 0$, where for any $p_i \not \in X$, $w_\mathbf{x}(p_i) = 0$ (and likewise for $\Pop[Y]$).} This axiological criterion can easily be adapted to a critical-level or prioritarian theory by simply replacing welfare with some increasing function of welfare.

It is much less clear, however, how to extend non-additive theories to the infinite context, and there has so far been little if any discussion of this question. Our limit results, however, suggest a partial answer: when comparing two infinite populations, at least when these populations differ only finitely, we are quite literally in (and not merely approaching) the large-background-population limit. So it is natural to think that a non-additive axiology $\mathcal{A}$ that has an additive counterpart $\mathcal{A}'$ 
should agree exactly with that additive counterpart in the infinite context. For instance, if we are average utilitarians and we live in an infinite world, but we can only affect a finite part of that infinite world, then we should simply compare the possible outcomes of our choices by the appropriate infinite generalization of critical-level utilitarianism, where the critical level is the average welfare level in the background population.

This suggestion is well-defined 
only if we have a well-defined notion of \textit{relative frequency} for infinite worlds---specifically, the relative frequency of different welfare levels in an infinite population, which lets us make sense of further notions like a \textit{welfare distribution} and \textit{average welfare}. A natural suggestion here (advocated, for instance, by \cite{knobe2006philosophical}) is to use the \textit{limiting} relative frequency in uniformly expanding spatiotemporal regions, providing that this limit exists and is the same for all starting locations. There is plenty of debate to be had about this proposal, but this is not the place for that debate. At any rate, it seems plausible (though far from indisputable) that there should be \textit{some} way of making sense of the relative frequencies of particular welfare levels in an infinite population. 

\subsection{Opportunities for manipulation}

The results in \S\S \ref{section-AveragistViews}--\ref{section-EgalitarianViews} have one other interesting implication: they suggest a way in which agents who accept non-separable axiologies can be \textit{manipulated}. Suppose, for instance, that we in the Milky Way are all average utilitarians, while the inhabitants of the Andromeda Galaxy are all total utilitarians. And suppose that, the distance between the galaxies being what it is, we can communicate with each other but cannot otherwise interact. Being total utilitarians, the Andromedans would prefer that we act in ways that maximize total welfare in the Milky Way. To bring that about, they might create an enormous number of welfare subjects with welfare very close to zero---for instance, breeding quintillions of very small, short-lived animals with mostly bland experiences---and send us proof that they have done so. We in the Milky Way would then make all our choices under the awareness of a large background population whose average welfare is close to zero. If they could create for us a large enough background population with average welfare sufficiently close to zero, the Andromedans could move us arbitrarily close to \textit{de facto} total utilitarianism.

It's not obvious whether such a strategy would be efficient, but it might be, if creating small, short-lived welfare subjects with bland experiences (and transmitting the necessary proof of their existence) is sufficiently cheap. Since the cost of creating a welfare subject with welfare $x$ presumably increases with $|x|$ (and plausibly increases at a super-linear rate), it might well make sense for the Andromedans to devote some of their resources to this manipulation strategy rather than spending all their resources directly on creating welfare subjects with high welfare.


As the preceding results demonstrate, this kind of manipulability is not unique to average utilitarians, but applies also to agents who accept variable-value or non-separable egalitarian views.\footnote{But manipulating egalitarians may be more expensive, if it requires creating beings with a wide distribution of welfare levels. Likewise, agents who accept a critical-level view other than TU may find it more expensive to manipulate in this way, since they may need to create welfare subjects at or near what they regard as the critical level---unless, for instance, creating welfare subjects with welfare close to zero can reduce the average welfare of a pre-existing background population toward that critical level.} Moreover, the potential for manipulation is not symmetric: since the Andromedans accept a separable axiology, what they choose to do in their galaxy will not be affected by their beliefs about what we are doing in ours (except in the ordinary ways, involving potential causal interactions between our galaxies). 

Diverting though these speculations might be, the real-world opportunities for this sort of axiological manipulation may be quite limited. Setting aside the likelihood of nearby planets or galaxies being monopolized by proponents of rival axiologies, if there is a large enough pre-existing background population in the universe as a whole (say, outside the region accessible either to us or to the Andromedans), then it may be very hard for the Andromedans to have any significant impact on the size or welfare distribution of the background population. This might be welcome news for them: if the average welfare of the background population is already close to zero, then they will get what they want from us averagists, without having to work for it. But if the average welfare in the background population is non-zero, then we may not behave quite as the Andromedans would most prefer.

This illustrates a general point: the preceding arguments are not necessarily good news for total utilitarians, or for proponents of any other separable axiology in particular. In the presence of large background populations, non-separable axiologies can converge with a wide range of separable counterparts, which disagree among themselves about how to rank populations and how to act for the best. So although large background populations generate \textit{some} convergence among axiologies on particular practical conclusions, axiological disputes remain practically significant.

\section{Conclusion}
\label{section-Conclusion}

We have shown that, in the presence of large enough background populations, a range of non-additive axiologies asymptotically agree with some counterpart additive axiology (either critical-level or, more broadly, prioritarian). And we have argued that the real-world background population is large enough to make these limit results practically relevant. The most notable implication of these arguments is that `arguments from astronomical scale'---in particular, for the overwhelming importance of 
existential catastrophes---need not depend on an assumption of axiological separability.

We have left many questions unanswered that might be valuable topics of future research: (1) a more careful characterization of the size and welfare distribution of the real-world background population; (2) the significance of risk/uncertainty, particularly with respect to these characteristics of the background population; (3) the behavior of a wider range of non-additive axiologies (e.g.\@ incomplete, intransitive, or person-affecting) in the large-background-population limit; and (4) exploring more generally the question of how large the background population needs to be for the limit results to `kick in', for a wider range of axiologies and choice situations than we considered in \S \ref{section-ExistentialCatastrophe}.





\appendix

\section{Results}

Recall that $\WW$ is the set of welfare levels, and $\PP$ consists of all non-zero, finitely supported functions $\WW\to\ZZZ_+$. By a \emph{type} of populations we mean a set $T\subset \WW$ that contains populations of 
arbitrarily large size:  for all $n \in \NNN$ there exists $X\in T$ with $\Size[X]\geq n$.

The following result, while elementary, indicates our general method.
\begin{lemma}\label{l:gen}
	 Suppose given $V\colon\PP\to \RRR$ and a positive function $s\colon \NNN\to\RRR$. 
	 Define 
	\[
		V\marginal(X)\coloneqq \lim_{\Size[Z]\to\infty} \bigl(V(X+Z)-V(Z)\bigr)s(\Size[Z])
	\]
as $Z$ ranges over populations of some type $T$. 
%
If the axiology with value function $V\marginal$ is separable, then the axiology with value function $V$ converges to it, relative to background populations of type $T$.
\end{lemma}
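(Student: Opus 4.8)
The plan is to unwind the two definitions involved and reduce the statement to an elementary fact about strictly separated limits. Write $\Ax$ for the axiology represented by $V$ and $\Ax'$ for the axiology represented by $V\marginal$ (which is well defined precisely because the displayed limits are assumed to exist). Since $\Ax'$ is separable by hypothesis, the convergence condition I must verify may be taken in its simplified form: for any $X$ and $Y$, if $X\succ_{\Ax'}Y$ and $Z$ is a sufficiently large population of type $T$, then $X+Z\succ_\Ax Y+Z$. Here the antecedent $X\succ_{\Ax'}Y$ is just $V\marginal(X)>V\marginal(Y)$, and the consequent $X+Z\succ_\Ax Y+Z$ is just $V(X+Z)>V(Y+Z)$. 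So everything reduces to the claim: if $V\marginal(X)>V\marginal(Y)$, then $V(X+Z)>V(Y+Z)$ for all sufficiently large $Z$ of type $T$.

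First I would fix such $X,Y$ and set $\epsilon\coloneqq\tfrac12\bigl(V\marginal(X)-V\marginal(Y)\bigr)>0$. By definition, $V\marginal(X)$ and $V\marginal(Y)$ are the limits, as $\Size[Z]\to\infty$ over $T$, of the sequences $\bigl(V(X+Z)-V(Z)\bigr)s(\Size[Z])$ and $\bigl(V(Y+Z)-V(Z)\bigr)s(\Size[Z])$ respectively. Hence there is a threshold beyond which both sequences lie within $\epsilon$ of their limits; for every $Z\in T$ whose size exceeds this threshold the first quantity lies strictly above the common midpoint $\tfrac12\bigl(V\marginal(X)+V\marginal(Y)\bigr)$ while the second lies strictly below it, so that
\[
\bigl(V(X+Z)-V(Z)\bigr)s(\Size[Z]) > \bigl(V(Y+Z)-V(Z)\bigr)s(\Size[Z]).
\]

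Finally, because $s$ is positive I may divide both sides by $s(\Size[Z])>0$ and then cancel the common term $V(Z)$, obtaining $V(X+Z)>V(Y+Z)$, i.e.\ $X+Z\succ_\Ax Y+Z$, which is exactly what convergence requires. I do not expect a genuine obstacle here: the mathematical content is simply that two convergent sequences with strictly ordered limits are eventually ordered the same way, and positivity of $s$ serves only to preserve the inequality under division. The one step that must be handled with care is the appeal to separability of $\Ax'$, since it is this that lets me replace the potentially $Z$-dependent antecedent $X+Z\succ_{\Ax'}Y+Z$ by the clean, $Z$-independent strict inequality $V\marginal(X)>V\marginal(Y)$ that drives the limit argument.
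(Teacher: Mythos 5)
Your proof is correct and follows essentially the same route as the paper's: both use separability of the limiting axiology to reduce the convergence condition to the $Z$-independent inequality $V\marginal(X)>V\marginal(Y)$, then apply the standard fact that convergent sequences with strictly ordered limits are eventually ordered, and finally divide by the positive factor $s(\Size[Z])$ and cancel $V(Z)$. You merely spell out the $\epsilon$-threshold step that the paper leaves implicit.
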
 
\noindent

\begin{proof}
Let $Z$ be a background population of type $T$.
Suppose that $V\marginal(X+Z)>V\marginal(Y+Z)$. Given that the corresponding axiology is separable, we must have $V\marginal(X)>V\marginal(Y)$. 
Then, if $\Size[Z]$ is large enough,  
	 \[
		 \bigl(V(X+Z) - V(Z)\bigr){s(\Size[Z])} > \bigl(V(Y+Z) - V(Z)\bigr){s(\Size[Z])},
	 \]
	 whence, rearranging, $V(X+Z) > V(Y+Z)$. 
\end{proof} 

\AveragismThm*

\begin{proof}
	 In this case, a brief calculation shows
	\begin{equation}\label{eq:avgdif}
		 V\sAU(X+Z)-V\sAU(Z)
		 =\frac{(\Avg[X]-\Avg[Z])\Size[X]}{\Size[X]+\Size[Z]}
		 =\frac{V\sCL[c](X)}{\Size[X]+\Size[Z]}.
	\end{equation}
	Setting $s(n)=n$ we find 
		$V\sAU\marginal(X)=V\sCL[c](X)$, 
	in the notation of Lemma \ref{l:gen}. That Lemma then yields the first statement. 
	
	We now verify the more precise second statement directly. Suppose $\Avg[Z]=c$, that \eqref{eq:avg1} holds, and that $V\sCL[c](X)>V\sCL[c](Y)$.  
    We have to show $V\sAU(X+Z)>V\sAU(Y+Z)$. 
    Using \eqref{eq:avgdif}, that desired conclusion is equivalent to 
\[
    \frac{V\sCL[c](X)}{\Size[X]+\Size[Z]}>
    \frac{V\sCL[c](Y)}{\Size[Y]+\Size[Z]}.
\]
Cross-multiplying, this is equivalent to
\[
    {V\sCL[c](X)}(\Size[Y]+\Size[Z])>
    {V\sCL[c](Y)}(\Size[X]+\Size[Z])
\]
or, rearranging,
	 \begin{equation}\label{eq:avg3}
		\Size[Z](V\sCL[{c}](X)-V\sCL[{c}](Y)) 
		> \Size[X]V\sCL[{c}](Y)-\Size[Y]V\sCL[{c}](X). 
	\end{equation}
	Given that   $V\sCL[{c}](X)-V\sCL[{c}](Y)>0$,
	the desired conclusion \eqref{eq:avg3} follows from \eqref{eq:avg1}. 
\end{proof} 

\VVThm*

\begin{proof}
	Suppose the variable value view has a value function of the form 
	$V(X)=f(\Avg[X])g(\Size[X])$. Then
	\begin{equation*}
	\begin{aligned} 
		V(X+Z)-V(Z) = {}&{} f(\Avg[X+Z])g(\Size[X]+\Size[Z]) -f(\Avg[Z])g(\Size[Z])\\
		= {}&{} f(\Avg[X+Z])\bigl(g(\Size[X]+\Size[Z])-g(\Size[Z])\bigr)\\
		&\qquad
		  + \bigl( f(\Avg[X+Z])-f(\Avg[Z]) \bigr) g(\Size[Z]).
	\end{aligned}
	\end{equation*}
We now apply two lemmas, proved below.
	\begin{lemma}\label{L:g}
		We have $\bigl(g(\Size[X+Z])-g(\Size[Z])\bigr)\Size[Z]\to 0$ as $\Size[Z]\to\infty$.
	\end{lemma}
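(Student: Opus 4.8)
The plan is to reduce the claim to a standard fact about convergent series with monotone terms. First I would fix the foreground population $X$ and abbreviate $m \coloneqq \Size[X]$, a fixed positive integer, together with $n \coloneqq \Size[Z]$, so that $\Size[X+Z] = n+m$ and the quantity to be controlled is $\bigl(g(n+m)-g(n)\bigr)n$. Since $g$ is increasing and bounded above it has a finite limit, so $g(n+m)-g(n)\to 0$ automatically; the real content is that this difference decays faster than $1/n$.

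Next I would introduce the first differences $a_k \coloneqq g(k+1)-g(k)$. Because $g$ is increasing we have $a_k\geq 0$, and because $g$ is concave (which for a function on $\ZZZ_+$ means precisely that its first differences are non-increasing) the sequence $(a_k)$ is non-increasing. Telescoping gives $\sum_{k=0}^{N-1}a_k = g(N)-g(0)$, which stays bounded as $N\to\infty$ since $g$ is bounded above; hence the non-negative series $\sum_k a_k$ converges. Writing $g(n+m)-g(n)=\sum_{k=n}^{n+m-1}a_k \leq m\,a_n$ (using $a_k\leq a_n$ for $k\geq n$), it suffices to prove that $n\,a_n\to 0$.

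The main step, and the one I expect to be the crux, is therefore the elementary lemma that a non-increasing, non-negative sequence with convergent sum satisfies $n a_n\to 0$. I would establish this by a block estimate: by monotonicity, $\sum_{k=\lceil n/2\rceil}^{n} a_k \geq \tfrac{n}{2}\,a_n$, while the left-hand side is bounded by the tail $\sum_{k\geq \lceil n/2\rceil} a_k$, which tends to $0$ as $n\to\infty$ by the Cauchy criterion for the convergent series $\sum_k a_k$. Hence $\tfrac{n}{2} a_n\to 0$, i.e.\ $n a_n\to 0$, and consequently $\bigl(g(n+m)-g(n)\bigr)n \leq m\,n\,a_n \to 0$, which is exactly the assertion. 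The only subtlety worth flagging explicitly is that the three hypotheses on $g$ feed in cleanly to the three inputs of the argument: \emph{increasing} yields $a_k\geq 0$, \emph{concave} yields that $(a_k)$ is non-increasing, and \emph{bounded above} yields summability of $(a_k)$.
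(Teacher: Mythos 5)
Your proof is correct, and it is essentially the paper's argument in different clothing: the paper compares the slope of $g$ over $[\,\Size[Z],\Size[X+Z]\,]$ with its slope over $[\lceil\Size[Z]/2\rceil,\Size[Z]]$ directly via concavity and lets boundedness kill the latter, which is exactly your block estimate $\tfrac{n}{2}a_n\leq\sum_{k=\lceil n/2\rceil}^{n}a_k\to 0$ phrased in terms of first differences. Routing the crux through the standard fact that a non-increasing, non-negative, summable sequence satisfies $n\,a_n\to 0$ is a clean repackaging, but the underlying estimate and the use of the three hypotheses on $g$ are the same.
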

	\begin{lemma}\label{L:f}
		We have $\bigl(f(\Avg[X+Z])-f(\Avg[Z])\bigr)\Size[Z]\to 
	f'(c)V\sCL[c](X)$ 
		 as $\Size[Z]\to\infty$ with $\Avg[Z]=c$.
	\end{lemma}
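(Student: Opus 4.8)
The plan is to reduce the claim to the first-order behaviour of $f$ at the single point $c$. First I would make the displacement of the average explicit. Since $\Avg[Z]=c$ we have $\Tot(Z)=c\,\Size[Z]$, so
\[
\Avg[X+Z]-c=\frac{\Tot(X)+c\,\Size[Z]}{\Size[X]+\Size[Z]}-c=\frac{\Tot(X)-c\,\Size[X]}{\Size[X]+\Size[Z]}=\frac{V\sCL[c](X)}{\Size[X]+\Size[Z]},
\]
where the last equality is simply the definition of $\CL[c]$. Writing $h\coloneqq\Avg[X+Z]-c$, this shows both that $h\to 0$ as $\Size[Z]\to\infty$ and that
\[
h\,\Size[Z]=\frac{\Size[Z]}{\Size[X]+\Size[Z]}\,V\sCL[c](X)\longrightarrow V\sCL[c](X).
\]

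Next I would invoke differentiability of $f$ at $c$, writing $f(c+h)-f(c)=f'(c)\,h+r(h)$ with $r(h)/h\to 0$ as $h\to 0$. Since $\Avg[Z]=c$ gives $f(\Avg[Z])=f(c)$ and $\Avg[X+Z]=c+h$, multiplying through by $\Size[Z]$ yields
\[
\bigl(f(\Avg[X+Z])-f(\Avg[Z])\bigr)\Size[Z]=f'(c)\,\bigl(h\,\Size[Z]\bigr)+r(h)\,\Size[Z].
\]
The first summand converges to $f'(c)\,V\sCL[c](X)$ by the computation above. For the second, when $V\sCL[c](X)\neq 0$ we have $r(h)\,\Size[Z]=\bigl(r(h)/h\bigr)\bigl(h\,\Size[Z]\bigr)$, a product of a factor tending to $0$ and a bounded factor, so it vanishes; the degenerate case $V\sCL[c](X)=0$ forces $h=0$ and hence $r(h)=0$ for every $Z$, so the term is identically zero. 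Either way the left-hand side tends to $f'(c)\,V\sCL[c](X)$, which is the claim.

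I expect no serious obstacle here: the argument is essentially a one-term Taylor expansion. The only two things that must be got right are (i) recognising the algebraic identity $\Avg[X+Z]-c=V\sCL[c](X)/(\Size[X]+\Size[Z])$, which is exactly what makes the critical-level value function emerge, and (ii) verifying that the remainder, once amplified by the diverging factor $\Size[Z]$, still tends to zero---which works precisely because $h\,\Size[Z]$ stays bounded as $h\to 0$. It is worth flagging that this limit equals $f'(c)\,V\sCL[c](X)$ irrespective of the sign of $f'(c)$; establishing genuine convergence to $\CL[c]$ in Theorem~\ref{t:VV} additionally requires $f'(c)>0$, so that the marginal value function produced by Lemma~\ref{l:gen} (combining this lemma, Lemma~\ref{L:g}, and the positive limit of $g$) is a positive multiple of $V\sCL[c]$ and hence ordinally equivalent to it.
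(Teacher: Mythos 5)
Your proof is correct and follows essentially the same route as the paper's: both hinge on the identity $\Avg[X+Z]-\Avg[Z]=V\sCL[c](X)/(\Size[X]+\Size[Z])$, treat the degenerate case $V\sCL[c](X)=0$ separately, and then apply first-order differentiability of $f$ at $c$ (the paper via the difference quotient converging to $f'(c)$, you via the equivalent Taylor-plus-remainder formulation). The only cosmetic difference is that the paper first obtains the limit of the difference multiplied by $\Size[X]+\Size[Z]$ and then discards the $\Size[X]$ contribution, whereas you multiply by $\Size[Z]$ directly.
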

	\noindent
	Since $f(\Avg[X+Z])\to f(c)$, and $g(\Size[Z])$ approaches some upper bound $L$ as 
	$\Size[Z]\to \infty$, we find
	\[
		\lim_{\Size[Z]\to\infty}\bigl(V(X+Z)-V(Z)\bigr)\Size[Z] = 
		 f'(c)V\sCL[{c}](X) L
	\]
	as $Z$ ranges over populations with $\Avg[Z]=c$. 
	Let $s(n)=\frac{n}{f'(c)L}$. Then we have found
	\[
		\lim_{\Size[Z]\to\infty}\bigl(V(X+Z)-V(Z)\bigr)s(\Size[Z]) =  V\sCL[{c}](X).
	\]
    The result now follows from Lemma \ref{l:gen}. 
\end{proof}

\begin{proof}[Proof of Lemma \ref{L:g}] 
	Let $z$ be the result of rounding $\Size[Z]/2$ up to the nearest integer. By increasingness and concavity of $g$, we have\footnote{The general fact being used about concavity is that, if $x>y>z$, then
	$\frac{g(x)-g(y)}{x-y}\leq 
	\frac{g(y)-g(z)}{y-z}$.
	} 
	\[
		0\leq \frac{g(\Size[X+Z])-g(\Size[Z])}{\Size[X]}
		\leq \frac{g(\Size[Z])-g(z)}{\Size[Z]-z}
		\leq \frac{g(\Size[Z])-g(z)}{\Size[Z]/2}.
	\]
	Cross-multiplying,
	\[
	0\leq {\bigl(g(\Size[X+Z])-g(\Size[Z])\bigr)}{\Size[Z]}
		\leq {2\bigl(g(\Size[Z])-g(z)\bigr)}{\Size[X]}.
	\]
	Since $g(\Size[Z])$ and $g(z)$ both tend to a common limit $L$ as $\Size[Z]\to\infty$, we find that the right-hand side tends to $0$ in that limit. Therefore the expression in the middle also tends to $0$.
\end{proof}

\begin{proof}[Proof of Lemma \ref{L:f}]
	First, if $\Avg[X]=c$ then $f(\Avg[X+Z])-f(\Avg[Z])=0$ and $V\sCL[c](X)=0$, so the result is trivial in that case. Otherwise, since $\Avg[X+Z]$ tends toward $c$ as $\Size[Z]\to\infty$, we have (by the definition of the derivative)
	\[
		\frac{f(\Avg[X+Z])-f(\Avg[Z])}{\Avg[X+Z]-\Avg[Z]}\to f'(c).
	\]
	We have, from \eqref{eq:avgdif},
	\[
		\Avg[X+Z]-\Avg[Z]=\frac{V\sCL[c](X)}{\Size[X]+\Size[Z]}.
	\]
	Inserting this into the preceding formula, we find 
	\[
		(f(\Avg[X+Z])-f(\Avg[Z]))(\Size[X]+\Size[Z])\to f'(c) V\sCL[c](X). 	
	\]
	Since $(f(\Avg[X+Z])-f(\Avg[Z]))\Size[X]\to 0$, we obtain the desired result.  	
\end{proof}

\begin{prop}\label{p:VV1mix}
For any populations $X$ and $Y$, if 
$X\succ\sTU Y$ and
$X\succ\sAU Y$, then $X\succ\sVV{1}Y$.
\end{prop}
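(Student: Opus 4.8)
The plan is to exploit the fact that $V\sVV1(X)=\Avg[X]\,g(\Size[X])$ sits ``between'' $V\sAU(X)=\Avg[X]$ (recovered when $g$ is constant) and $V\sTU(X)=\Avg[X]\Size[X]$ (recovered when $g(k)=k$), and to make this precise by decomposing the concave bounded $g$ into elementary pieces of the form $k\mapsto\min(k,j)$. Throughout I would write $u=\Avg[X]$, $v=\Avg[Y]$, $m=\Size[X]$, $n=\Size[Y]$, so that the hypotheses $X\succ\sTU Y$ and $X\succ\sAU Y$ read $um>vn$ and $u>v$, and the goal becomes $u\,g(m)>v\,g(n)$.

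First I would decompose $g$. Since $g\colon\ZZZ_+\to\RRR_+$ is increasing, concave, and bounded above, its forward differences $d_k\coloneqq g(k+1)-g(k)$ are nonnegative, non-increasing in $k$, and summable (boundedness forces $d_k\to 0$ and $\sum_k d_k=\lim g-g(0)<\infty$). Setting $c_j\coloneqq d_{j-1}-d_j\geq 0$ for $j\geq 1$, a telescoping computation gives the representation
\[
	g(k)=g(0)+\sum_{j\geq 1} c_j\,\min(k,j),\qquad k\in\ZZZ_+,
\]
which is the discrete analogue of writing a bounded concave function as a nonnegative mixture of the ``ramp-then-plateau'' functions $\min(\cdot,j)$; convergence is guaranteed by summability of the $d_k$.

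Substituting this into the target inequality yields
\[
	u\,g(m)-v\,g(n)=g(0)(u-v)+\sum_{j\geq 1} c_j\bigl(u\min(m,j)-v\min(n,j)\bigr).
\]
The first term is nonnegative because $g(0)\geq 0$ and $u>v$, so it suffices to show each summand is positive, i.e.\ that $u\min(m,j)>v\min(n,j)$ for every integer $j\geq 1$, and then to argue strictness: if every $c_j=0$ then $g$ is constant and nonzero, so $g(0)>0$ and the first term alone is strictly positive; otherwise some $c_j>0$ and the sum is strictly positive.

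The main obstacle is the elementary claim $u\min(m,j)>v\min(n,j)$, whose difficulty is precisely that average welfare may be \emph{negative}, so I cannot simply invoke monotonicity of multiplication by $u$ or $v$. I would prove it by a case analysis on the position of $j$ relative to $m$ and $n$ (namely $j\leq\min(m,n)$, $j\geq\max(m,n)$, and the two intermediate orderings), checking each case for all sign patterns of $u$ and $v$. The cases where $u<0$ or $v<0$ are exactly where the \emph{total}-welfare hypothesis $um>vn$ does the work: for instance, when $n\leq j\leq m$ and $u<0$, the bound $j\leq m$ gives $uj\geq um>vn$, whereas for $u\geq 0$ one instead uses $j\geq n$ to get $uj\geq un>vn$. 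Verifying that these same two hypotheses dispatch every case and every sign combination is the crux; the remainder is bookkeeping.
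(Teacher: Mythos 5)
Your proof is correct, but it takes a genuinely different route from the paper's. The paper argues directly: it first disposes of the case $\Avg[X]\geq 0\geq\Avg[Y]$ by noting that $V\sVV{1}$ has the same sign as the average, and then, when both averages have the same sign, splits on whether $\Size[X]\geq\Size[Y]$ (where monotonicity of $g$ plus $X\succ\sAU Y$ suffices) or $\Size[Y]>\Size[X]$ (where the key step is the concavity consequence $g(\Size[X])/\Size[X]\geq g(\Size[Y])/\Size[Y]$, combined with $\Tot(X)>\Tot(Y)$). Your argument instead decomposes the bounded concave $g$ as a nonnegative mixture $g(k)=g(0)+\sum_{j\geq 1}c_j\min(k,j)$ of the extremal concave functions $\min(\cdot,j)$, and reduces the proposition to the elementary inequality $\Avg[X]\min(\Size[X],j)>\Avg[Y]\min(\Size[Y],j)$ for each $j$. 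That inequality does hold in all four orderings of $j$ relative to $\Size[X]$ and $\Size[Y]$ and for all sign patterns (the AU hypothesis handles $j$ below both sizes and the nonnegative-sign subcases; the TU hypothesis handles $j$ above both sizes and the negative-sign subcases), and your strictness bookkeeping via ``all $c_j=0$ forces $g$ constant and positive'' is sound, so the proof closes. What your approach buys is a structural explanation of \emph{why} the proposition is true: $V\sVV{1}$ is exhibited as a positive combination of ``truncated-total'' functionals, each of which agrees with (a rescaling of) AU when $j$ is small and with TU when $j$ is large, so any population preferred by both endpoints is preferred by every mixture. The paper's proof is shorter and avoids the infinite-sum machinery, but is more of an ad hoc verification; yours generalizes more readily (e.g.\ to any $g$ expressible as such a mixture) at the cost of a longer case analysis that you have only sketched.
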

\begin{proof}
First, note that $V\sVV{1}(X)$ has the same sign as $\Avg[X]$. So if $\Avg[X]\geq 0 \geq\Avg[Y]$, then it is automatic that $V\sVV{1}(X)>V\sVV{1}(Y)$. (The condition that $X\succ\sTU Y$ and $X\succ\sAU Y$ excludes the case where $\Avg[X] = 0 = \Avg[Y]$.) Thus it remains to consider the case when $\Avg[X]$ and $\Avg[Y]$ are both positive or both negative.

First suppose they are positive. If $\Size[X]\geq \Size[Y]$, then, since $g$ is increasing and $\Avg[X]>\Avg[Y]$, 
$V\sVV{1}(X)=\Avg[X]g(\Size[X])>\Avg[Y]g(\Size[Y])=V\sVV{1}(Y)$, as required.
If, instead, $\Size[Y]>\Size[X]$, then we have
\[
\frac{V\sVV{1}(X)}{V\sVV{1}(Y)}
=
\frac{\Avg[X]g(\Size[X])}{\Avg[Y]g(\Size[Y])}
\geq
\frac{\Avg[X]\Size[X]}{\Avg[Y]\Size[Y]}
>1
\]
and therefore $V\sVV{1}(X)>V\sVV{1}(Y)$. Here, the first inequality uses the concavity of $g$, and the second the fact that $\Tot(X)>\Tot(Y)>0$.

The case where $\Avg[X]$ and $\Avg[Y]$ are negative is similar, with careful attention to signs.
%
%
%
%
%
%
%
\end{proof}

\GenEgalThm*

\begin{rem}\label{rem:derivative}
Before proving Theorem \ref{t:GenEgal}, we should explain the requirement that `$I$ is a differentiable function of the distribution of $X$'. It has two parts.  First, let $\PP_\RRR$ be the set of finitely-supported, non-zero functions $\WW\to\RRR_+$. Let $\DD\subset\PP_\RRR$ be the subset of distributions, i.e.~those functions that sum to $1$. The first part of the requirement is that there is a function $\tilde I\colon \DD\to\RRR$ such that $I(X)=\tilde I(X/\Size[X])$. In that sense, $I(X)$ is just a function of the distribution of $X$. Another way to put this is that $I$ can be extended to a function on all of $\PP_\RRR$ that is scale-invariant, i.e. $I(nX)=I(X)$ for all reals $n>0$ and all $X\in\PP_\RRR$. The second part of the requirement is that $I$, so extended, is differentiable, in the following sense:%
\footnote{This can also be interpreted as a differentiability requirement directly on $\tilde I$: it should have a linear G\^ateaux derivative.  }
for all $P,Q\in\PP_\RRR$, the limit
\[
\partial_Q I(P)\coloneqq \lim_{t\to 0^+} \frac{I(P+tQ)-I(P)}{t} \]
exists and is linear as a function of $Q$. 
In effect, $Q\mapsto \partial_Q I(P)$ is the best linear approximation of $I-I(P)$. In practice we only need $I$ to be differentiable at the background distribution $D$.
\end{rem}

\begin{proof} 
	Let $Z$ range over background populations with the given distribution $D=Z/\Size[Z]$. Thus $Z$ is of the form 
    $n D$ for some  $n>0\in\RRR$.   

	Define $s(n)=1$, in the case of TU-based egalitarianism, and $s(n)=n$ in the case of AU-based egalitarianism. 
Noting that value functions of the assumed form can be evaluated not only on $\PP$ but on the larger set $\PP_\RRR$ (see Remark \ref{rem:derivative}), we have  
\[
V(nX)=(n/s(n))V(X) .
\]

We can then see that $V\marginal$ (as defined in Lemma \ref{l:gen}) is the directional derivative of $V$ at $D$: 
\[
	\begin{aligned}
	V\marginal(X)
	&{} = \lim_{\Size[Z]\to\infty} \bigl(V(Z+X)-V(Z)\bigr) s(\Size[Z])\\
	&{} = \lim_{n\to\infty} \bigl(V(nD+X)-V(nD)\bigr) s(n)\\
	&{} = \lim_{n\to\infty} \frac{V(D+\tfrac 1{n}X)-V(D)}{1/n} \eqqcolon \partial_{X} V(D). 
	\end{aligned} 	
\]
For totalist egalitarianism, we find that 
\[
	V\marginal(X) = \Tot(X) - \partial_X I(D) - I(D)\Size[X].
\]
Given that $I$ is differentiable as in Remark \ref{rem:derivative}, 
this function is additive in $X$ and therefore represents an additive axiology $\Ax'$. More specifically, for each welfare level $w$ let $1_w$ be a population with one person at level $w$. We then have
\[
	V\marginal(X) = \sum_{w\in\WW} X(w)f(w)
	\quad\mbox{with}\quad
	f(w)=w  - \partial_{1_w} I(D)-I(D).
\]
Similarly, for averagist egalitarianism,
\[
	\begin{aligned}
		V\marginal(X) &= (\Avg[X]-\Avg[D])\Size[X] - \partial_X I(D)\\
					  &= \sum_{w\in\WW} X(w)f(w) \quad\mbox{with}\quad f(w)=w -\partial_{1_w}I(D) -\Avg[D].
	\end{aligned}
\]
Now, suppose $X^+$ differs from $X$ in that one person is better off, say with welfare $v$ instead of $w$. If the Pareto principle holds with respect to $\Ax$, then $V(X^++Z)\geq V(X+Z)$ for all $Z$; by convergence, we cannot have $V\marginal(X^+)<V\marginal(X)$. It follows that $f(v)\geq f(w)$; thus $f$ is weakly increasing. By the same logic, Pigou-Dalton transfers do not make things worse with respect to $\Ax'$, and it follows that $f$ is weakly concave. 
\end{proof}

\MDTThm*

\begin{proof} 
	Define $\Pair{X,Y}=\sum_{x,y\in \WW} X(x)Y(y)|x-y|$. Then $\MAD(Z)=\Pair{Z,Z}/\Size[Z]^2$. 
	It is easy to check that 
	$\partial_X\Pair{Z,Z} = 2\Pair{X,Z}$
	and therefore
	\[
		\partial_X\MAD(Z) = 2\frac{\Pair{X,Z}}{\Size[Z]^2} - 2\frac{\Pair{Z,Z}}{\Size[Z]^3}\Size[X].
	\]
	In particular, $\MAD$ is differentiable and Theorem \ref{t:GenEgal} applies.
	Following the proof of Theorem \ref{t:GenEgal}, we know that MDT converges to the additive axiology $\Ax'$ with weighting function
	\begin{align*}
			f(w) &= w
			- \alpha \partial_{1_w} \MAD(D)
			-\alpha \MAD(D)  
			\\
			 &= w
			 - 2\alpha \Pair{1_w,D}
			 -\alpha \MAD(D)  
			 \\
				 &= w- 2\alpha \MAD(w,D) + \alpha \MAD(D)
.  \qedhere \end{align*}
\end{proof} 

\QAAThm*
\begin{proof}
	Theorem \ref{t:GenEgal} applies, with $I(X)=\Avg[X]-\QAM(X)$. (We omit the proof that this $I$ is differentiable.) We have, then, convergence to prioritarianism with a priority weighting function
	\[
		\begin{aligned}
			f(w) &= \partial_{1_w}\QAM(D) =  
			\frac{g(w)-\sum_{x\in\WW} D(x)g(x)}
			{g'(\QAM(D))} .
		\end{aligned}
	\]
	Since the background distribution $D$ is fixed, this differs from the stated priority weighting function only by a positive scalar (i.e. the denominator).
\end{proof} 

\positiveRDThm*

\begin{proof}

	Suppose that the weighting function $f$ has a horizontal asymptote at $L>0$. As in Lemma \ref{l:gen} it suffices to show that $\lim_{\Size[Z]\to\infty} V(X+Z)-V(Z)= L\Tot(X)$, as $Z$ ranges over populations with distribution $D$, and on the assumption that $X$ is moderate with respect to $D$. 
	
	 Write $X_{\leq w}=\sum_{x\leq w} X(w)$ for the number of people in $X$ with welfare at most $w$, and similarly $X_{<w}=\sum_{x<w}X(w)$. Separating out contributions from $X$ and contributions from $Z$, we have 
	\[
		\begin{aligned}
			V(X+Z)-V(Z) &= 
			\sum_{w\in\WW}
			\sum_{i=1}^{X(w)} f(Z_{\leq w}+X_{<w}+i)w
			\\&\qquad+
			\sum_{w\in\WW}\sum_{i=1}^{Z(w)} \bigl(f(Z_{<w}+X_{<w} +i) -  f(Z_{< w}+i)  \bigr)w.
		\end{aligned} 
	\]
The assumption that $X$ is moderate means that,
in those cases where $X(w)\geq 1$, so that the first inner sum is non-trivial,  we also have $Z_{\leq w}\to\infty$. We see therefore that each summand in the first double-sum tends to $Lw$. The first double sum then converges to $\sum_{w\in\WW} X(w)Lw=L\Tot(X)$. It remains to show that the second double sum converges to $0$. Call the summand in that double sum $S(w,i)$. 

Since there are finitely many $w$ for which $Z(w)\geq  1$ (making the inner sum non-trivial), it suffices to show that, for each such $w$, the inner sum converges to $0$. If $X_{<w}=0$, then the inner sum is identically zero, so we can assume $X_{<w}\geq 1$.  
We can also assume that $Z_{<w}$ is large enough that $f$ is convex in the relevant range; then 
\[
0\leq S(w,i)\leq \bigl(f(Z_{<w}+X_{<w})-f(Z_{<w})\bigr)w.
\]
Moreover, the number of terms, $Z(w)$, is proportional to $Z_{<w}$. It remains to apply the following elementary lemma with $n=Z_{<w}$ and $m=X_{<w}$.
\begin{lemma}\label{l:BRD}
If $f$ is an eventually  convex function decreasing to a finite limit,  then $n(f(n+m)-f(n))\to 0$ as $n\to\infty$. 
\end{lemma}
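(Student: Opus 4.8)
The plan is to exploit the three-chords property of convex functions: compare the short-range decrement $f(n+m)-f(n)$ against the decrement of $f$ over a backward interval whose length grows \emph{linearly} in $n$. Since $f$ decreases to a finite limit $L$, any such long-range decrement must vanish, and the linear length of the comparison interval is exactly what cancels the troublesome factor of $n$.

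Concretely, I would fix the positive number $m$ and invoke eventual convexity to choose $N$ with $f$ convex on $(N,\infty)$. For $n>2N$ the three abscissae $n/2<n<n+m$ all lie in the convex region, so the fact that secant slopes of a convex function are nondecreasing (the reverse of the concavity inequality used earlier) gives
\[
	\frac{f(n)-f(n/2)}{n/2}\le \frac{f(n+m)-f(n)}{m}.
\]
Rearranging and multiplying through by $n>0$ yields the two-sided bound
\[
	2m\bigl(f(n)-f(n/2)\bigr)\le n\bigl(f(n+m)-f(n)\bigr)\le 0,
\]
where the upper bound is simply monotonicity of $f$, i.e.\ $f(n+m)\le f(n)$.

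To finish I would apply the squeeze theorem. Because $f(x)\to L$ as $x\to\infty$, both $f(n)$ and $f(n/2)$ tend to $L$, so the left-hand bound $2m(f(n)-f(n/2))\to 0$, while the right-hand bound is identically at most $0$. Hence $n(f(n+m)-f(n))\to 0$, which is the claim.

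The only real subtlety—and the crux of the argument—is the choice of the backward interval $[n/2,n]$: its length is proportional to $n$, which is precisely what absorbs the factor of $n$, whereas any \emph{fixed}-length backward interval would fail. The one piece of bookkeeping to watch is the word ``eventually'': one must take $n$ large enough that $n/2$ (not merely $n$) exceeds the convexity threshold $N$, so that all three points used in the chord inequality fall inside the region where $f$ is convex.
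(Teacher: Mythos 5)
Your argument is correct and is essentially the paper's own: the paper omits the proof of this lemma, noting only that it is ``a small variation on Lemma \ref{L:g},'' whose proof uses exactly your device of comparing the increment over $[n,n+m]$ with the increment over the linearly growing backward interval $[n/2,n]$ via the chord-slope inequality, then squeezing. Your adaptation to the convex/decreasing case (including the care that $n/2$ must exceed the convexity threshold) is exactly the intended variation.
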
 
\noindent This is just a small variation on Lemma \ref{L:g}, and we omit the proof.
\end{proof}
\GDThm*

\begin{proof}
Suppose $X$ and $Y$ are supported on $W$, and $X\succ\sCLL Y$. Let $Z$ be a population with distribution $D$, so $Z=nD$ for some $n>0$. We have to show that $X+Z\succ\sGRD Y+Z$ for all $n$ large enough.

Let $\tilde X$ and $\tilde Y$ be populations of equal size, obtained from $X$ and $Y$ by adding people at the critical level $c$. The assumption that $X\succ\sCLL Y$ means that, for the first $m$ such that $\tilde X_m\neq \tilde Y_m$, we have $\tilde X_m>\tilde Y_m$. This shows that $\tilde Y_m <c$, so that in fact $\tilde Y_m=Y_m$. For brevity define $w\coloneqq Y_m$.

Let $v$ be the next welfare level occurring in $X+Y$ above $w$. If there is no such welfare level, then define $v=c+1$.
We can decompose $Z$ (and similarly for other populations) as $Z=Z_-+Z_{w}+Z_0+Z_+$, where $Z_-$ only involves welfare levels in the interval $(-\infty,w)$, $Z_{w}$ involves only $w$, $Z_0$ only involves welfare levels in in $(w,v)$, and $Z_+$ only involves those in $[v,\infty)$.
Note that $X_-=Y_-$ and $X_0=Y_0=0$ but (because $D$ covers $W$ 
and is only supported up to $c$) $Z_0\neq 0$. We have
\begin{align*}
    V(X+Z)=V(X_{-}+Z_{-}+Z_{w})
    &+\beta^{\left|X_-+Z_-+Z_{w}\right|}V(X_{w})\\ 
    &+\beta^{\left|X_-+Z_-+Z_{w}+X_{w}\right|} V(Z_0)\\
    &+\beta^{\left|X_-+Z_-+Z_{w}+X_{w}+Z_0\right|} V(X_++Z_+).
\end{align*}
A similar expression holds for $Y$ in place of $X$. Therefore
\begin{align*}
\frac{V(X+Z)-V(Y+Z)}{\beta^{\left|X_-+Z_-+Z_{w}\right|}}&= V(X_{w})-V(Y_{w})+
(\beta^{\Size[X_w]}-\beta^{\Size[Y_w]}) V(Z_0) + R
\end{align*}
where the remainder $R$ is such that $\lim_{n\to\infty} R=0$. 
Now we use the standard fact that $\sum_{i=1}^m\beta^i = \beta \frac{1-\beta^{m}}{1-\beta}$.
It follows that 
$V(X_{w})-V(Y_{w})=\beta \frac{\beta^{\Size[Y_w]}-\beta^{\Size[X_w]}}{1-\beta} w$. Therefore
\begin{align*}
\frac{V(X+Z)-V(Y+Z)}{\beta^{\left|X_-+Z_-+Z_{w}\right|}}&= 
(\beta^{\Size[X_w]}-\beta^{\Size[Y_w]})(V(Z_0)-\frac{\beta w}{1-\beta})+R.
\end{align*}
Note that $\beta^{\Size[X_w]}-\beta^{\Size[Y_w]}>0$. 
To conclude that $V(X+Z)>V(Y+Z)$ for all $n$ large enough, it suffices to show that  
\[
\lim_{n\to\infty} V(Z_0)>\frac{\beta w}{1-\beta}.
\]
In fact, if $v'$ is the lowest welfare level greater than $w$ occurring in $D$, then $v'\in(w,v)$ and $\lim_{n\to\infty} V({Z_0})=\frac{\beta v'}{1-\beta}$.
\end{proof} 

\bibliographystyle{chicago}
\bibliography{NonSeparableAxiologiesCites}

\begin{thebibliography}{}

\bibitem[\protect\citeauthoryear{Adler}{Adler}{2009}]{adler2009future}
Adler, M. (2009).
\newblock Future generations: A prioritarian view.
\newblock {\em George Washington Law Review\/}~{\em 77}, 1478--1520.

\bibitem[\protect\citeauthoryear{Adler}{Adler}{2011}]{adler2011well}
Adler, M. (2011).
\newblock {\em Well-Being and Fair Distribution: Beyond Cost-Benefit Analysis}.
\newblock Oxford: Oxford University Press.

\bibitem[\protect\citeauthoryear{Arneson}{Arneson}{2000}]{arneson2000luck}
Arneson, R.~J. (2000).
\newblock Luck egalitarianism and prioritarianism.
\newblock {\em Ethics\/}~{\em 110\/}(2), 339--349.

\bibitem[\protect\citeauthoryear{Arntzenius}{Arntzenius}{2014}]{arntzenius2014utilitarianism}
Arntzenius, F. (2014).
\newblock Utilitarianism, decision theory and eternity.
\newblock {\em Philosophical Perspectives\/}~{\em 28\/}(1), 31--58.

\bibitem[\protect\citeauthoryear{Arrhenius}{Arrhenius}{2000}]{arrhenius2000impossibility}
Arrhenius, G. (2000).
\newblock An impossibility theorem for welfarist axiologies.
\newblock {\em Economics and Philosophy\/}~{\em 16\/}(2), 247--266.

\bibitem[\protect\citeauthoryear{Asheim}{Asheim}{2010}]{asheim2010intergenerational}
Asheim, G.~B. (2010).
\newblock Intergenerational equity.
\newblock {\em Annual Review of Economics\/}~{\em 2\/}(1), 197--222.

\bibitem[\protect\citeauthoryear{Asheim and Zuber}{Asheim and
  Zuber}{2014}]{asheim2014escaping}
Asheim, G.~B. and S.~Zuber (2014).
\newblock Escaping the repugnant conclusion: Rank-discounted utilitarianism
  with variable population.
\newblock {\em Theoretical Economics\/}~{\em 9\/}(3), 629--650.

\bibitem[\protect\citeauthoryear{Askell}{Askell}{2018}]{askell2018pareto}
Askell, A. (2018).
\newblock {\em Pareto Principles in Infinite Ethics}.
\newblock Ph.\ D. thesis, New York University.

\bibitem[\protect\citeauthoryear{Atsumi}{Atsumi}{1965}]{atsumi1965neoclassical}
Atsumi, H. (1965).
\newblock Neoclassical growth and the efficient program of capital
  accumulation.
\newblock {\em The Review of Economic Studies\/}~{\em 32\/}(2), 127--136.

\bibitem[\protect\citeauthoryear{Bar-On, Phillips, and Milo}{Bar-On
  et~al.}{2018}]{bar2018biomass}
Bar-On, Y.~M., R.~Phillips, and R.~Milo (2018).
\newblock The biomass distribution on {E}arth.
\newblock {\em Proceedings of the National Academy of Sciences\/}~{\em
  115\/}(25), 6506--6511.

\bibitem[\protect\citeauthoryear{Beckstead}{Beckstead}{2013}]{beckstead2013overwhelming}
Beckstead, N. (2013).
\newblock {\em On the Overwhelming Importance of Shaping the Far Future}.
\newblock Ph.\ D. thesis, Rutgers University Graduate School - New Brunswick.

\bibitem[\protect\citeauthoryear{Beckstead}{Beckstead}{2019}]{beckstead2019brief}
Beckstead, N. (2019).
\newblock A brief argument for the overwhelming importance of shaping the far
  future.
\newblock In H.~Greaves and T.~Pummer (Eds.), {\em Effective Altruism:
  Philosophical Issues}, pp.\  80--98. Oxford: Oxford University Press.

\bibitem[\protect\citeauthoryear{Bentham}{Bentham}{1789}]{bentham1789principles}
Bentham, J. (1789).
\newblock {\em An Introduction to the Principles of Morals and Legislation}.
\newblock London: T.\@ Payne and Son.

\bibitem[\protect\citeauthoryear{Blackorby, Bossert, and Donaldson}{Blackorby
  et~al.}{1997}]{blackorby1997critical}
Blackorby, C., W.~Bossert, and D.~Donaldson (1997).
\newblock Critical-level utilitarianism and the population-ethics dilemma.
\newblock {\em Economics and Philosophy\/}~{\em 13\/}(2), 197--230.

\bibitem[\protect\citeauthoryear{Blackorby, Bossert, and Donaldson}{Blackorby
  et~al.}{2005}]{blackorby2005population}
Blackorby, C., W.~Bossert, and D.~J. Donaldson (2005).
\newblock {\em Population Issues in Social Choice Theory, Welfare Economics,
  and Ethics}.
\newblock Cambridge: Cambridge University Press.

\bibitem[\protect\citeauthoryear{Bostrom}{Bostrom}{2003}]{bostrom2003astronomical}
Bostrom, N. (2003).
\newblock Astronomical waste: The opportunity cost of delayed technological
  development.
\newblock {\em Utilitas\/}~{\em 15\/}(3), 308--314.

\bibitem[\protect\citeauthoryear{Bostrom}{Bostrom}{2011}]{bostrom2011infinite}
Bostrom, N. (2011).
\newblock Infinite ethics.
\newblock {\em Analysis and Metaphysics\/}~{\em 10}, 9--59.

\bibitem[\protect\citeauthoryear{Bostrom}{Bostrom}{2013}]{bostrom2013existential}
Bostrom, N. (2013).
\newblock Existential risk prevention as global priority.
\newblock {\em Global Policy\/}~{\em 4\/}(1), 15--31.

\bibitem[\protect\citeauthoryear{Brink}{Brink}{2011}]{brink2011prospects}
Brink, D.~O. (2011).
\newblock Prospects for temporal neutrality.
\newblock In C.~Callender (Ed.), {\em The Oxford Handbook of Philosophy of
  Time}. Oxford: Oxford University Press.

\bibitem[\protect\citeauthoryear{Broad}{Broad}{1914}]{broad1914doctrine}
Broad, C.~D. (1914).
\newblock The doctrine of consequences in ethics.
\newblock {\em International Journal of Ethics\/}~{\em 24\/}(3), 293--320.

\bibitem[\protect\citeauthoryear{Browning}{Browning}{2020}]{browning2020talk}
Browning, H. (2020).
\newblock {\em If I Could Talk to the Animals: Measuring Animal Welfare}.
\newblock Ph.\ D. thesis, Australian National University.

\bibitem[\protect\citeauthoryear{Buchak}{Buchak}{2017}]{buchak2017taking}
Buchak, L. (2017).
\newblock Taking risks behind the veil of ignorance.
\newblock {\em Ethics\/}~{\em 127\/}(3), 610--644.

\bibitem[\protect\citeauthoryear{Budolfson and Spears}{Budolfson and
  Spears}{2018}]{budolfsonMSwhy}
Budolfson, M. and D.~Spears (2018).
\newblock Why the {R}epugnant {C}onclusion is inescapable.
\newblock Unpublished manuscript, December 2018.

\bibitem[\protect\citeauthoryear{Carlson}{Carlson}{1995}]{carlson1995consequentialism}
Carlson, E. (1995).
\newblock {\em Consequentialism Reconsidered}.
\newblock Kluwer.

\bibitem[\protect\citeauthoryear{Caruso, Gilbert, and Wilson}{Caruso
  et~al.}{2008}]{caruso2008wrinkle}
Caruso, E., D.~Gilbert, and T.~Wilson (2008).
\newblock A wrinkle in time: Asymmetric valuation of past and future events.
\newblock {\em Psychological Science\/}~{\em 19\/}(8), 796--801.

\bibitem[\protect\citeauthoryear{Clark}{Clark}{2019}]{clarkMSinfinite}
Clark, M. (2019).
\newblock Infinite ethics, intrinsic value, and the {P}areto principle.
\newblock Unpublished manuscript.

\bibitem[\protect\citeauthoryear{Conselice, Wilkinson, Duncan, and
  Mortlock}{Conselice et~al.}{2016}]{conselice2016evolution}
Conselice, C.~J., A.~Wilkinson, K.~Duncan, and A.~Mortlock (2016).
\newblock The evolution of galaxy number density at $z < 8$ and its
  implications.
\newblock {\em The Astrophysical Journal\/}~{\em 830\/}(2), 1--17.

\bibitem[\protect\citeauthoryear{Cowen}{Cowen}{2018}]{cowen2018stubborn}
Cowen, T. (2018).
\newblock {\em Stubborn Attachments: A Vision for a Society of Free,
  Prosperous, and Responsible Individuals}.
\newblock San Francisco: Stripe Press.

\bibitem[\protect\citeauthoryear{Crisp}{Crisp}{2003}]{crisp2003equality}
Crisp, R. (2003).
\newblock Equality, priority, and compassion.
\newblock {\em Ethics\/}~{\em 113\/}(4), 745--763.

\bibitem[\protect\citeauthoryear{de~Lazari{-}Radek and
  Singer}{de~Lazari{-}Radek and Singer}{2014}]{singer2014point}
de~Lazari{-}Radek, K. and P.~Singer (2014).
\newblock {\em The Point of View of the Universe: Sidgwick and Contemporary
  Ethics}.
\newblock Oxford: Oxford University Press.

\bibitem[\protect\citeauthoryear{Diamond}{Diamond}{1965}]{diamond1965evaluation}
Diamond, P.~A. (1965).
\newblock The evaluation of infinite utility streams.
\newblock {\em Econometrica: Journal of the Econometric Society\/}~{\em
  33\/}(1), 170--177.

\bibitem[\protect\citeauthoryear{Dougherty}{Dougherty}{2015}]{dougherty2015future}
Dougherty, T. (2015).
\newblock Future-bias and practical reason.
\newblock {\em Philosophers' Imprint\/}~{\em 15\/}(30), 1--16.

\bibitem[\protect\citeauthoryear{Fleurbaey}{Fleurbaey}{2010}]{fleurbaey2010assessing}
Fleurbaey, M. (2010).
\newblock Assessing risky social situations.
\newblock {\em Journal of Political Economy\/}~{\em 118}, 649--80.

\bibitem[\protect\citeauthoryear{Frankfurt}{Frankfurt}{1987}]{frankfurt1987equality}
Frankfurt, H. (1987).
\newblock Equality as a moral ideal.
\newblock {\em Ethics\/}~{\em 98\/}(1), 21--43.

\bibitem[\protect\citeauthoryear{Greaves and MacAskill}{Greaves and
  MacAskill}{2019}]{greavesMScase}
Greaves, H. and W.~MacAskill (2019).
\newblock The case for strong longtermism.
\newblock Global Priorities Institute Working Paper No.\@ 7-2019.

\bibitem[\protect\citeauthoryear{Greene, Latham, Miller, and Norton}{Greene
  et~al.}{ming}]{greeneFChedonic}
Greene, P., A.~J. Latham, K.~Miller, and J.~Norton (forthcoming).
\newblock Hedonic and non-hedonic bias towards the future.
\newblock {\em Australasian Journal of Philosophy\/}.

\bibitem[\protect\citeauthoryear{Greene and Sullivan}{Greene and
  Sullivan}{2015}]{greene2015against}
Greene, P. and M.~Sullivan (2015).
\newblock Against time bias.
\newblock {\em Ethics\/}~{\em 125\/}(4), 947--970.

\bibitem[\protect\citeauthoryear{Gustafsson}{Gustafsson}{minga}]{gustafssonFCintuitive}
Gustafsson, J.~E. (forthcominga).
\newblock Our intuitive grasp of the repugnant conclusion.
\newblock In G.~Arrhenius, K.~Bykvist, and T.~Campbell (Eds.), {\em The Oxford
  Handbook of Population Ethics}. Oxford: Oxford University Press.

\bibitem[\protect\citeauthoryear{Gustafsson}{Gustafsson}{mingb}]{gustafssonFCpopulation}
Gustafsson, J.~E. (forthcomingb).
\newblock Population axiology and the possibility of a fourth category of
  absolute value.
\newblock {\em Economics and Philosophy\/}.

\bibitem[\protect\citeauthoryear{Hardin}{Hardin}{1968}]{hardin1968tragedy}
Hardin, G. (1968).
\newblock The tragedy of the commons.
\newblock {\em Science\/}~{\em 162\/}(3859), 1243--1248.

\bibitem[\protect\citeauthoryear{Hare}{Hare}{2013}]{hare2013time}
Hare, C. (2013).
\newblock {Time -- The Emotional Asymmetry}.
\newblock In A.~Bardon and H.~Dyke (Eds.), {\em A Companion to the Philosophy
  of Time}, pp.\  507--520. Wiley-Blackwell.

\bibitem[\protect\citeauthoryear{Harsanyi}{Harsanyi}{1977}]{harsanyi1977morality}
Harsanyi, J.~C. (1977).
\newblock Morality and the theory of rational behavior.
\newblock {\em Social Research\/}~{\em 44\/}(4), 623--656.

\bibitem[\protect\citeauthoryear{Helliwell, Layard, and Sachs}{Helliwell
  et~al.}{2019}]{helliwell2019world}
Helliwell, J.~F., R.~Layard, and J.~D. Sachs (2019).
\newblock {\em World Happiness Report 2019}.
\newblock New York: Sustainable Development Solutions Network.

\bibitem[\protect\citeauthoryear{Horta}{Horta}{2010}]{horta2010debunking}
Horta, O. (2010).
\newblock Debunking the idyllic view of natural processes: Population dynamics
  and suffering in the wild.
\newblock {\em Telos: Revista Iberoamericana de Estudios Utilitaristas\/}~{\em
  17\/}(1), 73--90.

\bibitem[\protect\citeauthoryear{Hudson}{Hudson}{1987}]{hudson1987diminishing}
Hudson, J.~L. (1987).
\newblock The diminishing marginal value of happy people.
\newblock {\em Philosophical Studies\/}~{\em 51\/}(1), 123--137.

\bibitem[\protect\citeauthoryear{Hurka}{Hurka}{1982a}]{hurka1982average}
Hurka, T. (1982a).
\newblock Average utilitarianisms.
\newblock {\em Analysis\/}~{\em 42\/}(2), 65--69.

\bibitem[\protect\citeauthoryear{Hurka}{Hurka}{1982b}]{hurka1982more}
Hurka, T. (1982b).
\newblock More average utilitarianisms.
\newblock {\em Analysis\/}~{\em 42\/}(3), 115--119.

\bibitem[\protect\citeauthoryear{Hurka}{Hurka}{1983}]{hurka1983value}
Hurka, T. (1983).
\newblock Value and population size.
\newblock {\em Ethics\/}~{\em 93\/}(3), 496--507.

\bibitem[\protect\citeauthoryear{Hutcheson}{Hutcheson}{1738}]{hutcheson1725inquiry}
Hutcheson, F. (1725 (1738)).
\newblock {\em An Inquiry into the Original of our Ideas of Beauty and Virtue,
  In Two Treatises\/} (4th ed.).
\newblock London: D.\@ Midwinter, A.\@ Bettesworth, and C.\@ Hitch, J.\@ and
  J.\@ Pemberton, R.\@ Ware, C.\@ Rivington, F.\@ Clay, A.\@ Ward, J.\@ and
  P.\@ Knap.

\bibitem[\protect\citeauthoryear{Jonsson and Voorneveld}{Jonsson and
  Voorneveld}{2018}]{jonsson2018limit}
Jonsson, A. and M.~Voorneveld (2018).
\newblock The limit of discounted utilitarianism.
\newblock {\em Theoretical Economics\/}~{\em 13\/}(1), 19--37.

\bibitem[\protect\citeauthoryear{Kaneda and Haub}{Kaneda and
  Haub}{2018}]{kaneda2011how}
Kaneda, T. and C.~Haub (2018).
\newblock How many people have ever lived on earth?
\newblock Population Reference Bureau.
\newblock First published in 1997, updated in 2002, 2011, and 2018. Accessed 22
  November 2019. URL: https://www.prb.org/howmanypeoplehaveeverlivedonearth/.

\bibitem[\protect\citeauthoryear{Knobe, Olum, and Vilenkin}{Knobe
  et~al.}{2006}]{knobe2006philosophical}
Knobe, J., K.~D. Olum, and A.~Vilenkin (2006).
\newblock Philosophical implications of inflationary cosmology.
\newblock {\em The British Journal for the Philosophy of Science\/}~{\em
  57\/}(1), 47--67.

\bibitem[\protect\citeauthoryear{Kowalczyk}{Kowalczyk}{}]{kowalczykMSequality}
Kowalczyk, K.
\newblock Equality and population size.
\newblock Unpublished manuscript, November 2019.

\bibitem[\protect\citeauthoryear{Lauwers and Vallentyne}{Lauwers and
  Vallentyne}{2004}]{lauwers2004infinite}
Lauwers, L. and P.~Vallentyne (2004).
\newblock Infinite utilitarianism: More is always better.
\newblock {\em Economics and Philosophy\/}~{\em 20\/}(2), 307--330.

\bibitem[\protect\citeauthoryear{McCarthy}{McCarthy}{2015}]{mccarthy2015distributive}
McCarthy, D. (2015).
\newblock Distributive equality.
\newblock {\em Mind\/}~{\em 124\/}(496), 1045--1109.

\bibitem[\protect\citeauthoryear{McCarthy, Mikkola, and Thomas}{McCarthy
  et~al.}{2020}]{mccarthy2020utilitarianism}
McCarthy, D., K.~Mikkola, and T.~Thomas (2020).
\newblock Utilitarianism with and without expected utility.
\newblock {\em Journal of Mathematical Economics\/}~{\em 87}, 77--113.

\bibitem[\protect\citeauthoryear{Mill}{Mill}{1863}]{mill1863utilitarianism}
Mill, J.~S. (1863).
\newblock {\em Utilitarianism}.
\newblock London: Parker, Son, and Bourne.

\bibitem[\protect\citeauthoryear{Moller}{Moller}{2002}]{moller2002parfit}
Moller, D. (2002).
\newblock Parfit on pains, pleasures, and the time of their occurrence.
\newblock {\em Canadian Journal of Philosophy\/}~{\em 32\/}(1), 67--82.

\bibitem[\protect\citeauthoryear{Nebel}{Nebel}{ming}]{nebelFCrank}
Nebel, J.~M. (forthcoming).
\newblock Rank-weighted utilitarianism and the veil of ignorance.
\newblock {\em Ethics\/}.

\bibitem[\protect\citeauthoryear{Ng}{Ng}{1989}]{ng1989future}
Ng, Y. (1989).
\newblock What should we do about future generations? {I}mpossibility of
  {P}arfit's {Theory X}.
\newblock {\em Economics and Philosophy\/}~{\em 5\/}(2), 235--253.

\bibitem[\protect\citeauthoryear{Ng}{Ng}{1995}]{ng1995towards}
Ng, Y. (1995).
\newblock Towards welfare biology: Evolutionary economics of animal
  consciousness and suffering.
\newblock {\em Biology and Philosophy\/}~{\em 10\/}(3), 255--285.

\bibitem[\protect\citeauthoryear{Norwood and Lusk}{Norwood and
  Lusk}{2011}]{norwood2011compassion}
Norwood, F.~B. and J.~L. Lusk (2011).
\newblock {\em Compassion, by the Pound: The Economics of Farm Animal Welfare}.
\newblock Oxford: Oxford University Press.

\bibitem[\protect\citeauthoryear{Ord}{Ord}{2020}]{ord2020precipice}
Ord, T. (2020).
\newblock {\em The Precipice: Existential Risk and the Future of Humanity}.
\newblock London: Bloomsbury Publishing.

\bibitem[\protect\citeauthoryear{Parfit}{Parfit}{1984}]{parfit1984reasons}
Parfit, D. (1984).
\newblock {\em Reasons and Persons}.
\newblock Oxford: Oxford University Press.

\bibitem[\protect\citeauthoryear{Parfit}{Parfit}{1997}]{parfit1997equality}
Parfit, D. (1997).
\newblock Equality and priority.
\newblock {\em Ratio\/}~{\em 10\/}(3), 202--221.

\bibitem[\protect\citeauthoryear{Pressman}{Pressman}{2015}]{pressman2015defence}
Pressman, M. (2015).
\newblock A defence of average utilitarianism.
\newblock {\em Utilitas\/}~{\em 27\/}(4), 389--424.

\bibitem[\protect\citeauthoryear{Prior}{Prior}{1959}]{prior1959thank}
Prior, A.~N. (1959).
\newblock Thank goodness that's over.
\newblock {\em Philosophy\/}~{\em 34\/}(128), 12--17.

\bibitem[\protect\citeauthoryear{Rabinowicz}{Rabinowicz}{1989}]{Rabinowicz1989-RABAPD}
Rabinowicz, W. (1989).
\newblock Act-utilitarian prisoner's dilemmas.
\newblock {\em Theoria\/}~{\em 55\/}(1), 1--44.

\bibitem[\protect\citeauthoryear{Roth and Dicke}{Roth and
  Dicke}{2005}]{roth2005evolution}
Roth, G. and U.~Dicke (2005).
\newblock Evolution of the brain and intelligence.
\newblock {\em Trends in Cognitive Sciences\/}~{\em 9\/}(5), 250--257.

\bibitem[\protect\citeauthoryear{Shulman}{Shulman}{2014}]{shulman2014population}
Shulman, C. (2014).
\newblock Population ethics and inaccessible populations.
\newblock \textit{Reflective Disequilibrium}. Accessed 25 August 2020. URL:
  \url{https://reflectivedisequilibrium.blogspot.com/2014/08/population-ethics-and-inaccessible.html}.

\bibitem[\protect\citeauthoryear{Sidgwick}{Sidgwick}{1874}]{sidgwick1907methods}
Sidgwick, H. (1907 (1874)).
\newblock {\em The Methods of Ethics\/} (7th ed.).
\newblock London: Macmillan and Company.

\bibitem[\protect\citeauthoryear{Smil}{Smil}{2013}]{smil2013harvesting}
Smil, V. (2013).
\newblock {\em Harvesting the Biosphere: What We Have Taken from Nature}.
\newblock Cambridge, MA: The MIT Press.

\bibitem[\protect\citeauthoryear{Sotala and Gloor}{Sotala and
  Gloor}{2017}]{sotala2017superintelligence}
Sotala, K. and L.~Gloor (2017).
\newblock Superintelligence as a cause or cure for risks of astronomical
  suffering.
\newblock {\em Informatica\/}~{\em 41\/}(4), 389--400.

\bibitem[\protect\citeauthoryear{Tarsney}{Tarsney}{}]{tarsneyMSaverage}
Tarsney, C.~J.
\newblock Average utilitarianism implies solipsistic egoism.
\newblock Unpublished manuscript, March 2020.

\bibitem[\protect\citeauthoryear{Thomas}{Thomas}{2016}]{thomas2016topics}
Thomas, T. (2016).
\newblock {\em Topics in Population Ethics}.
\newblock Ph.\ D. thesis, Oxford University.

\bibitem[\protect\citeauthoryear{Thomas}{Thomas}{ming}]{thomasFCseparability}
Thomas, T. (forthcoming).
\newblock Separability.
\newblock In G.~Arrhenius, K.~Bykvist, and T.~Campbell (Eds.), {\em The Oxford
  Handbook of Population Ethics}. Oxford: Oxford University Press.

\bibitem[\protect\citeauthoryear{Tomasik}{Tomasik}{2019}]{tomasik2019how}
Tomasik, B. (2019).
\newblock How many wild animals are there?
\newblock First published 2009, updated 7 August 2019. Accessed 15 November
  2019. URL: https://reducing-suffering.org/how-many-wild-animals-are-there/.

\bibitem[\protect\citeauthoryear{Vallentyne and Kagan}{Vallentyne and
  Kagan}{1997}]{vallentyne1997infinite}
Vallentyne, P. and S.~Kagan (1997).
\newblock Infinite value and finitely additive value theory.
\newblock {\em The Journal of Philosophy\/}~{\em 94\/}(1), 5--26.

\bibitem[\protect\citeauthoryear{Vardanyan, Trotta, and Silk}{Vardanyan
  et~al.}{2011}]{vardanyan2011applications}
Vardanyan, M., R.~Trotta, and J.~Silk (2011).
\newblock Applications of {B}ayesian model averaging to the curvature and size
  of the {U}niverse.
\newblock {\em Monthly Notices of the Royal Astronomical Society:
  Letters\/}~{\em 413\/}(1), L91--L95.

\bibitem[\protect\citeauthoryear{Von~Weizs{\"a}cker}{Von~Weizs{\"a}cker}{1965}]{von1965existence}
Von~Weizs{\"a}cker, C.~C. (1965).
\newblock Existence of optimal programs of accumulation for an infinite time
  horizon.
\newblock {\em The Review of Economic Studies\/}~{\em 32\/}(2), 85--104.

\bibitem[\protect\citeauthoryear{Weirich}{Weirich}{1983}]{weirich1983utility}
Weirich, P. (1983).
\newblock Utility tempered with equality.
\newblock {\em No\^us\/}~{\em 17\/}(3), 423--439.

\bibitem[\protect\citeauthoryear{Wilkinson}{Wilkinson}{ming}]{wilkinsonFCinfinite}
Wilkinson, H. (forthcoming).
\newblock Infinite aggregation: Expanded addition.
\newblock {\em Philosophical Studies\/}.

\end{thebibliography}
\end{document}